\documentclass[%
reprint,
 prd,
superscriptaddress,
nofootinbib,
aps,
dvipsnames
]{revtex4-2}
\usepackage[utf8]{inputenc}

\usepackage{amsmath, amsthm, amsfonts,  amssymb, latexsym}
\usepackage{graphicx}
\usepackage{dcolumn}
\usepackage{bm}
\usepackage{hyperref}
\usepackage[mathlines]{lineno}
\usepackage{orcidlink}
\usepackage{breakurl}

\usepackage[normalem]{ulem}
\usepackage{xcolor}

\newcommand{\bs}[1]{\boldsymbol{#1}}
\usepackage{enumerate}
\usepackage{mathrsfs}
\usepackage{graphicx}
\usepackage{ragged2e}
\usepackage{subcaption}
\usepackage{xfrac}
\usepackage{lmodern}
\usepackage{multirow}
\usepackage[toc,page]{appendix}

\usepackage{tikz}
\usetikzlibrary{arrows.meta}
\usetikzlibrary{decorations.pathmorphing}
\usetikzlibrary{decorations.markings}
\usetikzlibrary{angles,
                quotes}

\newcommand{\ud}{\mathrm{d}}

\newtheorem{proposition}{Proposition}[section]
\newtheorem{theorem}[proposition]{Theorem}
\newtheorem{lemma}[proposition]{Lemma}

\newtheorem{corollary}[proposition]{Corollary}
\newtheorem{definition}[proposition]{Definition}

\newcommand{\sM}{{\mathscr M}}
\newcommand{\cM}{{\mathscr M}}
\newcommand{\cN}{{\mathscr N}}
\newcommand{\cC}{{\mathscr C}}

\newcommand{\sI}{{\mathscr I}}
\newcommand{\sH}{{\mathscr H}}
\newcommand{\cH}{{\mathscr H}^+}
\newcommand{\sB}{{\mathscr B}}
\newcommand{\sL}{{\mathscr L}}

\newcommand{\sD}{{\mathscr D}}

\newcommand{\cU}{{\mathscr U}}

\newcommand{\be}{\begin{equation}}
\newcommand{\ee}{\end{equation}}
\newcommand{\bea}{\begin{eqnarray}}
\newcommand{\eea}{\end{eqnarray}}

\begin{document}
	
	\title{New Horizons in Effective Field Theory?}

		\author{Stefan Hollands \orcidlink{0000-0001-6627-2808}}
	\email[Contact author: ]{stefan.hollands@uni-leipzig.de}
	\affiliation{%
	Institute of Theoretical Physics, Leipzig University, Br{\"u}derstra{\ss}e 16, 04103 Leipzig, Germany
}%
\affiliation{%
	Max Planck Institute for Mathematics in Sciences (MiS), Inselstra{\ss}e 22, 04103
	Leipzig, Germany
}

\author{Dustin Urbiks}
	\affiliation{%
	Institute of Theoretical Physics, Leipzig University, Br{\"u}derstra{\ss}e 16, 04103 Leipzig, Germany
}%

	\begin{abstract}
We consider the most general parity symmetric effective scalar tensor theory in four dimensions containing terms up to fourth derivative order in the Lagrangian. It has been shown [H.S. Reall, Phys. Rev. D 103 (2021), 084027] that this theory has three polarizations generically goverened by different (nested) propagation cones, neither of which in general coincides with the lightcone as defined by the metric. Consequently, the notion of black hole horizon must be defined relative to the widest propagation cone, and not with respect to the metric. We provide two theorems stating that, nevertheless, the horizon of a 
\emph{stationary} black hole is null with respect to the metric, and that, in fact, all three propagation cones touch on the horizon. The conditions in these theorems allow for rotating black holes. Our theorems thereby suggest that the notion of Killing horizon, central in most discussions of black hole thermodynamics, retains its fundamental status, and that certain thermodynamic paradoxes associated with multiple propagation cones are evaded.
	\end{abstract}

	\maketitle

\section{Introduction}\label{sec:intro}

In relativisitic physics, the propagation of disturbances as described by, e.g., the scalar wave equation $g^{ab}\nabla_a \nabla_b \Phi=0$
is well-known to be causal. That means that the retarded respectively advanced solutions of the corresponding equation with a delta-function source located at a point $x \in \mathscr{M}$ in a causally well-behaved spacetime\footnote{E.g., the spacetime should be globally hyperbolic, see e.g., \cite{Waldbook}.} $(\sM,g_{ab})$ vanish outside the causal future $J^+(x)$ respectively past $J^-(x)$ of this point. The causal future respectively past consist of all events that can be connected to $x$ by a causal (i.e., timelike or null) past respectively future directed curve. Since a curve is timelike or null if its tangent vector $u^a$ satisfies\footnote{Our signature convention is $(-+++)$.} $g_{ab} u^a u^b \le 0$, the notion of causality for the scalar field propagation is tied to the lightcones of the metric, $g_{ab}$. An equivalent formulation is that any solution to $g^{ab}\nabla_a \nabla_b \Phi=0$ is uniquely determined by its initial data on some subset $\cU \subset \Sigma$ of a Cauchy-surface $\Sigma$ within the domain of dependence, $D(\cU)$, of $\cU$. Again, $D(\cU)$ is determined by the metric as the set of all points such that any inextendible future or past directed causal curve must intersect $\cU$.
For modern accounts of these well-known facts see, e.g., \cite{Ringstrom}.

Up to issues of gauge, i.e., after passing to suitable gauge equivalence classes or after suitably fixing the gauge, the same concept of causality applies to solutions $A_a$ of the Maxwell equations, or solutions $h_{ab}$ of the linearized vacuum Einstein equations. In fact, stating causality in terms of the domain of dependence, the same holds true even at the non-linear level for the coupled Einstein-Maxwell-scalar wave equation system, see, e.g.,  \cite{Ringstrom}. Given that information is thought of as being transported by waves, it is thereby justified to define the notion of  black hole region in such a theory as $\mathscr{M} \setminus J^-(\mathscr{I}^+)$, where $\mathscr{I}^+$ is future null infinity. The future horizon thus is 
\begin{equation}
\label{futurehor}
\mathscr{H}^+ = \partial[\mathscr{M} \setminus J^-(\mathscr{I}^+)].
\end{equation}
Since the notion of causal past, $J^-$, depends on the metric $g_{ab}$, so obviously does the notion of a black hole. 

In alternative theories of gravity, there may be several different metrics, or even if there is only a single metric, various ``polarizations'' of the fields of the theory may propagate according to a different ``effective'' metric due to the structure of the highest derivative terms in the equations of motion. As a simple minded toy model of such situations, one may consider two scalar fields $\Phi_A, \Phi_B$ obeying 
\begin{equation}\label{ABwaveeq}
    \begin{split}
        &g_A^{ab} \nabla_a \nabla_b \Phi_A - V_{,\Phi_A}(\Phi_A,\Phi_B) = 0 \, ,\\
        &g_B^{ab} \nabla_a \nabla_b \Phi_B - V_{,\Phi_B}(\Phi_A,\Phi_B) = 0 \, ,
    \end{split}
\end{equation}
where $g_{Aab}, g_{Bab}$ are different metrics and $V$ is a potential
coupling the two fields. In such a theory, the notion of causal past and future should obviously be defined w.r.t. the curves whose tangent is causal w.r.t. at least one--but possibly not the other--metric, i.e., the ``widest'' lightcone at every point. 


For the sake of illustration, envisage a hypothetical black hole spacetime as in fig. \ref{fig:ABhorizons} with nested horizons $\mathscr{H}_A$ and $\mathscr{H}_B$ defined relative to the $A$ 
and $B$ metrics. We assume that there is a vector field $\chi^a$ which Lie-derives \emph{both} metrics, and
that both metrics are asymptotically flat, with $\chi^a$ timelike near null infinity (simultaneously for both metrics). We assume further that the $B$ metric has the widest lightcones.

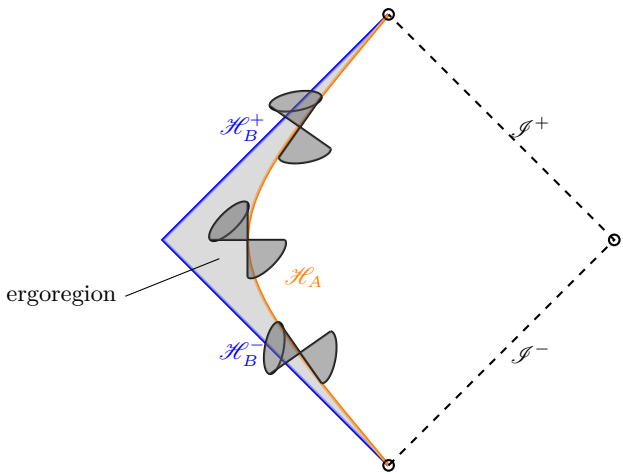
\begin{figure}[h]
\centering
\begin{tikzpicture}[scale=0.5]
    \draw[very thick, blue] (0,0) -- (-6,6) node[midway, below, left]{$\mathscr H_B^-$} -- (0,12) node[midway, above, left]{$\mathscr H_B^+$} ;
    \draw[very thick, orange] (0,0) .. controls (-5,6) .. (0,12) ;
    \draw[orange] (-3, 5) node[right]{$\mathscr H_A$} ;
    \draw[thick, dashed] (0,0) -- (6,6) node[midway, below, right]{$\mathscr I^-$} -- (0,12) node[midway, above, right]{$\mathscr I^+$} ;
    \draw[thick] (0,0) circle(4pt) (6,6) circle(4pt) (0,12) circle(4pt) ;
    \fill[black!20, fill opacity=0.7] (0,12) -- (-6,6) -- (0,0) .. controls (-5,6) .. (0,12) ;
    \draw (-4.5,5.5) -- (-7,4.5) node[left]{ergoregion} ;
    \begin{scope}[shift={(-2.35,3)}, rotate=80]
        \draw[thick, black!90] (-0.7,-0.7) -- (0.7,0.7) ;
        \draw[thick, black!90] (0.7,-0.7) -- (-0.7,0.7) ;
        \draw[thick, black!90] (0.7, 0.7) arc[x radius=0.7, y radius=0.25, start angle=0, end angle=360] ;
        \draw[thick, black!90] (0.7, -0.7) arc[x radius=0.7, y radius=0.25, start angle=0, end angle=-180] ;
        \fill[black!60, fill opacity=0.5] (0,0) -- (0.7,0.7) arc[x radius=0.7, y radius=0.25, start angle=0, end angle=180] -- (0,0) ;
        \fill[black!60, fill opacity=0.5] (0,0) -- (0.7,-0.7) arc[x radius=0.7, y radius=0.25, start angle=0, end angle=-180] -- (0,0) ;
    \end{scope}
    \begin{scope}[shift={(-3.75,6)}, rotate=45]
        \draw[thick, black!90] (-0.7,-0.7) -- (0.7,0.7) ;
        \draw[thick, black!90] (0.7,-0.7) -- (-0.7,0.7) ;
        \draw[thick, black!90] (0.7, 0.7) arc[x radius=0.7, y radius=0.25, start angle=0, end angle=360] ;
        \draw[thick, black!90] (0.7, -0.7) arc[x radius=0.7, y radius=0.25, start angle=0, end angle=-180] ;
        \fill[black!60, fill opacity=0.5] (0,0) -- (0.7,0.7) arc[x radius=0.7, y radius=0.25, start angle=0, end angle=180] -- (0,0) ;
        \fill[black!60, fill opacity=0.5] (0,0) -- (0.7,-0.7) arc[x radius=0.7, y radius=0.25, start angle=0, end angle=-180] -- (0,0) ;
    \end{scope}
    \begin{scope}[shift={(-2.35,9)}, rotate=10]
        \draw[thick, black!90] (-0.7,-0.7) -- (0.7,0.7) ;
        \draw[thick, black!90] (0.7,-0.7) -- (-0.7,0.7) ;
        \draw[thick, black!90] (0.7, 0.7) arc[x radius=0.7, y radius=0.25, start angle=0, end angle=360] ;
        \draw[thick, black!90] (0.7, -0.7) arc[x radius=0.7, y radius=0.25, start angle=0, end angle=-180] ;
        \fill[black!60, fill opacity=0.5] (0,0) -- (0.7,0.7) arc[x radius=0.7, y radius=0.25, start angle=0, end angle=180] -- (0,0) ;
        \fill[black!60, fill opacity=0.5] (0,0) -- (0.7,-0.7) arc[x radius=0.7, y radius=0.25, start angle=0, end angle=-180] -- (0,0) ;
    \end{scope}
\end{tikzpicture}
\caption{\justifying A hypothetical spacetime containing two nested horizons $\mathscr H_A$ and $\mathscr H_B$ defined w.r.t. $g_{Aab}$ and $g_{Bab}$, respectively. Since $\chi^a$
is a "boostlike" Killing field, it does not vanish on $\mathscr H_A$, meaning that this set must be \emph{either} a future \emph{or}
a past horizon w.r.t. $g_{Aab}$. In particular, this metric cannot have a "time reflection" symmetry, indicating a fundamental time-asymmetry of the setup.
} 
\label{fig:ABhorizons}
\end{figure}

Even though in our hypothetical spacetime, the inner horizon, $\mathscr{H}_B$, is thus of course the true black hole horizon, the nested structure of $\mathscr{H}_A$ and $\mathscr{H}_B$  
may give rise to interesting effects and even paradoxes, as has been noted by several authors \cite{Dubovsky:2006vk, Eling:2007qd, Jacobson:2010fat}, see also \cite{Benkel:2018abt} for further discussion and references. 

A first observation, which by itself does not raise a paradox, is that if the (outer) horizon $\mathscr{H}_A$ is a non-degenerate Killing horizon w.r.t. $\chi^a$, i.e., $\mathscr{H}_A$ is not rotating relative to infinity and $\chi^a$ is tangent and normal to $\mathscr{H}_A$ w.r.t. the $A$ metric, then the horizon Killing field $\chi^a$ must be spacelike w.r.t. the $A$ metric somewhere between the $A$ and $B$ horizons\footnote{This immediately follows, e.g., from Eq. $g_A^{ac}\nabla_c (\chi_b\chi^b) = -2\kappa_A \chi^a$ and the fact that $\kappa_A >0$, see e.g., \cite[sec. 12.5]{Waldbook}.}. Thus, the region between the $A$ and $B$ horizons includes a kind of ergoregion for $A$ particles, which enables an analogoue of the Penrose process for extracting energy from black holes; see e.g., \cite{Waldbook} for an explanation of the standard version of this process in the Kerr black hole ergoregion.

More concretely, in a mechanical version of this process, consider 
an $A$ and a $B$ particle with momenta $p_{Aa}$ and $p_{Ba}$, moving along lightlike geodesics in the $A$ respectively $B$ metrics. They collide inside the region between the $A$ and $B$ horizons at a point where 
$\chi^a$ is spacelike w.r.t. the $A$ metric. 
After the collision, the particles have lightlike momenta $p_{Aa}'$ and $p_{Ba}'$ relative to the respective metrics. 
One can arrange that $E'_A = -\chi^a p_{Aa}'<0$, and that the outgoing 
$B$ particle can re-exit the $A$ horizon, while the outgoing $A$ particle necessarily falls into the black hole, see fig. \ref{fig:collision}. Obviously, 
by energy conservation\footnote{This follows from the fact that $\chi^a$ Lie-derives both the $A$ and $B$ metric.} $E_A+E_B=E_A'+E_B'$ and $E_A>0$ since $\chi^a$ is timelike near infinity, so $E_B'>E_B$. For a closely related setting see, e.g., \cite{Eling:2007qd}.

\begin{figure}[h]
\centering
\begin{tikzpicture}[scale=0.4]
    \draw[thick, blue] (0,0) circle(2) ; 
    \draw[thick, blue] (2,0) node[right]{$\mathscr H_B$} ;
    \fill[black!30, fill opacity=0.7] (0,0) circle(2) ;
    \draw[thick, orange] (0,0) circle(5.5) ;
    \draw[thick, orange] (5.5,0) node[right]{$\mathscr H_A$} ;

    \draw[thick, postaction={decorate, decoration={markings, mark=at position 0.5 with {\arrow{Latex[length=3mm, width=2.5mm]}}}}] (-9,-3) -- node[midway, above]{$p_A$} (-3,0) ;
    \draw[thick, postaction={decorate, decoration={markings, mark=at position 0.35 with {\arrow{Latex[length=3mm, width=2.5mm]}}}}] (-3,0) .. controls (-4,3) .. (0,2) node[midway, left]{$p'_A$} ;
    \draw[thick, postaction={decorate, decoration={markings, mark=at position 0.5 with {\arrow{Latex[length=3mm, width=2.5mm]}}}}] (-1,-9) -- (-3,0) node[midway, right]{$p_B$} ;
    \draw[thick, postaction={decorate, decoration={markings, mark=at position 0.5 with {\arrow{Latex[length=3mm, width=2.5mm]}}}}] (-3,0) -- (1,9) node[midway, right]{$p'_B$} ;

    \fill[black] (-3,0) circle(4pt) ;
\end{tikzpicture}
\caption{\justifying The collision between two particles with 4-momenta $p_{Aa}$ and $p_{Ba}$.}
\label{fig:collision}
\end{figure}
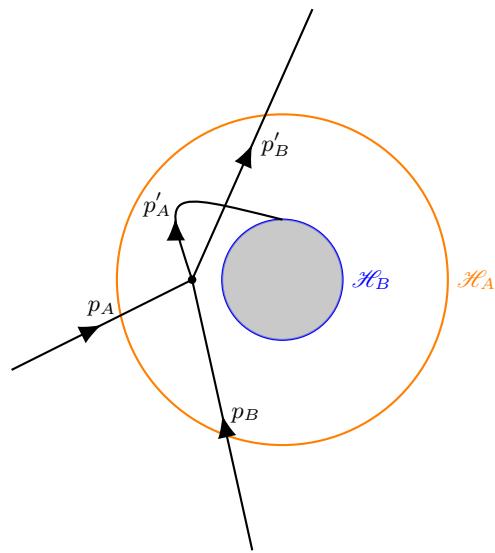

It is not inconceivable that there could be a field-theoretic analog of this setup choosing high-frequency Gaussian beam solutions $\Phi_A$
and $\Phi_B$ whose initial support is close to the trajectories co-tangent to $p_{Aa}$ and $p_{Ba}$. Indeed, before and after the collision, the potential terms in the coupled wave equations \eqref{ABwaveeq} can be neglected and the solutions are approximately equal to free Gaussian beams, as described, e.g., in \cite{Green:2015kur}, and matching these solutions across the collision might be possible.  

While interesting and potentially observable via an analogue of the superradiance effect \cite{Starobinski}, energy extraction from a black hole by itself does not raise any physical paradox. However, as argued in \cite{Dubovsky:2006vk, Eling:2007qd, Jacobson:2010fat}, by exploiting the hypothetical situation with nested $A$ and $B$ horizons further, one may enable perpetual motion creating tensions with the generally accepted laws of thermodynamics. 

Furthermore, while it is not totally clear how well-motivated one should consider theories with several metrics, different \emph{effective metrics} frequently arise in effective field theory (EFT) type generalizations of Einstein-matter theories with higher derivative terms in the action and/or theories with unconventional fields such as Einstein-aether theories, see e.g., \cite{Adam:2021vsk}. Although there is only a single dynamical metric $g_{ab}$ in such theories, different ``polarizations'' of the coupled field theoretic content are governed by different effective metrics.\footnote{As has been shown by \cite{Reall:2021voz}, the effective propagation cones may not even be defined by a quadratic equation arising from an effective metric but by higher order, e.g. quartic, polynomials arising from the principal symbol of the highest order operator in the Euler-Lagrange equations. We refer to secs. \ref{sec:propcones}, \ref{sec:slowness} for a short review of some of his results.} 

While the notion of propagation cones in EFTs is thus in general different from the standard notion of lightcone w.r.t. $g_{ab}$, it may of course happen that for {\it stationary}, regular, asymptotically flat black hole solutions, all those notions precisely coincide \emph{at the event horizon}. Evidence that this may be the case has been provided by \cite{Benkel:2018qmh, Tanahashi:2017kgn,Izumi:2014loa,Ali:2025ybt}. 
These authors study various scalar-tensor theories in the Horndeski class \cite{Benkel:2018qmh, Tanahashi:2017kgn} or higher dimensions \cite{Izumi:2014loa} with second order equations of motion, or various EFTs for just a metric \cite{Ali:2025ybt}. They demonstrate at various levels of generality that a Killing horizon relative to $g_{ab}$ is automatically a characteristic surface of a suitably defined propagation cone of the theory. Since the characteristic directions determine the propagation of discontinuities \cite{couranthilbert}, hence the boundary of the domain of dependence of the equation, it follows that a Killing horizon can be the causal edge just as in standard Einstein-scalar theories.

While these results provide an interesting link between Killing- and propagation cone horizons, the fundamental question is whether, vice versa, any propagation cone horizon is also a Killing horizon. If true, this would eliminate any of the concerns \cite{Dubovsky:2006vk, Eling:2007qd, Jacobson:2010fat} with perpetual motion. It would also reinforce the usual laws of black hole mechanics in alternative- or effective gravity theories with higher derivatives and/or additional unconventional fields, since these laws are usually formulated with regard to Killing horizons \cite{Iyer:1994ys,Iyer:1995kg,Hollands:2024vbe,Hollands:2022fkn}. 

Positive evidence for this scenario has been given by \cite{Benkel:2018qmh} for spherically symmetric static black hole solutions in the Horndeski class. In the present paper, we investigate \emph{rotating} black holes in the class of scalar-tensor EFTs in dimension $d=4$ described by the Lagrange density
\begin{equation}
\begin{split}
    \bs{L} =& \, \, [-V(\Phi)+R+X] \bs{\epsilon} \\ 
    &+ \frac{1}{2} \ell^2 \alpha(\Phi) X^2 \bs{\epsilon} + \frac{1}{16} \ell^2 \beta(\Phi) \bs{L}_{\rm GB}, 
    \label{lagrangian}
\end{split}
\end{equation}
where $X=-\frac{1}{2}g^{ab}\nabla_a\Phi\nabla_b\Phi$ is the kinetic term of a scalar field $\Phi$, where $\bs{L}_{\rm GB}$ is the Gauss-Bonnet
density, and where $\alpha,\beta,V$ are, in principle, free real functions. The above Lagrangian is a generalization of the usual Einstein-scalar field theory---the first line---containing additional terms with four derivatives in the second line. The overall size of the four-derivative couplings is dictated by the constant $\ell$ which has the dimension of a length. It has been shown \cite{Weinberg:2008hq} that this theory in the Horndeski class is, up to field redefinitions, the most general scalar-tensor EFT with up to four derivative terms in the action under the assumption of parity symmetry, thus it represents a quite general class of models. 

The Euler-Lagrange equations of \eqref{lagrangian} contain terms with at most two derivatives acting on each occurrence of $(\Phi,g_{ab})$ \cite{Reall:2021voz}. The nature of the propagation cones of these equations has been analyzed in detail by \cite{Reall:2021voz} (we will recall some of these results in sec. \ref{sec:propcones}). In particular, it has been shown \cite{Reall:2021voz} that the theory has, at least at the linearized level, three distinct ``polarizations'' with distinct propagation cones. The covectors in these cones, which are co-tangent to the characteristics, are defined as solutions to either a quadratic or quartic equation, with the inner sheet of the quartic cone determining the ``fastest'' polarization, see fig. \ref{fig:ccs}. Thus, the characteristics of the fastest polarization determine the physical notion of black hole horizon, $\mathscr{H}^+$. 

In the present paper, we provide two theorems showing that this horizon must in fact be a Killing horizon. Both theorems assume that the black hole solution $(\Phi,g_{ab})$ is real analytic, stationary (possibly rotating) and given to us as a real analytic family in the EFT parameter $\ell$. Furthermore, it is assumed that the $\ell=0$ solution 
of the usual Einstein-scalar theory is a stationary black hole spacetime with non-degenerate horizon.  Theorem A also assumes that the generators of the characteristic flow on $\mathscr{H}^+$---the propagation cone analogue of null-geodesics---are collinear with a Killing vector field, $\chi^a$, at all points of $\mathscr{H}^+$, and that $\beta'$ does not vanish. On the other hand, Theorem B assumes that the metric has a ``$t$-$\phi$'' reflection symmetry analogous to Kerr, 
and that $V \ge 0$. The assumptions are discussed in detail in sec. \ref{sec:thmA} (Theorem A) respectively sec. \ref{sec:thmB} (Theorem B). 

Both theorems conclude that $\mathscr{H}^+$ is actually a Killing horizon w.r.t. $g_{ab}$ with constant surface gravity for sufficiently small $\ell$. At this horizon, the three-sheet structure of the various propagation cones degenerates, as shown in fig. \ref{fig:1}.

\begin{figure}[h]
\centering
\begin{tikzpicture}[scale=0.17
]
    
    \draw[very thick] (-26,6) -- (-2,14) ;
    \draw[very thick] (-18,-6) -- (9,3) node[below] {$\cH$} ;
    \draw[thick, teal, -Latex] (-3,-1) -- (-14, 10) ;
    \draw[thick, teal, -Latex] (-6,-2) -- (-17,9) node[above, teal] {$\dot{x}^a$} ;
    \draw[thick, teal, -Latex] (3,1) -- (-8,12) ;
    \draw[thick, teal, -Latex] (6,2) -- (-5,13) ;

    \draw[very thick, orange, fill=orange!50, fill opacity=0.5] (9,9) arc[start angle =0, end angle=180, x radius=9, y radius=1.8] ;
    \fill[orange!60, fill opacity=0.5] (9,9) arc[start angle=0, end angle=-180, x radius=9, y radius=1.8] ;
    \draw[thick, blue] (0,0) -- (1,9) ;
    \fill[fill=blue!50, fill opacity=0.5] (-9,9) -- (0,0) -- (1,9) arc[start angle=0, end angle=-180, x radius=5, y radius=1.4] ;
    \draw[thick, blue, fill=blue!50,  fill opacity=0.5] (1,9) arc[start angle=0, end angle=180, x radius=5, y radius=1.4] ;
    \draw[thick, blue, fill=blue!60, fill opacity=0.5] (1,9) arc[start angle=0, end angle=-180, x radius=5, y radius=1.4] ;
    \draw[thick, red] (0,0) -- (-2,9) ;
    \fill[fill=red!50, fill opacity=0.5] (-9,9) -- (0,0) -- (-2,9) arc[start angle=0, end angle=-180, x radius=3.5, y radius=1] ;
    \draw[thick, red, fill=red!50, fill opacity=0.5] (-2,9) arc[start angle=0, end angle=180, x radius=3.5, y radius=1] ;
    \draw[thick, red, fill=red!60, fill opacity=0.5] (-2,9) arc[start angle=0, end angle=-180, x radius=3.5, y radius=1] ;
    \draw[very thick, orange, fill=orange!50, fill opacity=0.7] (-9,9) -- (0,0) -- (9,9) arc[start angle=0, end angle=-180, x radius=9, y radius=1.8] ;
    \fill (0,0) circle(10pt) ;

    \draw[very thick, -{Latex[scale=1.5]}] (0,0) -- (-11,11) node[above] {$\chi^a$} ;

    \begin{scope}[shift={(-15,-5)}]
    \draw[very thick, orange, fill=orange!50, fill opacity=0.5] (9,9) arc[start angle =0, end angle=180, x radius=9, y radius=1.8] ;
    \fill[orange!60, fill opacity=0.5] (9,9) arc[start angle=0, end angle=-180, x radius=9, y radius=1.8] ;
    \draw[thick, blue] (0,0) -- (1,9) ;
    \fill[fill=blue!50, fill opacity=0.5] (-9,9) -- (0,0) -- (1,9) arc[start angle=0, end angle=-180, x radius=5, y radius=1.4] ;
    \draw[thick, blue, fill=blue!50,  fill opacity=0.5] (1,9) arc[start angle=0, end angle=180, x radius=5, y radius=1.4] ;
    \draw[thick, blue, fill=blue!60, fill opacity=0.5] (1,9) arc[start angle=0, end angle=-180, x radius=5, y radius=1.4] ;
    \draw[thick, red] (0,0) -- (-2,9) ;
    \fill[fill=red!50, fill opacity=0.5] (-9,9) -- (0,0) -- (-2,9) arc[start angle=0, end angle=-180, x radius=3.5, y radius=1] ;
    \draw[thick, red, fill=red!50, fill opacity=0.5] (-2,9) arc[start angle=0, end angle=180, x radius=3.5, y radius=1] ;
    \draw[thick, red, fill=red!60, fill opacity=0.5] (-2,9) arc[start angle=0, end angle=-180, x radius=3.5, y radius=1] ;
    \draw[very thick, orange, fill=orange!50, fill opacity=0.7] (-9,9) -- (0,0) -- (9,9) arc[start angle=0, end angle=-180, x radius=9, y radius=1.8] ;
    \fill (0,0) circle(10pt) ;
    \end{scope}
\end{tikzpicture}
\caption{\justifying The red, orange and blue cones indicate different propagation cones of the theory \eqref{lagrangian} in the tangent space (quartic and quadratic cones). 
At the horizon $\mathscr{H}^+$, these touch at the horizon Killing vector field $\chi^a$ pointing along $\mathscr{H}^+$. At these points, $\chi^a$ is null w.r.t. $g_{ab}$ and satisfies 
$\chi^a \nabla_a \chi^b = \kappa \chi^b$ with constant $\kappa > 0$. $\chi^a$ is also co-linear with the tangent $\dot x^a$ of the null-geodesics ruling $\mathscr{H}^+$. This illustration is exaggerated because all cones should be close to each other in a weakly coupled (small $|\ell|$) theory.}
\label{fig:1}
\end{figure}
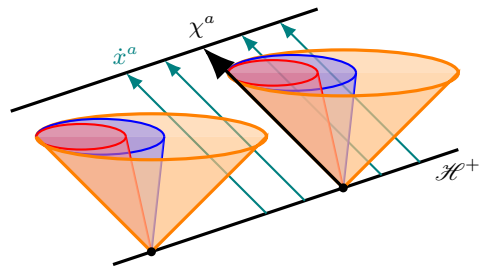

Due to the technical assumptions underlying our theorems, these are not totally definite results, but we believe that they provide evidence of a censorship mechanism which forces stationary propagation cone horizons to be actual Killing horizons, thereby saving the laws of black hole thermodynamics from the paradoxes raised in \cite{Dubovsky:2006vk, Eling:2007qd, Jacobson:2010fat}.

This paper is organized as follows. In sec. \ref{sec:setup} we review 
our EFT and its various propagation cones. In secs. \ref{sec:thmA} 
and \ref{sec:thmB} we present our theorems. Some lengthy technical arguments and formulas needed for the proofs are deferred to various appendices. Our notations and conventions are the same as in \cite{Waldbook}.

\section{Setup}
\label{sec:setup}
\subsection{Equations and Principal Symbols of EFT}

We consider configurations $(\cM,\Phi,g_{ab})$ where $(\cM,g_{ab})$ is a 4-dimensional spacetime and $\Phi$ a real-valued scalar field, which satisfy second order equations of motion (EoM's) produced by the Lagrangian \eqref{lagrangian} of our EFT.  
Written out more fully, this Lagrangian is
\begin{equation}
\begin{split}
    &\bs{L} = (X-V+R) \bs{\epsilon} +\\ 
    & \left(\frac{1}{2} \ell^2 \alpha X^2 + \frac{1}{16} \ell^2 \beta \delta^{a_1a_2a_3a_4}_{b_1b_2b_3b_4} R_{a_1a_2}{}^{b_1b_2}R_{a_3a_4}{}^{b_3b_4} \right)\bs{\epsilon} \, .
    \label{lagrangian1}
\end{split}
\end{equation}
$V(\Phi),\alpha(\Phi),\beta(\Phi)$ are real analytic functions of $\Phi$. 
$X=-\frac{1}{2}g^{ab}\nabla_a\Phi\nabla_b\Phi$ is a shorthand for the kinetic term of $\Phi$. $\delta^{a_1a_2a_3a_4}_{b_1b_2b_3b_4}/4!$ is the projector onto the totally anti-symmetric part of a rank four tensor. The constant $\ell$ governs the strength of the higher derivative terms in the second line and has the dimension of a length.

The EoM's obtained from varying \eqref{lagrangian} w.r.t. $g_{ab}$ and $\Phi$ are given by $E_g^{ab}=E_\Phi=0$, where \cite{Reall:2021voz}
\begin{align}
\label{EoMg}
    E_g^{ab} &= G^{ab} - \left( \frac{1}{2} + \alpha X \right) \nabla^a\Phi \nabla^b \Phi - \frac{1}{2} \left( V-X + \frac{1}{2} \ell^2\alpha X \right) \notag \\
    &-\frac{1}{4} \ell^2 \epsilon^{ac_1c_2c_3}\epsilon^{bd_1d_2d_3} R_{c_1c_2d_1d_2} (\nabla^2\beta)_{c_3d_3} \, ,
\end{align}
($G^{ab}$ is the Einstein tensor and indices are raised with $g^{ab}$ throughout this paper) and 
\begin{equation}
\begin{split}
    E_\Phi =& - \nabla_a\nabla^a\Phi \left( 1+\ell^2 \alpha X \right) - \ell^2\nabla^a\Phi\nabla_a(\alpha X) + V' \\
    & - \frac{1}{2}\ell^2\alpha' X^2 - \frac{1}{16} \ell^2 \beta' \delta^{a_1a_2a_3a_4}_{b_1b_2b_3b_4} R_{a_1a_2}{}^{b_1b_2} R_{a_3a_4}{}^{b_3b_4} \, .
\end{split}
\end{equation} 

The coefficient $\beta$ only appears 
through its derivatives in the EoM's, consistent with the fact that the Gauss-Bonnet density by itself is a topological term. 
From a phenomenological perspective, the most natural choices might be $\beta = \beta_0 \Phi$ as it is minimalistic and leads to a shift symmetric theory ($\Phi \to \Phi+c$) for $V\equiv 0$, or $\beta = e^{b_0 \Phi}$. 

The principal symbol captures the structure of the highest derivative terms of the EoM of the EFT. It is made up of the following tensors \cite{Reall:2021voz}
\begin{align*}
    P_{gg}^{abcdef} = \frac{\partial E^{ab}_g}{\partial(\partial_e\partial_f g_{de})}\, ,\; P_{g\Phi}^{abcd} = \frac{\partial E^{ab}_g}{\partial(\partial_c \partial_d \Phi)} \, , 
\end{align*}
\begin{align}    
    P_{\Phi\Phi}^{ab} =\frac{\partial E_\Phi}{\partial(\partial_a\partial_b \Phi)} \, , \; P_{\Phi g}^{abcd}= \frac{\partial E_\Phi}{\partial(\partial_c\partial_d g_{ab})} \, .
\end{align}
In these equations, $\partial_a$ denotes an arbitrary background derivative operator such as, e.g., the flat coordinate derivative associated with some coordinate system.

As shown by \cite{Reall:2021voz}, the above expressions are expressible in terms of certain tensors $C^{abcdef}$, $C^{abcd}$, $P^{ab}$ and the so called \emph{effective metric} $C_{ab}$ via
\begin{subequations}
\begin{align} 
P_{gg}^{abcdef} &= C^{a (c|e b|d)f} \, , \\
C^{abcd} &= P_{\Phi g}^{abcd} \, , \\
P^{ab} &= P_{\Phi\Phi}^{ab} \, , \\
C^{a_1a_2a_3b_1b_2b_3} &= - \frac{1}{2} \epsilon^{a_1a_2a_3 a} \epsilon^{b_1b_2b_3b} C_{ab} \, ,
\end{align}
\end{subequations}
where $\epsilon_{abcd}$ is the volume form of $(\cM,g_{ab})$. 

Ref. \cite[Sec. 4.1]{Reall:2021voz} has shown that the effective metric and the tensors $C^{abcd}$ and $P^{ab}$ of our EFT \eqref{lagrangian1} take the following form:
\begin{subequations}
    \begin{align}
        C_{ab} &= g_{ab} - \ell^2 \nabla_a\nabla_b\beta \\
        C^{abcd} &= - \frac{1}{4} \ell^2 \beta' \epsilon^{abe_1e_2} \epsilon^{cd f_1f_2} R_{e_1e_2f_1f_2}\\
        P^{ab} &= -(1+\ell^2\alpha X) g^{ab} + \ell^2 \alpha \nabla^a\Phi \nabla^{b} \Phi.
    \end{align}
\end{subequations}

\subsection{Propagation Cones of EFT}
\label{sec:propcones}
It has been shown by \cite{harvey2} that the EoMs of the EFT admit a well-posed initial value formulation in a certain gauge, assuming that the theory is suitably ``weakly coupled''. By this one means that, given a 
co-dimension one surface $\Sigma \subset \cM$ that is ``non-characteristic'' in a suitable sense described more fully below, and given suitable initial data 
comprised of $(\Phi, g_{ab})$ and its first derivatives off the surface 
$\Sigma$ subject to the constraints of the theory \eqref{lagrangian}, then one locally (in some open neighborhood of $\Sigma$) has one and only one solution to the EoMs. 

In the case of a quasi-linear system of partial differential equations (PDEs) for some set of fields $\Psi^I$, i.e., the highest derivative 
terms in the EoMs can be written as $P^{Iab}_J(\Psi, \partial \Psi) \partial_a \partial_b \Psi^J + \dots = 0$, one calls a covector $\xi_a$ at a point $p$ \emph{non-characteristic} if the ``principal symbol'', $P^{Iab}_J(\Psi, \partial \Psi) \xi_a \xi_b|_p$,  is an invertible matrix. This means that for a coordinate $x^0$ locally defining $\Sigma$ 
by $x^0=$ const. such that 
$\xi_a|_p = \nabla_a x^0|_p$, we can solve the equation for $\partial_0^2 \Psi^J$ and iteratively determine all derivatives of $\Psi^J$ at $p$ off of $\Sigma$ in terms of 
up to one derivative w.r.t. $x^0$ and the derivatives tangent to $\Sigma$, i.e., w.r.t. some coordinates $x^j, j=1,2,3$ parameterizing $\Sigma$. Thus, if the initial data on $\Sigma$ are smooth, then so will the solution (if it exists) near $p$. 

On the other hand, if a covector $\xi_a$ at a point $p$ is characteristic, which means that $P^{Iab}_J(\Psi, \partial \Psi) \xi_a \xi_b|_p$ is not invertible, then we cannot solve the equation for $\partial_0^2 \Psi^J$. As a consequence, the solution may develop 
discontinuities at characteristic surfaces, i.e., ones such that their (locally) defining function $x^0$ is such that $\nabla_a x^0 = \xi_a$
is characteristic at each point of the surface. In particular, it follows from such a consideration that the edge of the causal shadow of a point, or the boundary of the domain of dependence must be characteristic at each point in the case of a quasi-linear PDE, see e.g., \cite{couranthilbert} for details.
Even if the fields $\Psi^I$ contain a Lorentz metric $g_{ab}$, 
the characteristic surfaces need not have anything to do with the lightcones of this metric; see fig. \ref{dod}.
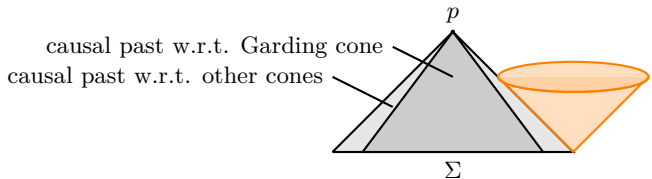
\begin{figure}
\centering
\begin{tikzpicture}[scale=0.2]
    \draw[thick, black, fill=black!10] (0,0) -- (16,0) node[midway, below]{$\Sigma$} -- (8,8) -- (0,0) ;
    \draw[thick, black, fill=black!20] (2,0) -- (14,0) -- (8,8) -- (2,0) ;
    \fill (8,8) circle(4pt) node[above]{$p$} ;
    \draw[thick, black] (8,5) -- (4,7) node[left]{\text{causal past w.r.t. Garding cone}} ;
    \draw[thick, black] (4,3) -- (0,5) node[left]{\text{causal past w.r.t. other cones}} ;

    \draw[thick, orange, fill=orange!50, fill opacity=0.5] (11,5) -- (16,0) -- (21,5) arc[start angle=0, end angle=-180, x radius=5, y radius=1] ;
    \draw[thick, orange, fill=orange!50, fill opacity=0.5] (21,5) arc[start angle=0, end angle=180, x radius=5, y radius=1] ;
    \draw[thick, orange, fill=orange!50, fill opacity=0.7] (21,5) arc[start angle=0, end angle=-180, x radius=5, y radius=1] ;
\end{tikzpicture}
\caption{\justifying The figure shows the causal pasts of a point $p$
w.r.t. various cones terminating on 
some Cauchy surface $\Sigma$. Obviously, the domains of the dependence
will be different.}
\label{dod}
\end{figure}

The EoMs of our EFT \eqref{lagrangian1}, viewed as a system for 
$\Psi^I = (\Phi, g_{ab})$, are not quasi-linear, but one may still 
define a principal symbol by linearizing the EoMs, and one may still define the notion of a characteristic covector. In fact, 
as in \cite[Sec. 3.2]{Reall:2021voz} we say that a covector $\xi_a$ is \emph{characteristic} if there exists a \emph{non-trivial} (i.e., non-gauge) $T=([t_{ab}],t)$ such that $P(\xi)T=0$, where
    \begin{align}
    \label{Pdef}
        P(\xi)T =
        \begin{pmatrix}
            P_{gg}^{abcdef} \xi_e\xi_f & P_{\Phi g}^{abef}\xi_e\xi_f \\
            P_{g\Phi}^{cdef} \xi_e\xi_f & P_{\Phi\Phi}^{ef}\xi_e\xi_f
        \end{pmatrix}
        \begin{pmatrix}
            t_{cd} \\
            t
        \end{pmatrix}
    \end{align}
    and $P(\xi)$ is called the \emph{principal symbol} of $E_g^{ab}=E_\Phi=0$. The equivalence class is defined by $t_{ab} \sim t'_{ab}$ if there exists a covector $X_a$ such that $t'_{ab}=t_{ab} + \xi_{(a}X_{b)}$, noting that $\xi_{(a}X_{b)}$
    always is in the kernel of $P$. Such elements correspond to infinitesimal gauge transformations. 
    A hypersurface is $\Sigma \subset \cM$ is said to be a \emph{characteristic hypersurface} if its co-normal is 
    characteristic.

Just as for quasi-linear equations, the edge of the domain of dependence of a suitably gauge-fixed \cite{harvey2} version of the linearized EoMs is characteristic in the above sense. It seems plausible to us that this will remain true even in the full non-linear EFT \eqref{lagrangian1}, at least in the \emph{weakly coupled} case. 
(We call the solution of our EFT weakly coupled if the contributions of the 4-derivative terms to $P(\xi)$ are small compared to the 2-derivative contributions, see \cite[Sec. 4.1]{Reall:2021voz}.) We leave this as an open problem and use the above notion of characteristic covector to define the propagation cone of the theory, which, at any rate, will be reasonable for the solutions of the linearized theory, and which should be a reasonable approximation for small $\ell$.

Ref. \cite[Sec. 4.3]{Reall:2021voz} showed that there is an equivalent condition for $\xi_a$ to be characteristic in the following way:
Let $\xi_a$ be a non-zero real covector. Then $\xi_a$ is characteristic if and only if \cite[Sec. 3.2]{Reall:2021voz}
\be 
p(\xi)=C^{-1}(\xi) Q(\xi)=0, 
\ee
where $C^{-1}(\xi)=(C^{-1})^{ab} \xi_a\xi_b$ and
\begin{subequations}
    \begin{align}
    \label{Qdef}
        Q(\xi) &= \frac{1}{4!}Q^{abcd} \xi_a\xi_b\xi_c\xi_d \, ,  \\
        Q^{abcd} &= \det (Cg^{-1}) (C^{-1})^{(ab}P^{cd)} + \\ 
        &(2C_{a_1a_2}C_{b_1b_2}- C_{a_1b_1}C_{a_2b_2})C^{a_1(ab|a_2} C^{b_1|cd)b_2} \, . \nonumber
    \end{align}
\end{subequations}
Note that in the case of a 2-derivative theory [standard Einstein-scalar having $\ell=0$ in \eqref{lagrangian1}] we have $(C^{-1})^{ab}=-P^{ab}=g^{ab}$ and $C^{abcd}=0$ \cite[Sec. 4.4]{Reall:2021voz}, so that the \emph{characteristic cone} at a point $x\in\cM$, $\{\xi_a \in T_x^*\cM : p(\xi)|_x=0\}$, is simply the null cone of the physical metric $g^{ab}|_x$ in $T^*_x\cM$. Therefore, a covector $\xi_a$ in a 2-derivative theory is characteristic if and only if $\xi_a$ is null w.r.t. $g^{ab}$
In the following, we refer to $\{\xi_a\in T_x^*\cM : C^{-1}(\xi)|_x = 0\}$ as the \emph{quadratic cone} and to $\{\xi_a\in T_x^*\cM : Q(\xi)|_x =0\}$ as the \emph{quartic cone}. As  illustrated in ref. \ref{fig:ccs}, both cones are subsets of the characteristic cone and the quartic cone consists of two sheets between which the quadratic cone lies \cite[Sec. 4.4]{Reall:2021voz}. 

\begin{figure}
\centering
\begin{tikzpicture}[scale=0.25]   
    \draw[very thick, red, fill=red!50, fill opacity=0.5] (7,7) arc[start angle=0, end angle=180, x radius=7, y radius=1.5] ;
    \fill[red!60, fill opacity=0.5] (7,7) arc[start angle=0, end angle=-180, x radius=7, y radius=1.5] ;
    \draw[very thick, blue, fill=blue!50, fill opacity=0.5] (7,7) arc[start angle=0, end angle=180, x radius=5.5, y radius=1.1] ;
    \fill[blue!60, fill opacity=0.5] (7,7) arc[start angle=0, end angle=-180, x radius=5.5, y radius=1.1] ;
    \draw[very thick, orange, fill=orange!50, fill opacity=0.5] (4,7) arc[start angle=0, end angle=180, x radius=4, y radius=0.8] ;
    \fill[orange!60, fill opacity=0.5] (4,7) arc[start angle=0, end angle=-180, x radius=4, y radius=0.8] ;
    \draw[very thick, orange, fill=orange!40, fill opacity=0.7] (-4,7) -- (0,0) -- (4,7) arc[start angle=0, end angle=-180, x radius=4, y radius=0.8] ; 
    \draw[very thick, blue, fill=blue!40, fill opacity=0.7] (-4,7) -- (0,0) -- (7,7) arc[start angle=0, end angle=-180, x radius=5.5, y radius=1.1] ;
    \draw[very thick, red, fill=red!40, fill opacity=0.7] (-7,7) -- (0,0) -- (7,7) arc[start angle=0, end angle=-180, x radius=7, y radius=1.5] ;

    \fill[fill=black!15, fill opacity=0.35] (-13,4) -- (5,4) -- (13,10) -- (-5,10) -- (-13,4)  ;
    \draw[very thick, black] (-13,4) -- (5,4) -- (13,10) node[above, black!100] {$u^a\xi_a=$const.} -- (-5,10) -- (-13,4)  ;

    \draw[very thick, dashed] (0,0) -- (0,7) ;
    \draw[very thick, -Latex] (0,7) -- (0,13) node[above] {$u^a$} ;

    \end{tikzpicture}
\caption{\justifying Generic behavior of the characteristic cone for a non-vanishing Weyl-like tensor \eqref{Wdef} \cite{Reall:2021voz}. The red and and the orange cone illustrate the two sheets of the quartic cone. The blue cone is the quadratic part of the characteristic cone, which smoothly touches the quartic cones along two principal null directions of the Weyl-like tensor.}
\label{fig:ccs}
\end{figure}
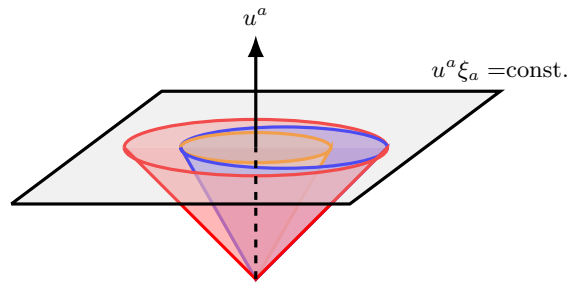
The characteristic cone provides the following notion of causality in the cotangent space:
\begin{definition}
    \cite[Sec. 4.4]{Reall:2021voz}: Let $x\in\cM$. Assume that the theory is weakly coupled (such that $g^{ab}|_x \approx (C^{-1})^{ab}|_x$) and fix a future pointing\footnote{We assume a time orientable setting in which a continuous choice of $u_a$ can be made globally.} causal covector $u_a$ in $T_x^*\cM$ w.r.t. $(C^{-1})^{ab}|_x$. We call the \emph{connected component} of $\{\xi_a \in T_x^*\cM : p(\xi)|_x \neq 0\}$ that contains $\mp u_a$ the \emph{Garding cone} $\Gamma^\pm_x$ at the point $x\in\cM$.
\end{definition}
The Garding double cone $\Gamma^+_x \cup \Gamma^-_x$ is the region inside the innermost sheet of the quartic cone, i.e., in a 2-derivative theory, it is simply the set of covectors that are timelike w.r.t. the physical metric at a point \cite[Sec. 4.4]{Reall:2021voz}. Starting from the Garding double cone, one can define a notion of causality in the tangent space:
\begin{definition}
    \cite[Sec. 4.6, 4.7]{Reall:2021voz}: For $x\in\cM$ define the \emph{causal cone} $\cC^\pm_x$ in $T_x\cM$ as the dual of $\Gamma^\pm_x$, i.e.,
    \begin{align}
        \cC^\pm_x = \{ X^a \in T_x\cM : \xi_a X^a |_x \leq 0 \, \forall \xi_a \in \Gamma^\pm_x \} \, .
    \end{align}
    A vector $X^a\in T_x\cM$ is said to be
    \begin{itemize}
        \item future- (past-) directed causal at $x$ if and only if $X^a \neq 0$ and $X^a \in \cC^+_x$ ($\cC^-_x$),
        \item future- (past-) directed timelike at $x$ if and only if $X^a \in \overset{\circ}{\cC_x^+}$ ($\overset{\circ}{\cC^-_x}$) \, .
    \end{itemize}
    Here, a circle above a set means its interior.
    
    A vector field $X^a$ is called future- (past-) directed causal (timelike) if $X^a|_x$ is future- (past-) directed causal (timelike) for all $x\in\cM$. A smooth curve $\gamma$ on $\cM$ is said to be future- (past-) directed causal (timelike) if its tangent vector field is future- (past-) directed causal (timelike). Furthermore, let $S$ be some subset of $\cM$. The sets
    \begin{widetext}
    \begin{subequations}
    \begin{align}
    \label{IJdef}
        J^\pm(S) &= \{ x \in \cM : \text{there exists a future- (past-) directed causal curve from $S$ to $x$}\} \, , \\
       I^\pm(S) &= \{x \in \cM : \text{there exists a future- (past-) directed timelike curve from $S$ to $x$}\} \, .
    \end{align}
    \end{subequations}
    \end{widetext}
    are called \emph{causal future (past)} and \emph{chronological future (past)}.
\end{definition}

\subsection{Slowness Surface of EFT}
\label{sec:slowness}
Taking any cross-section of constant $\xi_a u^a$ of the characteristic cone at a point, one obtains the so-called \emph{slowness surfaces} \cite{Reall:2021voz}, consisting of three ellipsoids where the innermost one corresponds to the "fastest degree of freedom" (note the similarity with the null cone in 2-derivative theories) \cite[Sec. 4.4]{Reall:2021voz}. On this surface there are special points at which two or more sheets of the slowness surface touch each other as shown in fig. \ref{fig:ccs}. Such points on the slowness surface then correspond to a special direction of the characteristic cone by scaling freedom. They may be classified in the following way:
\begin{definition}
\label{def:II.3}
    \cite[Sec. 4.4]{Reall:2021voz}: Suppose that $\xi_{0a}$ belongs to the characteristic cone. Then $\xi_{0a}$ is called
    \begin{itemize}
        \item a \emph{singular direction} if $\partial p/\partial \xi_a |_{\xi=\xi_0}=0$ or, equivalently, $\xi_{0a}$ belongs to the quadratic and the quartic cone, 
        \item a \emph{double direction} if  $\xi_{0a}$ is a singular direction of the characteristic cone and $\partial^2p/\partial\xi_a\partial \xi_b|_{\xi=\xi_0}\neq0$,
        \item a \emph{triple direction} if $\xi_{0a}$ is a singular direction of the quartic cone and $\partial^2p/\partial\xi_a\partial \xi_b|_{\xi=\xi_0}=0$.
    \end{itemize}
\end{definition}
Generically, at a triple point all three sheets of the slowness surface coincide, so the propagation speed, interpreted as the distance from the origin, is equal for all three polarizations \cite[Sec. 4.4]{Reall:2021voz}.
At a double point, only two sheets of the slowness surface---one of the quadratic- and one of a quartic cone---touch \cite[Sec. 4.4]{Reall:2021voz}, see fig. \ref{fig:ccs}.

Within the expression for $Q(\xi)$, one can split off a part that is determined by a "Weyl-like" tensor $W^{abcd}$ given by 
\begin{align}
    W^{abcd} =& C^{abcd} - (C^{-1})^{a [c} D^{d]b} + \nonumber
    \\&(C^{-1})^{b[c}D^{d]a} + \frac{1}{3} D (C^{-1})^{a[c} (C^{-1})^{d]b} \, ,
    \label{Wdef}
\end{align}
where $D^{ab}$ is the "Ricci-like tensor" and $D$ is the "Ricci-like scalar" formed from $C^{abcd}$ using the effective metric $C_{ab}$ \cite[Sec. 3.2]{Reall:2021voz}. Ref. \cite[Sec. 4.4]{Reall:2021voz} showed that a singular direction $\xi_{0a}$ must be a principal null direction of this Weyl-like tensor.

\subsection{Bicharacteristic Curves of EFT}
We have seen that the innermost sheet $\partial(\Gamma^+ \cup\Gamma^-)$ of the quartic cone plays the role of the fastest propagation direction. Thus, in analogy to the null cone in 2-derivative theories, we define the event horizon $\sH^\pm=\partial(\cM \backslash J^\mp (\mathscr I^\pm ))$ just as in \eqref{futurehor}, but where the notions $J^\pm$ of the causal future (past) are now defined as above instead of by the physical metric. Thus, a horizon is a characteristic hypersurface with normal $\xi_a\in \partial\Gamma^\pm$. 

A notion appearing in the assumptions of Theorem A below which we have not reviewed yet is that of a bicharacteristic curve.
\begin{definition}
\label{bichardef}
    \cite[Sec. 4.5]{Reall:2021voz}: A \emph{bicharacteristic curve of $Q$} \eqref{Qdef} is a pair $(x(s),\xi_a(s))$ satisfying Hamilton's equations 
    \begin{align}
    \label{HF}
        \dot{x}^a = \frac{\partial Q}{\partial\xi_a} \, , \; \; \dot \xi_a=- \frac{\partial Q}{\partial x^a} \, 
    \end{align}
    with the initial condition that $(x(0),\xi_a(0))$ belongs to the quartic cone. Here a dot means derivative w.r.t. $s$, and $x(s)$ is called the \emph{projection of a bicharacteristic curve}.
\end{definition}

If we start a bundle of bicharacteristic curves on a co-dimension one 
surface on a non-characteristic 3-surface $\Sigma$, then these curves 
locally generate a characteristic surface. In particular, the future/past horizons $\mathscr{H}^\pm$ locally are ruled by bicharacteristic curves. 

Bicharacteristic curves of $Q$, see \eqref{Qdef}, in 4-derivative theories are in close analogy to null geodesics of $g_{ab}$ in 2-derivative theories which would be obtained from the Hamiltonian $H=g^{ab}\xi_a \xi_b$ instead of $Q$. For example, the future horizon (usually generated by null geodesics in standard GR) is generated by bicharacteristic curves of the innermost quartic sheet $\partial\Gamma^\pm$ in 4-derivative theories. 

A difference between the usual case of null geodesics is that there is no canonical parameterization of the bicharacteristics associated with $Q$ analogous to affine parameterization in general. Furthermore, it may---and actually, in our situation below, will---happen that the Hamiltonian flow is trivial in the sense that $\dot x^a=0$ on the characterstic surface.

\subsection{Killing Horizons in EFT}
\label{sec:KH}
A \emph{Killing horizon} in a spacetime $(\cM, g_{ab})$ with a Killing vector field $\chi^a$ is a co-dimension one null surface $\cN$ such that 
$\chi^a$ is both tangent and normal (meaning $\chi_a v^a = 0$ for all 
$v^a \in T\cN$) to $\cN$. Such a Killing vector field $\chi^a$ automatically satisfies 
\begin{equation}
\label{Keq}
    \chi^a \nabla_a \chi^b = \kappa \chi^b \quad 
    \text{on $\cN$,}
\end{equation}
where $\kappa$ is called the \emph{surface gravity}. Eq. \eqref{Keq} immediately gives $\chi^a \nabla_a \kappa = 0$ on $\cN$. Furthermore, it is known \cite{Waldbook} that if the Einstein tensor $G_{ab}$ of $g_{ab}$ satisfies the dominant energy condition, then $\kappa$ is actually constant on all of $\cN$. This property is called the zeroth of law black hole thermodynamics in the case of black holes with Killing horizons. 

In a theory with scalars, it is natural to expect that the zeroth law includes a statement that $\Phi$ is constant on the horizon. As \cite[p.31]{Reall:2021voz} showed, the zeroth law holds in this sense within the setting of our EFT \eqref{lagrangian1} for any weakly coupled solution $(\Phi, g_{ab})$ to the EoMs that are Lie-derived by $\chi^a$, 
\begin{equation}
\label{gPLieder}
    \sL_\chi g_{ab} = \sL_\chi \Phi = 0
\end{equation}
for any $\cN$ which is a Killing horizon relative to $\chi^a$.
As a corollary\footnote{Ref. \cite{Hollands:2022ajj} shows more generally that the horizon actually is a Killing horizon w.r.t. some Killing vector field $\chi^a$ that is not necessarily assumed from the beginning.} of the rigidity theorem for general EFTs by \cite{Hollands:2022ajj}, this property holds, in fact, much more generally for the horizon $\mathscr{H}$ defined w.r.t. $g_{ab}$ of any stationary, asymptotically flat, real analytic, non-degenerate $(\kappa > 0)$ black hole with compact horizon cross sections, of \emph{any} EFT constructed from a diffeomorphism covariant Lagrangian, to arbitrary orders in the EFT scale $\ell$. 
See also \cite{Racz:1995nh} for model-independent proofs of the zeroth law based on a $t$-$\varphi$ reflection symmetry.

If we have two Killing horizons ${}^{L,R}\cN$ with the same $\chi^a$ and surface gravity $\kappa>0$, intersecting in a (compact) 2-surface $\mathscr{B}$, one speaks of a non-degenerate (compactly generated) \emph{bifurcate Killing horizon} \cite{Kay:1988mu}, see fig. \ref{fig:biKi}. (A)deSitter-, Minkowski, the extended subextremal Kerr- and Schwarzschild spacetimes including their (A)dS counterparts all host such structures.
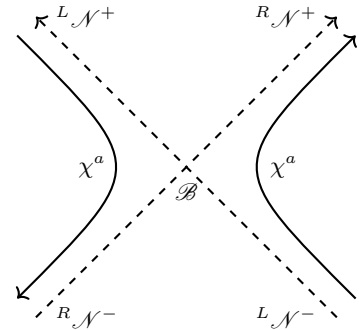
\begin{figure}[h!]
\centering
\begin{tikzpicture}[scale=.5]
\draw[->, thick, dashed] (-2,-3) -- (6,5);
\draw[->,thick, dashed] (6,-3) -- (-2,5);
\draw (-1.7,5) node[anchor=west]{${}^L\cN^+$};
\draw (5.7,5) node[anchor=east]{${}^R\cN^+$};
\draw (-1.7,-3) node[anchor=west]{${}^R\cN^-$};
\draw (5.7,-3) node[anchor=east]{${}^L\cN^-$};
\draw (4,1) node[anchor=west]{$\chi^a$};
\draw (-1.1,1) node[anchor=west]{$\chi^a$};
\draw (2,.8) node[anchor=north]{${\mathscr B}$};
\draw[<-,thick] (-2.5,-2.5) .. controls (1,1)  .. (-2.5,4.5);
\draw[->,thick] (6.5,-2.5) .. controls (3,1)  .. (6.5,4.5);
\end{tikzpicture}
\caption{\justifying Bifurcate Killing horizon.}
\label{fig:biKi}
\end{figure}

\begin{proposition}
\label{prop1}
    Let $\cN$ be a non-degenerate $(\kappa > 0)$ bifurcate Killing horizon with Killing vector field $\chi^a$
    which Lie-derives the solution $(\Phi,g_{ab})$ of the EFT \eqref{lagrangian1}. Then $\cN$ is characteristic 
    w.r.t. the quartic and quadratic cone, 
    $Q^{abcd}\xi_a \xi_b \xi_c \xi_d = 0 = (C^{-1})^{ab} \xi_a \xi_b$
    for all co-normals $\xi_a$ to $\cN$. Furthermore, 
    \begin{equation}
    \label{Qdef1}
        \frac{\partial Q}{\partial \xi_a}\bigg|_{\cN} = 0,
        \quad
    \end{equation}
    for all characteristic covectors $\xi_a$
    at $\cN$, all of which are singular triple directions in the sense of def. \ref{def:II.3}.
\end{proposition}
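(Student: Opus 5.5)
The strategy is to exploit the boost symmetry at the bifurcation surface $\sB$, following the standard argument that invariant tensors on a bifurcate Killing horizon have restricted boost weight (as in \cite{Kay:1988mu} and in the zeroth-law argument of \cite{Reall:2021voz}).

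\textbf{Step 1: tangent-space structure at $\sB$.} At a point $p\in\sB$ we have $\chi^a=0$ and $\nabla_a\chi_b=\kappa\epsilon_{ab}$, where $\epsilon_{ab}$ is the binormal. The flow of $\chi^a$ therefore acts on $T_p\cM$ as a Lorentz boost. Pick null vectors $\ell^a,n^a$ tangent to ${}^R\cN$ and ${}^L\cN$, together with an orthonormal basis $e_i^a$ of $T_p\sB$. The boost acts as $\ell\mapsto e^{\kappa t}\ell$, $n\mapsto e^{-\kappa t}n$, and $e_i\mapsto e_i$.

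Since $\Phi$, $g_{ab}$, and hence $R_{abcd}$, $\nabla_a\nabla_b\beta$, $C_{ab}$, $C^{abcd}$ and $P^{ab}$ are all Lie-derived by $\chi^a$, their values at $p$ are invariant under this boost. In the null frame, a component with boost weight $w\neq 0$ is multiplied by $e^{w\kappa t}$. Invariance with $\kappa>0$ then forces every nonzero-boost-weight component to vanish at $p$.

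\textbf{Step 2: the conormal at $\sB$.} The conormal of ${}^R\cN$ at $p$ is $\xi_a\propto \ell_a = g_{ab}\ell^b$. This covector has boost weight $+1$ in the cotangent frame.

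The quantity $Q(\xi)=\frac1{4!}Q^{abcd}\xi_a\xi_b\xi_c\xi_d$ picks out the component of the boost-invariant tensor $Q^{abcd}$ along $\ell\ell\ell\ell$, which has boost weight $4\neq 0$ and so vanishes. The same reasoning gives $(C^{-1})^{ab}\ell_a\ell_b=0$ (weight $2$).

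For the gradient, $\partial Q/\partial\xi_a|_\ell \propto Q^{abcd}\ell_b\ell_c\ell_d$ has components of weight $3+w'$ with $w'\in\{-1,0,1\}$, none of which is zero. So the gradient vanishes, and likewise $\partial p/\partial\xi_a$ vanishes. For the triple property, the Hessian $\partial^2 p/\partial\xi\partial\xi$ at $\ell$ has components of weight at least $2+(-2)=0$. Here I must check more carefully: the weight-$0$ piece is $Q^{abcd}\ell_a\ell_b n\cdot n$-type terms, and these come paired with $C^{-1}(\ell)=0$. Writing $p=C^{-1}Q$, the Hessian at $\xi_0$ equals $\partial C^{-1}\otimes\partial Q+\dots$. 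Both first derivatives vanish, and every remaining term contains a factor $C^{-1}(\xi_0)=0$ or $Q(\xi_0)=0$. Hence the Hessian vanishes, and the direction is a triple direction. This holds at every $p\in\sB$, and the same argument applies to ${}^L\cN$ using $n_a$.

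\textbf{Step 3: extension off $\sB$ (main obstacle).} The task is to propagate these statements along the generators of $\cN$. All of the relevant quantities are scalar functions on the conormal bundle of $\cN$ that are built covariantly from $\chi$-invariant fields. They are therefore invariant under the flow of $\chi^a$, which preserves $\cN$ and maps conormals to conormals, rescaling $\xi$ only by a positive factor. Every point of $\cN\setminus\sB$ lies on the orbit of a point arbitrarily close to $\sB$; near $\sB$ the conormal line is smooth.

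Under the flow, $Q(\xi)$ at a point $x$ equals $e^{4\kappa t}$ times $Q$ evaluated at the pulled-back point $\phi_{-t}x$, which tends to $\sB$ as $t\to\infty$. Continuity of the smooth function $Q(\xi(x))$ then bounds $|Q(\xi)(x)|\le e^{-4\kappa t}\sup|Q|\to 0$. An analogous estimate with weight $3$ handles the gradient, since the gradient has the same rescaling property up to the flow's pushforward on one upper index. That pushforward has boost weight at most $1$ in magnitude along the horizon, so the net exponent $3-1>0$ still decays.

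The delicate point is controlling this pushforward of the free index uniformly. I would handle it by expressing $\partial Q/\partial\xi_a$ in a $\chi$-adapted frame, namely $\chi^a$, a transverse null vector normalized by $\chi\cdot N=-1$, and the tangent vectors to the cross-sections, and checking the weights there. The vanishing of the Hessian then follows as in Step 2, since first derivatives vanish and $C^{-1}(\xi)=Q(\xi)=0$.
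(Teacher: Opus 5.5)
Your argument is correct and is essentially the paper's: the paper likewise uses that the relevant contractions are invariant under the flow of $\chi^a$ and degenerate at $\sB$. The paper works with the invariant conormal $\chi_a=\kappa\lambda\,\ell_a$ ($\lambda$ an affine parameter vanishing on $\sB$) rather than your smooth $\ell_a$ rescaled along the flow, and with exactly your adapted frame $(\chi^a,N^a,s_i^a)$, using that $N_a\chi_b$ and $s_{ia}$ have finite limits at $\sB$. Your Step 3 therefore makes the boost-weight computation of Steps 1--2 redundant, since continuity already gives the values on $\sB$.

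Two small slips. First, the factor relating $Q(\xi)$ at $x$ to its value at $\phi_{-t}x$ is $e^{-4\kappa t}$, as your bound uses, not $e^{4\kappa t}$. Second, $\partial C^{-1}/\partial\xi_a=2(C^{-1})^{ab}\xi_b$ does \emph{not} vanish at $\xi_0$, because its component $(C^{-1})^{ab}\ell_a n_b\approx g^{ab}\ell_a n_b$ is nonzero. The Hessian of $p=C^{-1}Q$ still vanishes, since each cross term carries the factor $\partial Q/\partial\xi_a|_{\xi_0}=0$.
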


\emph{Proof:} The proof is quite similar in spirit to arguments by \cite{Benkel:2018qmh, Tanahashi:2017kgn,Izumi:2014loa}. 
By the definition of Killing horizon, $\chi_a$ is co-normal to 
$\cN$. Thus, we must show $Q^{abcd}\chi_a \chi_b \chi_c \chi_d = 0 = (C^{-1})^{ab} \chi_a \chi_b$ on $\cN$. Since $Q_{abcd}, C_{ab}$ are locally and covariantly constructed out of $(\Phi,g_{ab})$, we 
have 
\begin{equation}
    \sL_\chi Q^{abcd} = \sL_\chi (C^{-1})^{ab} = 0.
    \label{LieQC}
\end{equation}
Since $\sL_\chi \chi_a=0$, we thereby have $\sL_\chi[Q^{abcd}\chi_a \chi_b \chi_c \chi_d]=0, 
\sL_\chi[(C^{-1})^{ab} \chi_a \chi_b]=0$. Thus, the quantities in square brackets are constant along the orbits of $\chi^a$. However, $Q^{abcd}\chi_a \chi_b \chi_c \chi_d = 0 = (C^{-1})^{ab} \chi_a \chi_b$ on $\mathscr{B}$, since $\chi^a$ vanishes there and since $Q^{abcd}, (C^{-1})^{ ab}$ are smooth, in particular finite, at $\mathscr{B}$. 

To show \eqref{Qdef1}, we restrict attention to ${}^R \cN^+$, say, and introduce a second null vector field $N^a$ uniquely defined by \cite[sec. II.B]{Hollands:2024vbe}
\begin{equation}
    \nabla_a \chi_b = 2\kappa N_{[a} \chi_{b]}.
\end{equation}
at points of ${}^R \cN^+$. By uniqueness, we have $\sL_\chi N^a = 0$. We complete $\chi^a, N^a$ to a null tetrad by introducing two additional vectors $s^a_i, i=1,2$ such that $\sL_\chi s^a_i = 0, s_i^a s_{ja} = \delta_{ij}, s^a_i N_a = s^a_i \chi_a = 0$. By \cite[sec. II.B]{Hollands:2024vbe}, the limits of $s_{ia}, N_a \chi_b$ towards $\sB$ exist, 
and $Q^{abcd}N_a \chi_b \chi_c \chi_d$, as well as 
$Q^{abcd}s_{ia} \chi_b \chi_c \chi_d$, are Lie-derived by $\chi^a$. Since the limit of $\chi^a$
at $\sB$ vanishes, these quantities vanish on ${}^R \cN^+$, and by a similar argument for the other parts of $\cN$, on all of $\cN$. Since we already know that $Q^{abcd}\chi_a \chi_b \chi_c \chi_d = 0$ on $\cN$, the proof of $\eqref{Qdef1}$ is complete. 

In view of $p(\xi)=C^{-1}(\xi)Q(\xi)$, the previous results imply that $\chi_a$ is a triple singular direction. 
\qed


\medskip
Note that the proof used neither the detailed structures of $Q_{abcd}, C_{ab}$, nor the fact that $(\Phi, g_{ab})$ is a solution to the EoMs. In this paper, we are interested in the \emph{converse} of prop. \ref{prop1}, i.e., we want to show 
that a horizon defined w.r.t. the propagation cones of our EFT
is actually a Killing horizon. This will require the detailed structure of the propagation cone in our EFT and the EoM's in secs. \ref{sec:thmA}, \ref{sec:thmB} and app. \ref{app:proofA}.

\subsection{Generalized Gaussian Null Coordinates}
\label{sec:gGNC}
Whether or not $g_{ab}$ satisfies any EoMs or other special equations, one can set up a convenient coordinate system in the neighborhood of any (part of a) Killing horizon referred to as \emph{Gaussian null coordinates (GNCs)}. We now review this construction; see, e.g., \cite{Hollands:2006rj} for details. We assume that the Killing vector field $\chi^a$ 
is complete on $\cN$, which we should think of as 
being either ${}^{R,L}\cN^+$ in this subsection. Thus, its flow can be used to set up a diffeomorphism between $\cN$
and $N \times \mathbb{R}$ such that the flow of $\chi^a$ is along the $\mathbb{R}$ Cartesian factor. We refer to $N$
as a \emph{cut} of $\cN$; the orbits of $\chi^a$ must be transversal to this $N$.

First, we define local coordinates $x^A,A=1,2$ covering some subset of $N$. Let $\{F_v\}$ be the 1-parameter flow of isometries associated to $\chi^a$ and define $N(v):=F_v[N]$, i.e., on $\cN$ we have $\chi^a=(\partial_v)^a$. Finally, at each point of $\cN$, define a null vector field $l^a$ satisfying $l^a\chi_a=1$ and $l^am_a=0$ for all $m^a\in TN(v)$. By construction, $l^a=(\partial_u)^a$. Let $u$ be the affine parameter of null geodesics off of each $N(v)$ with initial velocity $l^a$.

Then in $(v,u,x^A)$-coordinates, the metric takes the form \cite{Hollands:2006rj} 
\begin{align}
\label{gGNC}
    g = -f\ud v^2 + 2\ud v\ud u + 2 k_A \ud x^A \ud v + h_{AB} \ud x^A \ud x^B \, ,
\end{align}
and $u$ is a defining coordinate for $\cN$. 

We define the following tensors:
\begin{align}
\label{hdef}
    h_{ab} = h_{AB} (\ud x^A)_a (\ud x^B)_b , \; k_a=k_A (\ud x^A)_a ,
\end{align}
and $q^{ab},p^a{}_b$ are defined by
\begin{align}
\label{pqdef}
    p^a{}_b= (\partial_A)^a(\ud x^A)_b , \; q^{ab} = (h^{-1})^{AB} (\partial_A)^a(\partial_B)^b,
\end{align}
These tensors are independent of the chosen local coordinates $x^A$ on $N$
and thereby globally defined near $\cN$. They satisfy
\begin{subequations}
\begin{align}
    &p^a{}_bk_a=k_b, \; p^a{}_b p^c{}_d h_{ac}=h_{bd}, \; p^a{}_b p^b{}_c = p^a{}_c,\\
    &p^a{}_b = q^{ac} h_{cb}, \; p^a{}_bl^b =p^a{}_b \chi^b=0,\\
    &\sL_\chi k_a = \sL_\chi f = \sL_\chi l^a = \sL_\chi q^{ab} = 
\sL_\chi p^a{}_b = 0,
\end{align}
\end{subequations}
near $\cN$, as well as 
\begin{equation}
k_a|_\cN = f|_\cN = 0, \; p^a{}_b|_\cN = g_{bc}q^{ca}|_\cN.
\end{equation}
The surface gravity $\kappa$ is given by
\be
\kappa = \frac{1}{2} \sL_l f|_{\cN}.
\ee
Conversely, if a metric is defined by a set of coordinates  
$(v,u,x^A)$ as in \eqref{gGNC} with metric components that are independent of $v$, then $u=0$ defines a Killing horizon $\cN$
with the Killing vector field $\chi^a = (\partial_v)^a$, provided $f,k_a=0$ on $\cN$.

The above construction can be repeated in an analogous manner also in the case that $\chi^a$ is tangent to $\cN$ but not necessarily normal, as would have to be the case if $\cN$ was not a null surface. This construction, which we will need to apply to a characteristic surface $\cN$ of the Garding cone below---not a priori known to be null---again yields coordinates $(u,v,x^A)$ such that $g_{ab}$ satisfies \eqref{gGNC}, and tensors $f,k_a,h_{ab},p^a{}_b,q^{ab}$. If the pull-back of $g_{ab}$ to $N$ is a Riemannian metric, as will always be the case for sufficiently small $|\ell|$ below, then so is $h_{ab}$, as for a Killing horizon. However, contrary to the case of a Killing horizon, $f, k_a$ will \emph{not} in general vanish on $\cN$. We will refer to these $(u,v,x^A)$ as \emph{generalized Gaussian null coordinates} (gGNC) in the following.

Below, we will consider real analytic 1-parameter families of 
spacetimes $(\cM(\ell), g_{ab}(\ell,x))$, surfaces $\cN(\ell)$ tangent to a Killing field $\chi^a(\ell)$, cuts $N(\ell)$, as above etc. Then for each $\ell$, we can introduce gGNC based on $N(\ell)$, which cover an open neighborhood of each $\cN(\ell)$. We may then introduce an identification (diffeomorphism) between these neighborhoods by declaring points for different $\ell$ to be equal if their gGNCs coincide. After such an identification ("gauge choice"), we are dealing with only \emph{one} spacetime, in which the locations of $\cN$ and $N$, or $\chi^a$, no longer vary with $\ell$.

It should be noted, however, that if $\cN$ was e.g. a bifurcate Killing horizon (fig. \ref{fig:biKi}) for the $\ell=0$ member of the family of metrics, then the metrics $g_{ab}(\ell, x)$ will \emph{not} be smooth at $\sB$ if $\cN$ is not a Killing horizon (e.g., if it is not null) for $\ell \neq 0$.

Below, it is convenient to use a ``purely angular'' derivative operator $D_a$ associated with gGNCs acting on purely angular tensors $T_{ab \dots c}$, i.e., ones which are projected by $p^a{}_b$. Its definition is
\begin{equation}
\label{Ddef}
    D_b T_{a_1 \dots a_r} = p^c{}{}_b p^{d_1}{}_{a_1} 
    \cdots p^{d_r}{}_{a_r} \nabla_c T_{d_1 \dots d_r}.
\end{equation}
Then we have
\begin{equation}
    0=[\sL_\chi, D_a] =D_a h_{bc} = D_a q^{bc} = D_a p^b{}_c = 0.
\end{equation}
The components of the Riemann tensor $R_{abcd}$, contracted into 
$p^a{}_b, \chi^a,$ or $l^a$ in arbitrary ways, may thereby be expressed 
in terms of contractions of $k_a, f$, their $D_a$ covariant derivatives 
and $p^a{}_b, q^{ab}, \chi^a, l^a$. These expressions are provided in app. \ref{app:GNC} for later use.

\section{Theorem A}
\label{sec:thmA}

\subsection{Assumptions}
For Theorem A, we assume the following: 
    \begin{enumerate}
        \item[(i)] 
        $(\cM(\ell), \Phi(\ell,x),g_{ab}(\ell,x))$ is a 1-parameter family of solutions to the EFT \eqref{lagrangian1} that is jointly real analytic in $(x,\ell)$ under some 1-parameter family of diffeomorphisms identifying each $\cM(\ell)$ with some fixed $\cM$.
        \item[(ii)] For $\ell=0$, the solution has an ordinary, non-degenerate (i.e. $\kappa > 0$)
        Killing horizon, $\cH$, with Killing vector field $K^a$. 
        \item[(iii)] There is a family $\cH(\ell)$ for $|\ell|<\ell_0$ ruled by (projections of) bicharacteristic curves of $Q$ 
        i.e., $\cH(\ell)$ is characteristic (co-tangent to the Garding cone of $(\Phi(\ell), g_{ab}(\ell)$).  
        \item[(iv)] $\cH(\ell)$ is analytically diffeomorphic to $H \times {\mathbb R}$ via a family of analytic diffeomorphisms depending analytically on $\ell$. $H$ is diffeomorphic to $\mathbb{S}^2$. 
        \item[(v)] There are commuting vector fields $t^a(\ell,x),\phi^a(\ell,x)$, jointly analytic in $(x,\ell)$, 
        which Lie-derive \eqref{gPLieder} the solution $(\Phi(\ell,x),g_{ab}(\ell,x))$ for all $\ell$. $\phi^a(\ell)$ generates an isometric action of $U(1)$ and is
        tangent to the $H$ Cartesian factor in (iv). $(H,h_{ab})$ has no Killing vector fields other than $\phi^a$.
        $t^a(\ell)$ is complete on $\cH(\ell)$ and may be chosen to be tangent to the $\mathbb R$ Cartesian factor in (iv).
        
        There is a linear combination $\chi^a(\ell) = t^a(\ell) + \Omega(\ell) \phi^a(\ell)$ such that $\dot{x}^a = c \chi^a$ i.e. it is tangent 
        to the bicharacteristics of $Q$ (see def. \ref{bichardef}), and either $c = 0$ or $c \neq 0$
        on all of $\cH(\ell)$.
        \item[(vi)] $\beta'$ has no zeros. 
    \end{enumerate}
See sec. \ref{sec:setup} for notions such as quartic cone, Garding cone, bicharacteristics etc. referring to the propagation of signals in our EFT.
    
Before we state Theorem A, we comment on these assumptions. Analyticity in (i) is just a technical assumption which is needed since our proof works order-by-order in $\ell$.
If we were to assume that the $\ell=0$ solution describes a stationary, analytic, asymptotically Minkowskian, non-degenerate black hole, then the Killing horizon structure would be a consequence of \cite{Hollands:2006rj}. Here, we prefer not to make such a global assumption. 

Assumption (iii) corresponds to the fact that, for an asymptotically Minkowskian or AdS spacetime, the (future) horizon $\cH$ would be the boundary of 
$\cM \setminus J^-(\sI^+)$, where $J^-$ is defined by the Garding cone \eqref{IJdef}. 
In this setting, (iii) would simply be a consequence of this fact. Again, because our constructions related to Theorem A are entirely local, we prefer not to require the existence of $\sI^+$, and to state our assumptions as local conditions on $\cH$.

Assumption (iv) is required to set up our gGNC system \eqref{gGNC1} in the proof and is of a purely technical nature. As already described in sec. \ref{sec:gGNC}, we will set up the gGNC based on $H(\ell)$---the image of $H$ under the identification in (iii)---
for each $\ell$ based on some choice of Killing vector field $K^a(\ell)$ (compatible with $\phi^a(\ell)$) with complete orbits for each $\ell$. Afterwards, we make an identification of the spacetimes $\cM(\ell)$ based on these gGNCs. In this "gauge", the spacetimes, $K^a, \phi^a, \cH,$ etc. will then be \emph{independent} of $\ell$. Analyticity of the identification is required for the same reason as in (i). 

Assumption (v) is the essential and non-trivial one, in the following sense. We know \cite{Hollands:2022ajj} that for any stationary, analytic, asymptotically flat, non-degenerate black hole with horizon defined relative to $g_{ab}$, and in \emph{any} EFT, there exist $K^a, \phi^a, \omega$ such that 
$K^a = t^a + \omega \phi^a$ is tangent to the null generators of this horizon (different a priori from $\cH$), i.e. $\dot x^a \propto K^a$ for the ordinary characteristic curves (null-geodesics) with tangent $\dot x^a$.
We are effectively claiming that a similar statement remains true with regard to the charactersitics defined by the propagation cone (quartic cone) in our EFT, for which $K^a$ need not be equal to $\chi^a$ and $\Omega$ need not be equal to $\omega$ a priori. 

\subsection{Statement}
\label{sec:statement}
Theorem A is the following statement:

\begin{theorem}
    \label{thmA}
Under the assumptions (i)--(vi), $\cH$ is an ordinary Killing horizon with Killing vector field $\chi^a$  w.r.t. the metric $g_{ab}$ for sufficiently small $|\ell|<\ell_0$.
\end{theorem}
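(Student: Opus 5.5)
We plan to work in generalized Gaussian null coordinates adapted to $\cH$ and $\chi^a$, after using assumptions (i), (iv), (v) to identify the family of spacetimes so that $\cH$, a cut $H$, $\chi^a$ and $\phi^a$ are independent of $\ell$. In these coordinates the metric takes the form \eqref{gGNC} with $v$-independent components $f,k_A,h_{AB}$. The theorem is equivalent to showing $f|_\cH = 0$ and $k_A|_\cH=0$ for all small $\ell$. Then the converse remark in sec. \ref{sec:gGNC} gives that $u=0$ is a Killing horizon with Killing field $\chi^a=\partial_v$. By analyticity in $\ell$ it suffices to prove this order by order: expand $f=\sum_n \ell^n f_n$ and $k_A=\sum_n \ell^n k_{A,n}$, where $f_0|_\cH = k_{A,0}|_\cH=0$ by assumption (ii), and argue by induction on $n$.

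The characteristic condition (iii) says $\xi_a=(\ud u)_a$ lies on the quartic cone along $\cH$, i.e. $Q(\ud u)=0$. Assumption (v) says $\partial Q/\partial\xi_a|_{\xi=\ud u} = c\,\chi^a$. Contracting with $\xi_a$ and using homogeneity ($\xi_a\partial Q/\partial\xi_a = 4Q = 0$) gives $c\,\chi^a(\ud u)_a=0$, which is automatic. The content therefore lies in the components of $\partial Q/\partial \xi_a$ along $l^a$ and $p^a{}_b$. I will expand $Q(\ud u)$ and $\partial Q/\partial\xi_a$ in powers of $\ell$, using the explicit forms $C_{ab}=g_{ab}-\ell^2\nabla_a\nabla_b\beta$, $C^{abcd}\propto \ell^2\beta'\epsilon\epsilon R$ and $P^{ab}$. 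I will write $g^{ab}\xi_a\xi_b = f$ on $u=0$ (up to $k^2$ terms) and express curvature components via the appendix formulas in terms of $f,k_A,D_A$. At leading order, $Q(\ud u)\approx$ const$\cdot (g^{-1}(\ud u,\ud u))^2 + O(\ell^2)$, so $Q=0$ forces $g^{uu}=O(\ell^2)$ at each order. The sharper statement must come from the gradient condition and the equations of motion.

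The induction step runs as follows. Suppose $f_m|_\cH=k_{A,m}|_\cH=0$ for all $m<n$. At order $n$, the $\ell^2 C^{abcd}$ terms in $Q$ and $\partial Q$ multiply quantities that involve lower-order $f,k$ and are therefore evaluated on an honest Killing horizon at lower order. By Proposition \ref{prop1}-type arguments, namely the boost-weight counting in the null tetrad $(\chi,l,s_i)$, these terms have positive boost weight and vanish. The remaining order-$n$ contribution to $\partial Q/\partial\xi_a$ is then linear in the new unknowns: schematically $l^a$-component $\propto f_n$ and angular component $\propto k_{A,n} + (\text{terms with } \beta'\, D^B(\ldots))$. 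Setting the $l$- and $s_i$-components to zero, as (v) demands, gives a linear system. The role of $c=0$ versus $c\neq 0$ is to distinguish whether $\chi$ is merely tangent or also normal. In the $c\ne0$ case I would use $\chi^a(\ud u)_a=0$ together with the $\partial_v$-component to derive $k_{A,n}$ from $\sL_\chi$-invariance. Where the linear system involves angular operators on $H$ (e.g. $D^A k_{A,n}$ or a divergence-type equation), I will invoke the $\chi$-component of the $E_g$ equations (a Raychaudhuri/$R_{vv}$-type equation). Together with compactness of $H\cong\mathbb S^2$ and the absence of Killing fields other than $\phi^a$ (v), this rules out nontrivial homogeneous solutions: a solution $k_{A,n}$ of the resulting equation would correspond to a conformal/Killing vector on $(H,h)$, which must be proportional to $\phi^a$ and then vanishes by the choice of $\Omega$ absorbing it. Assumption (vi), $\beta'\neq0$, enters when the leading nonvanishing coefficient multiplying $f_n$ or $k_{A,n}$ in the gradient condition is proportional to $\beta'$, so that we can divide by it.

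The main obstacle is the order-$n$ bookkeeping. One must show that all terms in $\partial Q/\partial\xi_a$ other than the ones linear in $(f_n,k_{A,n})$ vanish by the induction hypothesis, and that the coefficient operator acting on $(f_n,k_{A,n})$ is injective on $H$. I would attack this first in the tetrad adapted to $\chi,l$, organizing terms by boost weight as in the proof of Proposition \ref{prop1}. Only positive-boost-weight combinations of $(f,k,D k)$ can survive at lowest order, which should reduce the order-$n$ condition to a small elliptic system on $H$. Once injectivity is established, $f|_\cH=k|_\cH=0$ holds to all orders, hence for $|\ell|<\ell_0$ by analyticity. Constancy of $\kappa$ then follows from the zeroth law for weakly coupled solutions quoted in sec. \ref{sec:KH}.
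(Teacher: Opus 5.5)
\textbf{There is a genuine gap.} Your induction step assumes that, once lower orders vanish, the order-$n$ conditions are \emph{linear} in $(f_n,k_{A,n})$, with an operator on $H$ whose injectivity you can establish. That is not the case, and this is the central difficulty.

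At $\ell=0$ one has $Q(\xi)=-\tfrac{1}{4!}(g^{ab}\xi_a\xi_b)^2$, so the co-normal of a Killing horizon is a double root. On a Killing horizon it remains a singular, indeed triple, direction for $\ell\neq0$ (prop.~\ref{prop1}, cor.~\ref{cor:1}). The linearization of the characteristic conditions about a Killing horizon is therefore trivial. Concretely:
\begin{itemize}
\item $(\ud u)_a\partial Q/\partial\xi_a=4Q$ vanishes identically by homogeneity.
\item The $(\ud v)_a$-component of $\partial Q/\partial\xi_a$ merely defines $c$, which (v) does not force to vanish.
\item The two genuine conditions are, schematically, $(f+k^2)^2\simeq\tfrac14\ell^4(\beta')^2\|\tfrac12\sL_k h\,\sL_l f+k\otimes D\sL_l f\|_h^2$ and $(f+k^2)\,k^a\simeq c\,(\Omega-\omega)\phi^a$ (see \eqref{Q1}, \eqref{Q2}). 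Here $K^a=t^a+\omega\phi^a$ is the field used for the gGNCs, so the right side of the second condition vanishes in your $\chi$-adapted gauge.
\end{itemize}
Both conditions are at least quadratic in the deviation $(f,k_a)$. So if $f_m=k_m=0$ for $m<n$, the new unknowns do not occur at order $\ell^n$ at all. There is no linear system to invert, and boost-weight counting only disposes of old terms. In particular, $Q(\ud u)=0$, which you set aside as giving only $g^{uu}=O(\ell^2)$, is in fact the decisive equation.

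\textbf{What is missing} is a way to extract information from these product relations, which the paper does as follows.
\begin{itemize}
\item It runs a staggered induction: $f$ is known through order $n+3$ and $k_a$ through order $n+1$, because $f_{(n+4)}$ pairs with $k_{(n+2)}$.
\item It reads off $Q(\ud u)=0$ at order $\ell^{2n+6}$ and the angular gradient at order $\ell^{2n+4}$, giving \eqref{Q1n} and \eqref{Q2n}.
\item A case distinction on $c$, plus the ODE/mean-value argument on $H\cong\mathbb S^2$ (lem.~\ref{lem7}), yields $\zeta=0$.
\item After that, \eqref{Q1n} equates $(\beta'_{(0)})^2\|\sL_{k_{(n+2)}}h_{(0)}\|^2$ with minus a positive multiple of $(\beta'_{(0)}\sL_{k_{(n+2)}}\Phi_{(0)})^2$.
\end{itemize}
This sign-definiteness, with (vi) allowing division by $\beta'$, makes $k_{(n+2)}$ a Killing field of $h_{(0)}$, hence a multiple of $\phi^a$. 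The field equations enter only through lem.~\ref{lem2} ($K^al^bK^cp^d{}_eR_{abcd}=O(k,f)$), not through an elliptic system. An elliptic system is the route of Theorem B, and it needs $\iota$ and $V\ge0$, which you do not have here.

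\textbf{The residual multiple of $\phi^a$ cannot be ``absorbed by the choice of $\Omega$''} in coordinates rigidly tied to $\chi^a$. $\Omega$ is fixed by (v), and in your gauge \eqref{Q1n}, \eqref{Q2n} are solved by $k^a_{(n+2)}$ equal to any constant multiple of $\phi^a$ (with $(f+k^2)_{(n+4)}=0$). The equations simply do not see this direction. The paper therefore builds its gGNCs on $K^a=t^a+\Omega_{(0)}\phi^a$. At each step it shifts $K^a\to K^a-\ell^{n+2}\omega_{(n+2)}\phi^a$ and rebuilds the coordinates (lem.~\ref{lem8}). It identifies the resulting Killing field with $\chi^a$ only at the end.
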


\emph{Remarks:} (1) We have already mentioned in sec. \ref{sec:setup} that the surface gravity $\kappa$ of a Killing horizon in our EFT must be constant, as must be $\Phi$. By general results about Killing horizons, $\chi^a$ is furthermore a repeated principal null direction of the Weyl tensor\footnote{By the results of \cite[Sec. 4.4]{Reall:2021voz}, it is at the same time a principal null direction of the Weyl-like tensor \eqref{Wdef}.} of $g_{ab}$, see e.g., \cite[sec. 12.5]{Waldbook}.

(2) In the following all statements are valid for sufficiently small $|\ell|<\ell_0$, where $\ell_0$
is some positive number, and we will no longer repeat this.

\medskip

We can now apply the results of \cite[Sec. 4.7]{Reall:2021voz} or prop. \ref{prop1} to conclude:
\begin{corollary}
\label{cor:1}
    $\chi_a$ is a triple direction (see sec. \ref{sec:setup}) of the characteristic cone on $\cH$, and
    \begin{align}
        \left.\frac{\partial^2 Q}{\partial\xi_a\partial\xi_b}\right|_{\xi=\chi} \propto \chi^a\chi^b \, .
    \end{align}
In particular, the two sheets of the quartic cone and that of the quadratic cone all touch at points of $\cH$ in the co-tangent vector $\chi_a$. See fig. \ref{fig:1} for an illustration.
\end{corollary}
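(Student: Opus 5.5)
The plan is to use Theorem \ref{thmA} to reduce the corollary to a statement about Killing horizons, and then run the argument of prop.~\ref{prop1} with a boost-weight refinement so that it also controls the second derivative of $Q$.

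\textbf{Step 1: a bifurcate horizon with $\kappa>0$.} By Theorem \ref{thmA}, for $|\ell|<\ell_0$ the surface $\cH$ is a Killing horizon of $g_{ab}$ with Killing field $\chi^a$. By \eqref{gPLieder}, $\chi^a$ also Lie-derives $\Phi$. Its surface gravity $\kappa(\ell)=\frac{1}{2}\sL_l f|_{\cH}$ depends analytically on $\ell$ by assumptions (i) and (iv). At $\ell=0$ it is positive by (ii), so after shrinking $\ell_0$ we have $\kappa>0$. By the zeroth law recalled in sec.~\ref{sec:KH}, $\kappa$ is constant. With real analyticity (i), the Racz--Wald type extension then lets us regard $\cH$, at least locally near any cut, as one branch ${}^R\cN^+$ of a non-degenerate bifurcate Killing horizon, with $\chi^a$ Lie-deriving the analytically extended solution. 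Prop.~\ref{prop1} now applies. It gives $Q(\chi)=0=C^{-1}(\chi)$ and $\partial Q/\partial\xi_a|_{\xi=\chi}=0$, so $\chi_a$ is a singular direction. If I want to avoid the extension, I would instead cite \cite[Sec.~4.7]{Reall:2021voz}, which treats non-degenerate Killing horizons directly.

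\textbf{Step 2: structure of the second derivative.} We have $\partial^2 Q/\partial\xi_a\partial\xi_b|_{\xi=\chi}\propto Q^{abcd}\chi_c\chi_d$. I would expand this in the $\chi$-invariant null tetrad $(\chi^a,N^a,s_i^a)$ from the proof of prop.~\ref{prop1}. The component $Q^{abcd}\chi_a\chi_b\chi_c\chi_d$ and the components with two $\chi$'s plus one $N$ or one $s_i$ are already zero by Step 1. The remaining components are $Q^{abcd}s_{ia}s_{jb}\chi_c\chi_d$ and $Q^{abcd}N_a s_{ib}\chi_c\chi_d$. Each is smooth, Lie-derived by $\chi^a$, and has positive boost weight: each $\chi_a$ counts $+1$, $N_a$ counts $-1$, $s_{ia}$ counts $0$. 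In Kruskal-type coordinates near $\sB$, exactly as in the proof of prop.~\ref{prop1}, such a quantity is proportional to a positive power of the vanishing factor of $\chi^a$ at $\sB$. Being constant along orbits, it must vanish on ${}^R\cN^+$.

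Only the $N_aN_b$ component survives. Since $\chi^aN_a=1$ and $\chi^a s_{ia}=0$, this says that $Q^{abcd}\chi_c\chi_d\propto\chi^a\chi^b$. That is the claimed formula.

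\textbf{Step 3: triple direction and touching cones.} Differentiating $p=C^{-1}Q$ twice and evaluating at $\xi=\chi$ gives
\begin{equation*}
\frac{\partial^2 p}{\partial\xi_a\partial\xi_b}\bigg|_{\xi=\chi}=C^{-1}(\chi)\,\frac{\partial^2Q}{\partial\xi_a\partial\xi_b}+2\,\frac{\partial C^{-1}}{\partial\xi_{(a}}\,\frac{\partial Q}{\partial\xi_{b)}}+Q(\chi)\,\frac{\partial^2 C^{-1}}{\partial\xi_a\partial\xi_b}.
\end{equation*}
Each term vanishes by Step 1, so $\chi_a$ is a singular direction with vanishing Hessian of $p$, i.e.\ a triple direction in the sense of def.~\ref{def:II.3}. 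Since $\chi_a$ lies on both the quartic and the quadratic cone and is singular for both, the sheets touch there.

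\textbf{Main obstacle.} The delicate point is the boost-weight step in Step 2. It needs the tetrad components to extend smoothly to $\sB$, which is where $\kappa>0$ and the analytic bifurcate extension in Step 1 are essential. If that extension is problematic, I would instead work directly in GNC on $\cH$. There I would use $f=k_a=0$ at $u=0$ and $v$-independence, and read off the vanishing of these components from the explicit Riemann-tensor expressions in app.~\ref{app:GNC}.
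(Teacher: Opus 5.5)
Your proposal is correct and follows the paper's route. The paper's proof consists of feeding the Killing-horizon conclusion of Theorem~\ref{thmA} into prop.~\ref{prop1} (or \cite[Sec.~4.7]{Reall:2021voz}). Your Racz--Wald extension and boost-weight analysis of $Q^{abcd}\chi_c\chi_d$ make explicit two things the paper leaves implicit: prop.~\ref{prop1} presupposes a bifurcation surface, and it does not by itself give the rank-one Hessian of $Q$.

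One small slip in Step 3: $\chi_a$ is singular for $p$ and for $Q$, but not for the quadratic cone, whose gradient $2(C^{-1})^{ab}\chi_b$ is nonzero (it is proportional to $\chi^a$ on $\cH$). The touching of the sheets should therefore be attributed to the triple-direction property, i.e.\ all three sheets sharing the tangent plane $\{\xi_a:\chi^a\xi_a=0\}$ at $\chi_a$, not to both cones being singular.
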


\subsection{Outline of Proof}
\label{sec:thmAproofoutline}
For the proof, we construct gGNCs around $\cH$ just as in sec. \ref{sec:KH}. In the present case, $\cH$ is not known a priori to be null w.r.t. $g_{ab}$, but the construction can nevertheless 
be carried just the same. To show that $\cH$ is a Killing horizon, we must show that $f=k_a=0$ on $\cH$. 

We expand all tensor fields or constants as power series in $\ell$, as e.g., in 
\begin{subequations}
\begin{align}
\label{power_series}
    k_a =& \sum_{j=0}^\infty \ell^j k_{(j)a},\\
    f =& \sum_{j=0}^\infty \ell^j f_{(j)},\\
    \Omega =& \sum_{j=0}^\infty \ell^j \Omega_{(j)},
\end{align}
\end{subequations}
etc. These expansions are  absolutely convergent by our analyticity assumptions. In our proof, we set up our gGNC system as in \eqref{gGNC} at first not w.r.t. $\chi^a = t^a + \Omega\phi^a$, but initially instead for 
\be
K^a = t^a + \Omega_{(0)} \phi^a . 
\ee
In the course of our inductive proof provided in detail in app. \ref{app:proofA}, we will update the definition of $K^a$ at each subsequent order in $\ell$ taking into account corrections to $\Omega_{(0)}$. By assumption, $K^a$ Lie-derives
$(\Phi, g_{ab})$. Furthermore, by assumption, $\cH$ is a Killing horizon for the zeroth order metric $g_{(0)ab}$ w.r.t. $K^a$, and as a consequence, we get that 
\be
f_{(0)} |_{\cH} = k_{(0)a}|_{\cH} = 0.
\ee
The idea of the proof is to verify the analogous statement 
for $f_{(j)}, k_{(j)a}$ for increasing $j=1,2, \dots$. For this, we 
must use the EoMs of our EFT, as well as our conditions for the characteristic surface $\cH$ in our EFT. 

These conditions are best analyzed by decomposing the metric, and its various expansion orders in $\ell$, as in \eqref{gGNC}, 
with corresponding tensors $p^a{}_b, q^{ab}, l^a, h_{ab}, k^a = q^{ab}k_b$ and functions (coordinates) $u,v$, all of which are Lie-derived by $K^a$ and by $\phi^a$, see sec. \ref{sec:KH} and app. \ref{app:GNC}. 
This analysis is rather involved and the full technical details are therefore moved to app. \ref{app:proofA}. 

A series of lemmas first leads to the conclusion that $k_{(1)}^a|_{\cH} = \omega_{(1)} \phi^a|_{\cH}$, where $\omega_{(1)}$ is constant on $\cH$, i.e. $k_{(1)}^a$ is an infinitesimal rotation on $\cH$. It is also shown that 
\be
f_{(1)}|_{\cH} = f_{(2)}|_{\cH} = 0,
\ee
and that $(f+k^2)_{(3)}|_{\cH}=0$.
Based on these insights, we redefine 
\be
K^a \to K^a - \ell \omega_{(1)} \phi^a, 
\ee
which still Lie-derives $(\Phi,g_{ab})$. Having made this redefinition,  we set up our gGNC system as in \eqref{gGNC} w.r.t. this new $K^a$. The corresponding new $k_a, f$ are seen to satisfy all previous properties, and additionally 
\be
f_{(3)}|_{\cH} = k_{(1)a}|_{\cH} = 0. 
\ee
The remainder of the proof repeats the arguments for increasing orders in $\ell$, assuming inductively that for some $n\ge 0$
\begin{subequations}
\begin{align}
    f_{(j)}|_{\cH} &=0 \text{ for all } j\leq n+3 \, , \\
    k_{(j)a}|_{\cH} &= 0 \text{ for all } j\leq n+1 \, .
\end{align}
\end{subequations}
By similar arguments as in the base case, these are seen to imply
that $(f+k^2)_{(n+4)}|_{\cH}=0$ and that $k_{(n+2)}^a|_{\cH} = \omega_{(n+2)} \phi^a|_{\cH}$, where $\omega_{(n+2)}$ is constant on $\cH$, i.e., it is proportional to a rotation.

Similarly as in the base case, we redefine 
\be
K^a \to K^a - \ell^{n+2} \omega_{(n+2)} \phi^a, 
\ee
which still Lie-derives $(\Phi,g_{ab})$. Having made this redefinition,  we set up our gGNC system as in \eqref{gGNC} w.r.t. this new $K^a$. The corresponding new $k_a, f$ are seen to satisfy all the previous properties, and additionally 
\be
f_{(n+4)}|_{\cH} = k_{(n+2)a}|_{\cH} = 0, 
\ee
thereby closing the induction loop. 

After our inductive re-adjustments, the final Killing field $K^a$ and gGNCs are by construction such that $K_a|_{\cH} = (\ud u)_a |_{\cH}$ at points of $\cH$, i.e., it is characteristic, and 
$\cH$ is a Killing horizon w.r.t. $K^a$. Strictly speaking, 
we still need to show that the series defining $K^a$ converges. This follows from analyticity in a fairly straightforward manner, but we will not go through the somewhat tedious details here.
\qed

\section{Theorem B}
\label{sec:thmB}

\subsection{Assumptions}
\label{sec:BAss}
For the formulation of Theorem B, we introduce the \emph{domain of outer communication} $\sD = I^+(\sI^-) \cap I^-(\sI^+)$ w.r.t. the propagation cones of our EFT (sec. \ref{sec:propcones}). We then define the future and past horizons as 
$\sH^\pm = \partial \sD \cap I^\pm(\sI^\mp)$, as usual, but w.r.t. the propagation cones of our EFT.

We assume the following: 
    \begin{enumerate}
        \item[(i')] $(\cM(\ell), \Phi(\ell,x),g_{ab}(\ell,x))$ is a 1-parameter family of solutions to the EFT \eqref{lagrangian1} that is jointly real analytic in $(x,\ell)$ under some 1-parameter family of diffeomorphisms identifying each $\cM(\ell)$ with some fixed $\cM$.
        \item[(ii')] Each $\cM(\ell)$ is asymptotically Minkowskian and there is a vector field $t^a(\ell)$ that Lie-derives $(\Phi(\ell), g_{ab}(\ell))$, that is jointly analytic in $(x,\ell)$, that has complete orbits, and that is timelike in the asymptotic region. We require that $\sM(\ell)$ includes an open neighborhood of $\cH(\ell)$.
        \item[(iii')] 
For $\ell = 0$, $\sH^+$ forms an non-degenerate $(\kappa >0)$ Killing horizon. The pull-back of $g_{ab}(\ell)$ to $\sH^+(\ell)$ has an isometry $\iota$ fixing some cut $H(\ell)$, such that\footnote{$\iota$ could depend discontinuously on $\ell$.} $\iota_* t^a(\ell) = -t^a(\ell)$, $\iota^2=id$. The complex structure of the cut $H(\ell)$ satisfies 
        \be
        \label{ioch}
\iota^* J^a{}_b = -J^a{}_b.
        \ee
        \item[(iv')] 
For the $\ell=0$ solution, we have
\be
V(\Phi) \ge 0
\ee
on $\cH$.
        \item[(v')] 
        The restriction of $t^a$ tangent to $\sB$ in the $\ell=0$ solution is not vanishing identically, i.e. the black hole is rotating.
    \end{enumerate}

As we explain in more detail in app. \ref{app:iota}, 
(iii') states in a covariant way that $\iota$ is a kind of "$t$-$\varphi$" reflection isometry on $\sH^+$, without actually introducing\footnote{Note that we do not know at this stage that there is a Killing field "$\phi^a = (\partial_\varphi)^a$".} coordinates $t$ or $\varphi$.
Note that we only require the symmetry $\iota$ locally, i.e., for the pull-back to $\cH$. This geometric feature is present in Kerr, but the reflection $\iota$ on $\cH$ is different from the global $t$-$\varphi$ reflection isometry of Kerr
which exchanges $\sH^\pm$. Assuming such a global symmetry would enable a "purely kinematical" proof similar to that of prop. \ref{prop1}.


To picture the hypothetical case that $\cH$ might not be a Killing horizon,
we may imagine that $\sH^+$ as defined by the Garding cone is a time-like surface with respect to $g_{ab}$ tangent to a boost-like Killing field $K^a$ similar to fig. \ref{fig:ABhorizons}. The future pointing Garding cones must then tilt inwards relative to those of $g_{ab}$ everywhere along $\sH^+$, so that $\sH^-=\emptyset$. We note that there is an evident asymmetry between past and future, which is already somewhat at odds with thermodynamic intuition.
      
Apart from the global assumptions such as the reflection isometry and asymptotic flatness, or topology of $\cH$, the main differences between these assumptions for Theorem B and those formulated for Theorem A in sec. \ref{sec:thmA} are: We do not require, a priori, any rotational vector field  that Lie-derives $(\Phi, g_{ab})$, and we do not require that the bicharacteristics are tangent to a vector field that Lie-derives $(\Phi, g_{ab})$. 
On the other hand, we need the stability requirement $V\ge 0$.

If we were to assume instead of (iv') that the $\ell=0$ black hole solution is \emph{non-rotating}, then the solution would be static \cite{Sudarsky:1993kh}. As shown in the proofs of the uniqueness of static black hole solutions for a wide class of Einstein-scalar field theories, see e.g., \cite{Gibbons:2002av,bunting2,rogatko2}, the $\ell=0$ solution $(\cM, \Phi, g_{ab})$ would in fact have to be \emph{spherically symmetric}. An expansion argument in $\ell$ would then show that this remains true for $\ell \neq 0$ \cite[sec. 4]{Hollands:2022ajj}. We will not treat explicitly this comparatively easier case, but see e.g., \cite{Benkel:2018qmh} for corresponding arguments, leading to the same conclusion as our Theorem B in the spherically symmetric case.

\subsection{Statement}
\label{sec:statement}
Theorem B is the following statement:

\begin{theorem}
    \label{thmB}
Under the assumptions (i')--(v'), $\sH^+$ is an  compactly generated Killing horizon w.r.t. the metric $g_{ab}$.
\end{theorem}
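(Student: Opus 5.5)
We will again work order by order in $\ell$, using generalized Gaussian null coordinates built on $\sH^+(\ell)$. The ingredients that replace the Killing-field assumptions of Theorem A are the reflection $\iota$ and the sign condition $V\ge0$.

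\emph{Step 1: coordinates and reduction.} We first construct the analogue of the Killing field $K^a$ of Theorem A. By (iii') and the rigidity results of \cite{Hollands:2022ajj}, the $\ell=0$ solution is stationary and rotating by (v'). It therefore has a rotational Killing field $\phi^a_{(0)}$ and a horizon generator $K^a_{(0)}=t^a+\Omega_{(0)}\phi^a_{(0)}$. Since $t^a(\ell)$ Lie-derives the solution for every $\ell$ and is tangent to $\sH^+(\ell)$, we set up gGNCs $(u,v,x^A)$ adapted to $\sH^+(\ell)$ and a cut $H(\ell)$ fixed by $\iota$, and identify the spacetimes as in sec. \ref{sec:gGNC}. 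Expanding in $\ell$, it suffices to show inductively that $f_{(j)}|_{\sH^+}=0$ and $k_{(j)a}|_{\sH^+}=0$ up to a rotation, correcting $\Omega$ at each order as in the outline of Theorem A. We also need to show that the rotational symmetry persists at each order, using the $\ell$-expansion argument of \cite[sec. 4]{Hollands:2022ajj}.

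\emph{Step 2: use of the reflection.} The pull-back of $g_{ab}$ to $\sH^+$ is $-f\,\ud v^2+2k_A\ud x^A\ud v+h_{AB}\ud x^A\ud x^B$ at $u=0$. The isometry $\iota$ fixes $H$ and flips $t^a$, and by \eqref{ioch} it reverses the orientation of $H$. We will show that these constraints force the mixed components $k_A$ on $\sH^+$ to be odd under $\iota|_H$ combined with the time flip. Together with stationarity, this allows $k_a$ only the form of a divergence-free rotational field plus a part that is $\iota$-forbidden. The intended conclusion is that, modulo shifting $\Omega$, $k_{(j)}^a|_{\sH^+}$ is co-exact on $H\simeq\mathbb S^2$. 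This uses the complex structure to convert reflection parity into a Hodge-type decomposition of $k_A$.

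\emph{Step 3: characteristic condition and EoMs.} Next we impose that $\ud u$ lies on the innermost quartic sheet, $Q(\ud u)=0$. At order $\ell^j$ this gives a linear relation between $f_{(j)}+k^2_{(j)}$ and lower-order data involving $\beta'$, curvature and $\nabla\nabla\beta$. These we evaluate using the GNC Riemann components (app. \ref{app:GNC}) and the induction hypothesis. We then combine this with the $uu$- and $uA$-components of $E_g^{ab}$ restricted to $\sH^+$, i.e., the Raychaudhuri- and Damour-type equations. Their $\ell$-corrections, by the induction hypothesis, vanish or reduce to terms in $k_{(j)}$ and $f_{(j)}$. Contracting the resulting elliptic equation on $H$ with $k_{(j)}$ and integrating over the compact cut $H$ should produce a sum of squares, with a potential term of sign fixed by $V\ge0$ from (iv'), and with $\kappa>0$ from (iii'). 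This forces the co-exact part of $k_{(j)}$ to vanish and $f_{(j)}|_{\sH^+}=0$.

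\emph{Main obstacle.} The hardest step will be Step 3's positivity argument. Its difficulty is that, without assuming the bicharacteristics are tangent to a Killing field, the $\beta'$ Gauss--Bonnet contributions produce indefinite cross terms. I would first try to show that $\iota$-parity kills these cross terms after integration over $H$, since they are odd under $J\to-J$. Only the $V$-dependent and $\kappa$-dependent terms would then survive, with a definite sign. Finally, convergence of the corrected $\Omega$-series follows from analyticity as in Theorem A, and compactness of $H$ gives that the horizon is compactly generated.
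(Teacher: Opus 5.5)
\textbf{There is a genuine gap in Step 3.}

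Your Steps 1 and 2 are broadly in line with the paper. In the paper, $\phi^a(\ell)$ comes from rigidity at $\ell=0$ and is extended by analyticity. The reflection $\iota$, together with axisymmetry, then gives $k^a_{(j)}=\omega_{(j)}\phi^a$ on $\cH$ with $\omega_{(j)}=\omega_{(j)}(\vartheta)$. For axisymmetric fields on $\mathbb S^2$ this is the same as your ``co-exact''. The qualifier ``modulo shifting $\Omega$'' does nothing there, since constant rotations are themselves co-exact.

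The problem is Step 3. The linear equation for $k_{(j)}$ from the $K^a p^b{}_c$-component of $E_g^{ab}$ (the paper's \eqref{ddk}) always admits $k^a=\omega\phi^a$ with \emph{constant} $\omega$. This is because $K^a\to K^a-\ell^j\omega\phi^a$ only relabels the Killing combination on which the gGNCs are built, and it shifts $k^a_{(j)}$ by $\omega\phi^a$.

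Two consequences follow:
\begin{enumerate}
\item Any quadratic form you get by pairing this equation with $k_{(j)}$ and integrating over $H$ vanishes on a nonzero co-exact field. So it cannot be positive definite, whatever the sign of $V$.
\item Your intended conclusion, that the co-exact part of $k_{(j)}$ vanishes, is false in general: $\omega_{(j)}$ is a gauge quantity that need not vanish before the correction of $K^a$.
\end{enumerate}
Separately, the zeroth-order potential contains $2\kappa\bar K_{(0)ab}$, the transverse extrinsic curvature of the cut, and non-symmetric terms built from $\sL_l k_{(0)a}$. Neither has a sign, so $V\ge0$ and $\kappa>0$ give no definiteness.

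The Gauss--Bonnet cross terms you flag as the main obstacle are not the issue. In the EoM they drop out by the induction hypothesis, since $p^a{}_bK^cT_{ac}$ vanishes on a Killing horizon for any tensor built locally and covariantly from $(\Phi,g_{ab})$. In the characteristic condition they vanish once $k$ is a constant rotation.

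\textbf{The missing idea.} Aim to show that $k_{(j)}$ is a \emph{constant} multiple of $\phi^a$, and then absorb that constant into $K^a$. The paper transvects \eqref{ddk} with $\phi^a$. This gives the Sturm--Liouville equation \eqref{yeq} in $\xi=\cos\vartheta$ for $y=k_{(j)a}\phi^a$. Solutions vanishing at a pole form a one-dimensional space, and $y_0=\|\phi\|^2$ (constant $\omega$) is one of them, so $\omega_{(j)}$ is constant.

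If you want an energy argument, it must be a ground-state (Picone) one. Writing $y=\omega y_0$, one has $y_0\,[(\Psi^{-1}y')'+Wy]=(\Psi^{-1}y_0^2\,\omega')'$, with $'=\ud/\ud\xi$. So all potential terms cancel, and boundedness of $\omega$ at the poles, where $y_0=0$, forces $\omega'=0$. No sign of $V$ is used.

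\textbf{Where $f_{(j)}$ actually comes from.} $f_{(j)}|_{\cH}$ is not obtained from the EoM, which contains only $D_a\sL_l f$. It comes from the characteristic condition \eqref{Q1}. Its leading coefficient at each order is \emph{quadratic}, not linear:
$(f+k^2)_{(m)}[(f+k^2)_{(m)}+\kappa\langle k_{(m-2)},D\beta_{(0)}\rangle]=\tfrac14(\beta'_{(0)})^2\kappa^2\|\sL_{k_{(m-2)}}h_{(0)}\|^2$, cf.\ \eqref{Q11}.
It yields $(f+k^2)_{(m)}=0$ only after $k_{(m-2)}$ has been shown to be a constant rotation and removed.

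\textbf{Role of $V\ge0$.} In the paper it enters through the dominant energy condition and Hawking's topology theorem. That is what guarantees $H\cong\mathbb S^2$, which your Step 2 already uses implicitly.
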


{\it Remark:} 
(1) Again, we also have cor. \ref{cor:1}, and 
the proof also shows that the symmetry group of $(\Phi,g_{ab})$ is (at least)
${\mathbb R} \times U(1)$.

(2) If we make the assumptions  fully symmetric in past and future, then we analogously get a \emph{bifurcate} Killing horizon.

(3) By (iv'), 
the scalar field stress energy tensor of the $\ell=0$ theory satisfies the dominant energy condition on $\cH$. By Hawking's topology theorem 
\cite{hawking}, therefore $\sB \cong {\mathbb S}^2$. 

\subsection{Proof}
\label{sec:thmbproof}

We argue in a broadly similar way as in the proof of Theorem A. First, we get a second continuous symmetry of $(\Phi,g_{ab})$.
For $\ell = 0$, $t^a$ is tangent to $\sB = \partial \sH^+$, i.e., its restriction to $\sB$ must be $\Omega \phi^a$, where $\phi^a$ has $2\pi$-periodic orbits. By the arguments of \cite{Hollands:2022ajj}, this $\phi^a$ extends in view of (i')---(v'), to a rotational vector field $\phi^a(\ell)$ Lie-deriving $(\Phi(\ell),g_{ab}(\ell))$ on the entire spacetime for $|\ell|<\ell_0$ and a sufficiently small but positive $\ell_0$.\footnote{Ref. \cite{Hollands:2022ajj} considered the horizons defined by the lightcones of the metric $g_{ab}$, not the Garding cones. Of course we are free also in our situation to consider the metric lightcones to make the argument, since the Killing field extends to as a single-valued object (due to $\pi_1(\sD) = 0$ \cite{Friedman:1993ty}) to all of $\sD$ by analyticity.}. 

Next, we introduce gGNCs \eqref{gGNC1} around $\cH(\ell)$,  based on the cut  $H(\ell)$ as in (iii'), initially taking $K^a = t^a + \Omega_{(0)} \phi^a$. [As in the proof of theorem A, this $K^a$ is subsequently updated.] Again, we identify points of $\sM(\ell)$ for different values of $\ell$ if these gGNCs coincide. Thereby we get the corresponding tensor fields $l^a, K^a, \phi^a, p^a{}_b, k_a, q^{ab}, h_{ab}$, see \eqref{pqdef} in sec. \ref{sec:gGNC}, and  app. \ref{app:GNC} for details. After our identification, $K^a, p^a{}_b, t^a, \phi^a, H, \sH^+$ are independent of $\ell$. 

As in the proof of Theorem A, we next make power series expansions of all quantities as in, e.g. \eqref{power_series}, with the aim of showing $f=k_a=0$ on $\cH$---this implies that $K^a$ is tangent and normal to $\cH$, i.e., it is a Killing horizon. 

Since for $\ell = 0$, we have an ordinary Einstein-scalar field theory, the results by \cite{Hollands:2006rj} show that $\cH$ is a Killing horizon w.r.t. $K^a$ with constant surface gravity $\kappa = \frac{1}{2} \sL_l f_{(0)}$ and constant $\Phi_{(0)}$. Thus, we have $D_a \Phi^{(0)} = f_{(0)} = k_{(0)a} = K_{(0)ab} = 0$ on $\cH$. 

Analogously to the proof of Theorem A in app. \ref{app:proofA} we can also evaluate the condition $Q^{abcd} (\ud u)_a (\ud u)_b (\ud u)_c (\ud u)_d=0$, expressing that $\cH$ is characteristic, 
at subsequent orders in $\ell$.
At order $\ell^2$, \eqref{Q1} first shows that $(f + k^2)_{(1)}= 0$ on $\cH$. Since we already know that 
$f_{(0)}=k_{(0)a}=0$ on $\cH$, we get $f_{(1)}=0$ on $\cH$.

Consider a change $v \to v + \lambda(x^A)$, i.e., we change the cut $H$ of $\cH$ by an angle-dependent amount. The gGNC associated with the new cut is such that
\begin{equation}
    \sL_l k_{(0)a} \to \sL_l k_{(0)a} -2\kappa D_a \lambda
\end{equation}
There is a real analytic solution $\lambda$ to
\begin{equation}
    D^a D_a \lambda = \frac{1}{2\kappa} D^a \sL_l k_{(0)a} \quad 
    \text{on $H$,}
\end{equation}
by standard results about elliptic partial differential equations, see e.g., \cite{evans}. Here and in the rest of the proof, $D_a$ temporarily denotes the angular
covariant derivative w.r.t. $h_{(0)ab}$. In the new real analytic gGNC, we have
\begin{equation}
\label{kdiv}
    D^a \sL_l k_{(0)a} = 0 
    \, 
    \Longrightarrow 
    \,
    \sL_l k_{(0)a} = \epsilon_{ab} D^b F_{(1)} 
\end{equation}
for some $F_{(1)}$ for which $\sL_\phi F_{(1)} = \sL_K F_{(1)} = 0$ on $\cH$. This implies that
\begin{equation}
\label{kdiv2}
(D_a \sL_l k_{(0)b}) \phi^a \phi^b = 0
\end{equation}
on $\cH$. We also know that $k^a_{(1)} = \omega_{(1)} \phi^a$ for some function $\omega_{(1)}$ since we have $\iota$ in (iii'), see app. \ref{app:iota}.

Our aim is to show that $\omega_{(1)}$ is constant on $\cH$.
For this, we first transvect the EoM \eqref{EoMg} of our EFT for the metric into $K^a p^b{}_c$. Evaluating the result at order $\ell$ yields
\begin{align}
\label{R1vA}
R_{(1)ab} K^a p^b{}_c = \frac{1}{2} k_{(1)c} V(\Phi_{(0)}).
\end{align}
For the Ricci-tensor component on the left side we employ the expressions in app. \ref{app:GNC} at order $\ell$, which yields
\begin{widetext}
\begin{align}
\begin{split}
\label{ddk}
0=&D^b D_{[a} k_{(1)b]} - \left[2\kappa \bar K_{(0)ab} +\frac{1}{2} h_{(0)ab}V(\Phi_{(0)}) + \frac{1}{2} \sL_l k_{(0)a} \sL_l k_{(0)b} \right] k_{(1)}^b \\
&+ \sL_l k_{(0)[a} D_{b]} k_{(1)}^b
- \frac{1}{2} D_a \sL_l f_{(1)}
+  \left[ 
D_b \sL_l k_{(0)a} - \frac{1}{2} D_a \sL_l k_{(0)b}
\right]k_{(1)}^b 
\end{split}
\end{align}
\end{widetext}
on $\cH$. Here we use the shorthand $k^a = q^{ab}k_b$, disregarding temporarily our usual 
convention that indices are 
raised with $g^{ab}$. 

Next, we transvect \eqref{ddk} with $\phi^a$ using 
\eqref{kdiv2} and $k^a_{(1)} = \omega_{(1)} \phi^a$.
In terms of the uniformizing angular coordinates $(\vartheta,\varphi)$
on $H \cong {\mathbb S}^2$ [see eq. \eqref{gittel}], setting $\xi=\cos \vartheta$ and $y=k_{(1)a} \phi^a$, we obtain the ordinary differential equation
\begin{equation}
\label{yeq}
\left[ 
\frac{\ud}{\ud \xi} \left(\Psi^{-1} \frac{\ud}{\ud \xi} \right) + W
\right] y = 0,
\end{equation}
where 
\begin{equation}
W = \frac{
4\kappa \langle  \phi \otimes  \phi, \bar K_{(0)} \rangle + V(\Phi_{(0)}) \| \phi \|^2 +  \langle \phi, \sL_l k_{(0)} \rangle^2
}{\|\phi\|^2}    
\end{equation}
Here we use the compact notations $\langle \, . \, , \, . \, \rangle, \| \, . \, \|$ for angular tensor contractions introduced in \eqref{lrdef} and \eqref{normdef} (applied to $q_{(0)}^{ab}$).

The function $W=W(\xi)$ is regular on $[-1,1]$ up to and including the boundary, as is $\Psi^{-1}= \Psi(\xi)^{-1}$ [see \eqref{gittel}]. $y=y(\xi)$ is regular and vanishes at the boundary. There can be at most one such function up to rescaling, and this function is, in fact, given by $y=\|\phi \|^2$. To see this, note that if we were to change $K^a \to K^a - \ell \omega_{(1)} \phi^a$ for a \emph{constant} $\omega_{(1)}$ and then used this new $K^a$ to construct or gGNCs, then 
in these coordinates $k_{(1)}^a$ would be replaced by $k_{(1)}^a-\omega_{(1)} \phi^a$, see lem. \ref{lem8}. 
Thus, $\omega_{(1)} \phi^a$ with constant $\omega_{(1)}$
solves \eqref{ddk}. This must in fact be the only solution.
since any other solution $y=k_{(1)a} \phi^a$ to \eqref{yeq}
would be non-vanishing at least at one boundary point 
$\xi=\pm 1$, corresponding to a $k_{(1)}^a$ that is irregular at the north/south pole of the sphere $H$.

As in the proof of Theorem A in sec. \ref{app:proofA},
we next update our initial guess of $K^a$ to $K^a \to K^a - \ell \omega_{(1)} \phi^a$, where $\omega_{(1)}$ is the constant such that $k_{(1)}^a = \omega_{(1)}\phi^a$. 
In the gGNCs based on the new $K^a$, we then have 
$k_{(0)}^a = k^a_{(1)} = f_{(0)} = f_{(1)} = 0$ on $\cH$.

The next step is to evaluate the condition \eqref{Q1} at order $\ell^4$, giving that $(f + k^2)_{(2)} = 0$ on $\cH$. Thus, we additionally get $f_{(2)}=0$ on $\cH$. Then furthermore $(f + k^2)_{(3)} = 0$ by \eqref{Q11}. 

This set of arguments is now repeated in an inductive manner similar to app. \ref{sec:indstep} for Theorem A. We assume that, on $\cH$,  
$f_{(j)}=0$ for all $j \le n+1$, $k_{(j)a}=0$ for all 
$j \le n$, and $(f+k^2)_{(n+2)}=0$. 
Again, we transvect the EoM \eqref{EoMg} of our EFT for the metric into $K^a p^b{}_c$, and now evaluate the result at $O(\ell^{n+1})$. This yields \eqref{R1vA}
with $k_{(1)a} \to k_{(n+1)a}$. The reason for this is the following: the terms with explicit powers of $\ell$ in \eqref{EoMg} all vanish at this order by the induction hypothesis, as a consequence of the fact\footnote{The proof of this well-known result is of the same nature as the proof of prop. \ref{prop1}.}  
that a contraction $p^a{}_b K^c T_{bc}$ of a local covariant tensor $T_{ab}$ built from $(\Phi, g_{ab})$
vanishes on a Killing horizon w.r.t. $K^a$.

Proceeding next along the same lines as before gives $k_{(n+1)a} = \omega_{(n+1)}\phi^a$ for a constant $\omega_{(n+1)}$, which we use to update
$K^a \to K^a - \ell^{n+1} \omega_{(n+1)}\phi^a$. In the new gGNC based on this new $K^a$, we additionally have $k_{(n+1)a}=0$ on $\cH$. When combined with 
$(f+k^2)_{(n+2)}=0$ and the induction hypothesis, we 
get $f_{(n+2)} = 0$ on $\cH$. Finally, 
\eqref{Q11} which holds at this stage with the replacements $k_{(1)a} \to k_{(n+1)a}$ and 
$(f+k^2)_{(3)} \to (f+k^2)_{(n+3)}$
 then tells us that $(f + k^2)_{(n+3)} = 0$, thus closing the induction loop. \qed

\section{Outlook}

While our complementary Theorems A and B cover a fairly general class of EFTs and solutions, it would be interesting to remove some of their technical assumptions. Concerning theories, one should try to include, at least at the derivative order in the Lagrangian considered here, multiple scalar fields and abelian vector fields while staying within the class of theories with second order EoMs. 

Concerning the assumptions in Theorem B, the reflection symmetry condition at $\sH^+$ makes intuitive sense for equilibrium solutions in theories as our EFTs that do not introduce an explicit time direction such as Einstein-aether theories\footnote{In fact, \cite{Adam:2021vsk} have found stationary black hole solutions in such theories which do not have this reflection symmetry, and which do not have Killing horizons.} \cite{Adam:2021vsk}. The other assumption, $V \ge 0$, is also well-motivated as a stability condition and, e.g., required in other standard theorems in general relativity, such as the positive mass theorem.

Concerning Theorem A, the most restrictive assumption is that the Killing orbits line up with the bicharacteristics of the propagation cone. To remove this, one would evidently have to bring in the global structure of spacetime, which is not considered in Theorem A as all its assumptions are of a local nature.

\medskip
\noindent
{\bf Acknowledgements:} SH thanks Harvey Reall and Thomas Sotiriou for helpful discussions and comments, and Trinity College, Cambridge University, for support.

\appendix{

\section{Curvature components in gGNCs}\label{app:GNC}

Below we provide the expressions for the curvature components 
of the following metric given by
\be
\label{gGNC1}
g=2\ud v\ud u-f \ud v^2+2k_A \ud x^A \ud v +h_{AB}\ud x^A \ud x^B
\ee
(For equivalent expressions in a different notation, see \cite{Hollands:2022fkn}.)
The nonzero components of the inverse metric are
\be
g^{uu}=(f+k^2), \; g^{uv}=1, \; g^{uA}=-k^A, \; g^{AB}=h^{AB}
\ee
where $h^{AB}$ is the inverse of $h_{AB}$, $k^A\equiv h^{AB}k_B$ and $k^2\equiv k_A k^A$. We do not assume in this section that $k_A=f=0$
for $u=0$, as would be the case for the usually defined GNCs off of the null surface $\cN$ defined by $u=0$. 

We introduce $l^a = (\partial_u)^a$, $K^a = (\partial_v)^a$, $p^a{}_b, h_{ab}, D_a$ as in 
\eqref{pqdef}, \eqref{Ddef}, \eqref{hdef}, and
the following variables
\be
\label{Kabdef}
K_{ab}\equiv \frac12 \sL_K h_{ab}, \qquad \bar K_{ab}\equiv \frac12 \sL_l h_{ab}, 
\ee
and 
\be
K = q^{ab} K_{ab}, \; \bar K = q^{ab} \bar K_{ab}.
\ee
In this appendix, $K^a$ is not assumed to be a Killing vector field, though in the main part of the paper, it is, resulting in simplified expressions obtained by dropping $\sL_K k_a, \sL_Kf, K_{ab}, K$ everywhere.

The nonzero Christoffel symbols are given by
\begin{widetext}
\begin{subequations}
\begin{align}
    K^al^b\nabla_a(\ud u)_b&=-\frac{1}{2} \sL_l f- \frac{1}{2}q^{ab}k_b\sL_l k_a \, , \\
    K^al^b (\partial_A)^c \nabla_a(\ud x^A)_b&= \frac{1}{2} q^{ca}\sL_lk_a \, , 
\end{align}
\begin{align}
    l^ap^b{}_c\nabla_a(\ud u)_b &= \frac{1}{2} \sL_lk_c - q^{ab}k_b \bar{K}_{ca} \, , \\
    l^ap^b{}_c (\partial_A)^d\nabla_a(\ud  x^A)_b &= \bar{K}^d{}_c \, , \vphantom{\frac{a}{b}} \\
    K^aK^b\nabla_a(\ud v)_b &= \frac{1}{2} \sL_l f \, , \\
    K^aK^b\nabla_a(\ud u)_b &= - \frac{1}{2} \sL_K f + \frac{1}{2} (f+k^2) \sL_l f - q^{ab}k_b \left( \sL_K k_a - \frac{1}{2} D_a f \right) \, , \\
    K^aK^b (\partial_A)^c\nabla_a (\ud x^A)_b &= \frac{1}{2} q^{ca} k_a \sL_l f 
    + q^{ca} \left( \sL_K k_a + \frac{1}{2} D_af \right) \, , \\
    K^ap^b{}_c \nabla_a(\ud v)_b &= -\frac{1}{2} \sL_l k_c \, , \\
    K^ap^b{}_c\nabla_a(\ud u)_b &= - \frac{1}{2} (f+k^2) \sL_lk_c - \frac{1}{2}D_cf - q^{ab}k_bD_{[c}k_{a]} - q^{ab}k_b K_{ca} \, , \\
    K^ap^b{}_c (\partial_A)^d \nabla_a(\ud x^A)_b  &= \frac{1}{2} q^{da}k_a \sL_l k_c - q^{da}D_{[a}k_{c]} + q^{da} K_{ac} \, , \\
    p^a{}_bp^c{}_d \nabla_a (\ud v)_c &= - \bar{K}_{bd} \, , \vphantom{\frac{a}{b}} \\
    p^a{}_bp^c{}_d\nabla_a (\ud u)_c &= - (f+k^2) \bar{K}_{bd } + D_{(b}k_{d)} - K_{bd} \, , \vphantom{\frac{a}{b}} \\
    p^a{}_b p^c{}_d (\partial_A)^e \nabla_a (\ud x^A)_c &= p^a{}_b p^c{}_d (\partial_A)^e D_a (\ud x^A)_c + q^{ea}k_a \bar{K}_{bd} \, . \vphantom{\frac{a}{b}}
\end{align}
\end{subequations}
The components of the Riemann tensor are as follows:
\begin{subequations}
\begin{align}
    l^aK^bl^cK^dR_{abcd} &= \frac{1}{4} q^{ab}\sL_l k_a\sL_lk_b + \frac{1}{2} \sL_l\sL_l f \, , \\
    l^aK^bl^cp^d{}_eR_{abcd} &= \frac{1}{2} q^{ab}  \bar{K}_{eb} \sL_lk_a - \frac{1}{2} \sL_l\sL_l k_e \, , \\
    l^aK^bK^cp^d{}_e R_{abcd} &= - \frac{1}{2} \sL_K\sL_l k_e - \frac{1}{2} q^{ab} K_{ea} \sL_lk_b + q^{ab} \bar{K}_{ea} \sL_K k_b - \frac{1}{4} q^{ab}k_a \sL_lk_e \sL_lk_b - \frac{1}{2} q^{ab} \sL_l k_b D_{[e}k_{a]} \notag \\ 
    &\phantom{=}- \frac{1}{2} D_e\sL_l f + \frac{1}{2}q^{ab} \bar{K}_{eb}D_af - \frac{1}{2} q^{ab}k_b \bar{K}_{ea} \sL_l f \\
    l^ap^b{}_cl^dp^e{}_fR_{abde} &= q^{ab} \bar{K}_{cb} \bar{K}_{fa} - \sL_l \bar{K}_{cf} \vphantom{\frac{a}{b}} \\
    l^ap^b{}_cK^dp^e{}_f R_{abde} &= - \sL_K \bar{K}_{cf} + q^{ab} K_{cb} \bar{K}_{fa} + q^{ab} k_b \bar{K}_{f[a} \sL_l k_{c]} + q^{ab} \bar{K}_{fb} D_{[c}k_{a]} + \frac{1}{2} D_f\sL_l k_c - \frac{1}{4} \sL_l k_c\sL_lk_f \notag \\ 
    &\phantom{=}- \frac{1}{2} \bar{K}_{cf} \sL_l f \, ,\\
    K^ap^b{}_cK^dp^e{}_f R_{abde} &= - \sL_K K_{cf} + q^{ab} K_{cb} K_{fa} - 2q^{ab}k_b \bar{K}_{c[f} \sL_l k_{a]} - 2q  ^{ab}k_b \bar{K}_{f[c} \sL_l k_{a]}  +\frac{1}{4} (f+k^2) \sL_lk_c\sL_lk_f \vphantom{\frac{a}{b}} \notag \\ 
    &\phantom{=}+ \frac{1}{2} \sL_lk_{(c}D_{f)}f - q^{ab} k_b \sL_lk_{(c}D_{f)}k_a  - q^{ab}K_{a(c|}D_bk_{|f)} + q^{ab} K_{a(c}D_{f)}k_b \vphantom{\frac{a}{b}} + D_{(f} \sL_K k_{c)} \notag \\ 
    &\phantom{=}+ \frac{1}{2}D_fD_c f -\frac{1}{2} \bar{K}_{cf} q^{ab}k_bD_af + \frac{1}{2} q^{ab}k_b \sL_l k_{(c|}D_ak_{|f)} +q^{ab}D_{[a}k_{c]} D_{[b}k_{f]} + \frac{1}{2} K_{cf}\sL_l f \notag \\ 
    &\phantom{=}+ \frac{1}{2}(f+k^2) \bar{K}_{cf} \sL_lf + \frac{1}{2} D_{(c}k_{f)} \sL_l f - \frac{1}{2} \bar{K}_{cf} \sL_K f \, , \\
    p^a{}_bp^c{}_dK^ep^f{}_g R_{acef} &= -2D_{[b}K_{d]g} + \sL_l k_{[b}K_{d]g} - 2q^{ac} K_{a[b}\bar{K}_{d]g}k_c - \sL_l k_{[b} \bar{K}_{d]g} (f+k^2)  \vphantom{\frac{a}{b}} + \bar{K}_{g[b}D_{d]}f \notag \\ 
    &\phantom{=}+ \frac{1}{2} \sL_lk_{[b}D_{d]} k_g - \frac{1}{2} D_{g}k_{[b} \sL_lk_{d]} -D_gD_{[b}k_{d]} +  q^{ac}\bar{K}_{g[b|} k_cD_{|d]}k_a - q^{ac}k_a \bar{K}_{g[b|}D_ck_{|d]} \, , \vphantom{\frac{a}{b}} \\
    p^a{}_bp^{c}{}_dl^ep^f{}_gR_{acef} &= 2D_{[d} \bar{K}_{b]g} - 2q^{ac}\bar{K}_{a[b} \bar{K}_{d]g}k_c + \sL_lk_{[b} \bar{K}_{d]g} \, , \vphantom{\frac{a}{b}} \\
    p^a{}_bp^c{}_dp^e{}_fp^g{}_hR_{aceg} &= R[h]_{bdfh} + 2K_{d[f} \bar{K}_{h]b} + 2K_{b[h}\bar{K}_{f]d} + 2(f+k^2) \bar{K}_{b[h} \bar{K}_{f]d} \notag \vphantom{\frac{a}{b}} - \bar{K}_{h[d}D_{b]}k_f - \bar{K}_{f[d}D_{b]}k_h \notag \\ 
    &\phantom{=}+ \bar{K}_{d[h}D_{f]}k_b - \bar{K}_{b[h}D_{f]}k_d \, . \vphantom{\frac{a}{b}}
\end{align}
\end{subequations}
The components of the Ricci tensor are given by the following expressions
\begin{subequations}
\begin{align}
    K^aK^bR_{ab} &= - q^{ab} \sL_K K_{ab} + q^{ac}q^{bd} K_{ab}K_{cd} - \bar{K}q^{ab}k_b \sL_lk_a + 2 q^{ac}q^{bd} \bar{K}_{cd}k_a \sL_l k_b + q^{ac}q^{bd} \bar{K}_{ab}k_cD_df - \frac{1}{2} \bar{K} q^{ab}k_bD_af \vphantom{\frac{a}{b}}  \notag \\
    &\phantom{=}- q^{ab}q^{cd}k_b \sL_lk_cD_ak_d \vphantom{\frac{a}{b}} + q^{ab}D_b\sL_Kk_a - q^{ab}k_bD_a \sL_lf + q^{ab}q^{cd}k_b \sL_l k_cD_dk_a + \frac{1}{2} q^{ab}D_{b}D_af - \frac{1}{2} \bar{K} \sL_Kf \notag \\
    &\phantom{=} - q^{ac} q^{bd} D_{[a}k_{b]}D_dk_c + \frac{1}{2} K\sL_lf + \frac{1}{2} (f+k^2) \bar{K} \sL_l f - q^{ab}k_b\sL_K\sL_lk_a  - \bar{K}_{ab}k^ak^b \sL_lf + \frac{1}{2} q^{ab}D_ak_b \sL_l f \notag \\ 
    &\phantom{=}+ \frac{1}{2}(f+k^2) \sL_l\sL_l f + \frac{1}{2} (f+k^2)q^{ab} \sL_lk_a \sL_l k_b - \frac{1}{2} q^{ab}q^{cd} k_a k_c \sL_l k_b \sL_l k_d + \frac{1}{2} q^{ab} \sL_l k_aD_bf \, , \\
    l^aK^bR_{ab} &= - q^{ab} \sL_K \bar{K}_{ab} + q^{ac} q^{bd} K_{cd} \bar{K}_{ab} - \frac{1}{2} \bar{K} q^{ab} k_b \sL_l k_a + q^{ac}q^{bd} \bar{K}_{cd} k_a \sL_l k_b - \frac{1}{2} \sL_l\sL_l f - \frac{1}{2} q^{ab} \sL_lk_a\sL_lk_b \notag \\ 
    &\phantom{=}+ \frac{1}{2} q^{ab}D_b\sL_lk_a - \frac{1}{2} \bar{K} \sL_l f - \frac{1}{2}q^{ab}k_b \sL_l\sL_lk_a \, , \\
    l^al^bR_{ab} &= q^{ac} q^{bd} \bar{K}_{ab}\bar{K}_{cd} - q^{ab} \sL_l \bar{K}_{ab} \, , \vphantom{\frac{a}{b}} \\
    l^ap^b{}_cR_{ab} &= q^{ab} D_b\bar{K}_{ca} - D_c\bar{K} + \frac{1}{2} \sL_l\sL_lk_c + \frac{1}{2} \bar{K} \sL_l k_c - q^{ab}\bar{K}_{cb} \sL_lk_a \notag \\
    &\phantom{=} + q^{ab}k_b \left( 2 q^{de} \bar{K}_{ce} \bar{K}_{ad} - \bar{K}_{ca} \bar{K} - \sL_l\bar{K}_{ca} \right) \vphantom{\frac{a}{b}} \, , \\
    K^ap^b{}_cR_{ab} &= q^{ab} D_a K_{cb} - D_cK - \frac{1}{2}\sL_K\sL_l k_c - \frac{1}{2}K \sL_lk_c + q^{ab} \bar{K}_{cb} \sL_Kk_a + \sL_lk_{[c}\bar{K}_{a]b}k_dk_e q^{ad}q^{be} - \frac{1}{2} q^{ab} k_b \sL_l k_c\sL_l k_a \notag \\
    &\phantom{=} + (f+k^2) q^{ab} \bar{K}_{cb} \sL_l k_a - \bar{K} q^{ab} k_bD_{[c}k_{a]} +  q^{ab}\sL_l k_{[c}D_{a]}k_b  + 2 q^{ad}q^{be} \bar{K}_{b[c|}k_dD_ek_{|a]} - 2q^{ad}q^{be} \bar{K}_{b[c}k_dD_{a]}k_e \vphantom{\frac{a}{b}} \notag \\ 
    &\phantom{=}+ 2 q^{ad}q^{be} K_{ce} \bar{K}_{ab} k_d - q^{ab} K_{ca} \bar{K}k_b  - \frac{1}{2}D_c\sL_l f + q^{ab} D_bD_{[c}k_{a]} + q^{ab} \bar{K}_{ca}D_bf - \frac{1}{2}\bar{K}D_c f \notag \\
    &\phantom{=} +  q^{ab}k_b D_a\sL_lk_c - \frac{1}{2} q^{ab} k_b D_c \sL_lk_a  - q^{ab} \bar{K}_{ca}k_b \sL_l f   - \frac{1}{2} (f+k^2) \left( \sL_l\sL_lk_c + \bar{K} \sL_l k_c \right) - q^{ab} k_b \sL_K \bar{K}_{ca} \, ,\\
    p^a{}_bp^c{}_dR_{ac} &= R[h]_{bd} - 2\sL_K \bar{K}_{bd} - K \bar{K}_{bd} + 4 q^{ac} K_{(b|c} \bar{K}_{|d)a} - K_{bd} \bar{K} + (f+k^2) \left( 2q^{ac}\bar{K}_{bc} \bar{K}_{da} - \bar{K}_{bd} \bar{K} - \sL_l \bar{K}_{bd} \right) \vphantom{\frac{a}{b}} \notag \\
    &\phantom{=} - 4 q^{ad}q^{ce} \bar{K}_{b[a} \bar{K}_{d]c}k_dk_e + 2 q^{ac} \sL_l k_{(b} \bar{K}_{d)a}k_c - 2 q^{ac} \bar{K}_{bd} k_c \sL_l k_a - \bar{K}_{bd} \sL_l f  -2 q^{ac} k_cD_{[b}\bar{K}_{a]d}  \vphantom{\frac{a}{b}} \notag \\
    &\phantom{=} - 2 q^{ac} k_c D_{[d}\bar{K}_{a]b} +  \bar{K}D_{(d}k_{b)}  + \frac{1}{2} D_{(d} \sL_lk_{b)} - \frac{1}{2} \sL_l k_b\sL_lk_d + \bar{K}_{bd} q^{ac} D_ak_c - 2 q^{ac} \bar{K}_{a(b|}D_ck_{|d)} \, . \vphantom{\frac{a}{b}}
\end{align}
\end{subequations}
\end{widetext}

\section{Big-$O$ Notation}

In the proofs of Theorem A and Theorem B, we will have to analyze 
whether certain tensor fields vanish on a Killing horizon, $\cH$, w.r.t. a Killing vector field $K^a$.
These tensors will be built from $q^{ab}, h_{ab}, f, k_a$, their Lie-
along $K^a, l^a$, and their angular covariant derivatives $D_a$, where these quantities refer to gGNCs, see sec. \ref{sec:KH}.
On a Killing horizon $\cH$, we have $f, k_a=0$, so also their $D_a$ and $\sL_K$ derivatives vanish as these point along $\cH$. On the other hand, the Lie derivative $\sL_l$ need not vanish. In the proofs of Theorem A and Theorem B, we do not a priori know that $f, k_a=0$, and that their $D_a$ and $\sL_K$ derivatives vanish at $\cH$, because we precisely want to prove that $\cH$ actually is a Killing horizon. 

In order to retain control of the order (in the gGNC $u$ defining $\sH$ through $u=0$) at which certain expressions vanish in various perturbative arguments, we must keep track of the powers of $k_a, f$ and their 
their $D_a$ and $\sL_K$ derivatives. 

For this, we will now introduce a convenient notation:
If the power of $f$ is $r$ and that of $k_a$ is $s$, we write that such an expression is $O(f^r k^s)$. 
More precisely, a tensor $T^{a \dots b}{}_{c \dots d}$ built from $q^{ab}, h_{ab}, f, k_a, K^a, l^a$, their Lie-derivatives along $K^a, l^a$, and their angular covariant derivatives $D_a$, is said to be $T=O(f^r k^s)$, if it is a linear combination of simultaneous eigenvectors of eigenvalues $(\ge r,\ge s)$
of the operators
\begin{align}
N_f =& \sum_{p,q} \sL_K^p D_{(a_1 \dots a_q)}^q f 
\frac{\partial}{\partial(\sL_K^p D_{(a_1 \dots a_q)}^q f)},\\
N_k =& \sum_{p,q} \sL_K^p D_{(a_1 \dots a_q)}^q k_c 
\frac{\partial}{\partial(\sL_K^p D_{(a_1 \dots a_q)}^q k_c)}.
\end{align}
We also write, e.g., $T = O(kf,k^2)$ if it is a sum of terms of the order $O(k^2)$ and $O(kf)$. As examples:
\be
q^{cb} \sL_K D_cf D_a k_b = O(fk), 
\ee
but 
\be
q^{cb} \sL_K D_cf D_a \sL_l k_b = O(f),
\ee
since terms hit by $\sL_l$ are not counted. At a Killing horizon, we 
have $T = O(f^r k^s) \Longrightarrow T=O(u^{r+s})$ where $O(u^n)$
is a function such that $\lim_{u\to 0} u^{-n+\delta} O(u^n)=0$ for any $\delta>0$.

\begin{widetext}
\section{Contractions of $Q_{abcd}, C_{ab}$}
\label{app:contr}

For the analysis of the quartic propagation cone in our EFT \eqref{lagrangian} we need the following two contractions \eqref{contr1}, \eqref{contr2}, where we define 
\be 
\epsilon^{ab}=\epsilon^{AB}(\partial_A)^a(\partial_B)^b,
\ee
so that 
\be
\epsilon^{abcd} = 6K^{[a} l^b \epsilon^{cd]},
\ee
in terms of the 4-dimensional volume element.
\begin{align}
\label{contr1}
    &(2C_{a_1a_2}C_{b_1b_2}-C_{a_1b_1}C_{a_2b_2})(\ud u)_{e_1}(\ud u)_{e_2} (\ud u)_{f_1}(\ud u)_{f_2} \epsilon^{a_1e_1g_1g_2}\epsilon^{e_2a_2h_1h_2}R_{g_1g_2h_1h_2} \epsilon^{b_1f_1i_1i_2} \epsilon^{f_2b_2j_1j_2}R_{i_1i_2j_1j_2} \notag \\
    &= \left(K^aK^bC_{ab}\right)^2 \left( \epsilon^{cd}\epsilon^{ef} R_{cdef} \right)^2 \notag \\
    &+ 2K^aK^bK^cC_{da}C_{bc} \epsilon^{de}\epsilon^{fg}K^hR_{hefg} \epsilon^{ij} \epsilon^{kl}R_{ijkl} + 2K^aK^bK^cC_{da}C_{bc} \epsilon^{ef} \epsilon^{dg} K^hR_{efhg} \epsilon^{ij} \epsilon^{kl}R_{ijkl} \notag \\
    &+ 2K^aK^bK^cC_{da}C_{bc} \epsilon^{ef} \epsilon^{gh}R_{efgh} \epsilon^{di} \epsilon^{jk}K^lR_{lijk} + 2K^aK^bK^cC_{da}C_{bc} \epsilon^{ef} \epsilon^{gh}R_{efgh} \epsilon^{ij} \epsilon^{dk} K^lR_{ijlk} \notag \\
    &+ 4K^aK^b (2C_{cd}C_{ab}-C_{ca}C_{db}) \epsilon^{ce}\epsilon^{df}K^gK^hR_{gehf} \epsilon^{ij} \epsilon^{kl} R_{ijkl} + 4K^aK^b (2C_{ca}C_{db}-C_{cd}C_{ab}) \epsilon^{ce}\epsilon^{fg}K^hR_{hefg} \epsilon^{di} \epsilon^{jk} K^l R_{lijk} \notag \\
    &+ 4 K^aK^b C_{ca}C_{db} \epsilon^{ce}\epsilon^{fg} K^h R_{hefg} \epsilon^{ij} \epsilon^{dk} K^l R_{ijlk} + 4 K^aK^b (2C_{ca}C_{db}-C_{ab}C_{cd}) \epsilon^{ef} \epsilon^{cg} K^hR_{efhg} \epsilon^{ij} \epsilon^{dk} K^l R_{ijlk} \notag \\
    &+ 4K^aK^b C_{ca}C_{db} \epsilon^{ef} \epsilon^{cg} K^h R_{efhg} \epsilon^{di} \epsilon^{jk} K^l R_{lijk} + 4 K^aK^b(2C_{ab}C_{cd} - C_{ca}C_{bd}) \epsilon^{ef} \epsilon^{gh} R_{efgh} \epsilon^{ci} \epsilon^{dj} K^kK^l R_{kilj} \notag \\
    &+ 8 K^a (2C_{ba}C_{cd}-C_{ca}C_{bd}) \epsilon^{ef} \epsilon^{bg} K^h R_{efhg} \epsilon^{ci} \epsilon^{dj} K^k K^l R_{kilj} + 8 K^a(2C_{ba}C_{cd} - C_{bc}C_{da}) \epsilon^{be} \epsilon^{fg} K^h R_{hefg} \epsilon^{ci} \epsilon^{dj} K^k K^l R_{kilj} \notag \\
    &+ 8 K^a (2C_{bc}C_{da}-C_{ba}C_{cd}) \epsilon^{be} \epsilon^{cf} K^gK^h R_{gehf} \epsilon^{ij} \epsilon^{dk} K^l R_{ijlk} +8K^a (2C_{bc}C_{da} - C_{bd}C_{ac}) \epsilon^{be} \epsilon^{cf} K^g K^h R_{gehf} \epsilon^{di} \epsilon^{jk} K^l R_{lijk} \notag \\
    &+ 16 (2C_{ab}C_{cd}-C_{ac}C_{bd}) \epsilon^{ae} \epsilon^{bf} K^gK^h R_{gehf} \epsilon^{ci} \epsilon^{dj} K^kK^l R_{kilj} \, ,
\end{align}
and
\begin{align}
\label{contr2}
    &(2C_{a_1a_2}C_{b_1b_2}-C_{a_1b_1}C_{a_2b_2}) (\ud u)_{e_1} (\ud u)_{e_2} (\ud u)_{f_1} p^c{}_{f_2}  \epsilon^{a_1e_1g_1g_2} \epsilon^{e_2a_2h_1h_2} R_{g_1g_2h_1h_2} \epsilon^{b_1f_1 i_1i_2} \epsilon^{f_2b_2j_1j_2} R_{i_1i_2j_1j_2} \notag \\
    &= -2(K^aK^b C_{ab})^2 \epsilon^{de}\epsilon^{fg} R_{defg} \epsilon^{hi}\epsilon^{cj}l^kR_{hikj} \notag \\ 
    &+ 4K^aK^bK^dC_{ea}C_{bd} \epsilon^{ef} \epsilon^{gh} K^i R_{ifgh} \epsilon^{jk} \epsilon^{cl} l^m R_{jkml} + 4K^aK^bK^dC_{ea}C_{bd} \epsilon^{fg} \epsilon^{eh} K^iR_{fgih} \epsilon^{jk}\epsilon^{cl}l^mR_{jkml} \notag \\
    &+ 4K^aK^bK^dC_{ea}C_{bd} \epsilon^{fg} \epsilon^{hi} R_{fghi} \epsilon^{ej} \epsilon^{ck} K^ll^mR_{ljmk} + 4K^aK^bK^d C_{ea}C_{bd} \epsilon^{fg} \epsilon^{hi}R_{fghi} \epsilon^{jk} \epsilon^{ce} K^ll^mR_{jklm} \notag \\
    &- 8 K^aK^b (2C_{de}C_{ab} - C_{da}C_{eb}) \epsilon^{df} \epsilon^{eg}K^hK^iR_{hfig} \epsilon^{jk} \epsilon^{cl}l^m R_{jkml} \notag \\
    &-8K^aK^b (2C_{da}C_{eb}-C_{de}C_{ab}) \epsilon^{df}\epsilon^{gh}K^iR_{ifgh} \epsilon^{ej} \epsilon^{ck} K^ll^mR_{ljmk} \notag \\
    &+ 8K^aK^bC_{da}C_{eb} \epsilon^{df} \epsilon^{gh}K^iR_{ifgh} \epsilon^{jk} \epsilon^{ce} K^ll^mR_{jklm} + 8 K^aK^b(2C_{da}C_{eb}-C_{ab}C_{db}) \epsilon^{fg} \epsilon^{dh}K^iR_{fgih} \epsilon^{jk}\epsilon^{ce}K^ll^mR_{jklm} \notag \\
    &- 8 K^aK^bC_{da}C_{eb}\epsilon^{fg}\epsilon^{dh}K^iR_{fgih} \epsilon^{ej}\epsilon^{ck}K^ll^mR_{ljmk} + 8 K^aK^b(2C_{ab}C_{de}-C_{ad}C_{be}) \epsilon^{fg} \epsilon^{hi} R_{fghi} \epsilon^{dj} \epsilon^{ce}K^kK^ll^mR_{kjlm} \notag \\
    &+ 16 K^a(2C_{bd}C_{ea}-C_{be}C_{da}) \epsilon^{bf} \epsilon^{dg} K^hK^iR_{hfig} \epsilon^{ej} \epsilon^{ck}K^ll^mR_{ljmk} \notag \\
    &- 16K^a(2C_{bd}C_{ea}-C_{be}C_{da}) \epsilon^{bf} \epsilon^{dg} K^hK^iR_{jfig} \epsilon^{jk} \epsilon^{ce}K^ll^mR_{jklm} \notag \\
    &- 16K^a(2C_{ba}C_{de}-C_{bd}C_{ae}) \epsilon^{bf} \epsilon^{gh} K^i R_{ifgh} \epsilon^{dj} \epsilon^{ce} K^kK^ll^mR_{kjlm} \notag \\
    &+ 16K^a(2C_{ba}C_{de}-C_{bd}C_{ae}) \epsilon^{fg} \epsilon^{bh} K^iR_{fgih} \epsilon^{dj} \epsilon^{ce}K^kK^ll^mR_{kjlm} \notag \\
\end{align}
\begin{align}
    &+ 32(C_{ab}C_{de}-C_{ad}C_{be}) \epsilon^{af} \epsilon^{bg}K^hK^i R_{hfig} \epsilon^{dj} \epsilon^{ce} K^kK^ll^m R_{kjlm} \notag \\
    &- 2K^aK^bK^dl^eC_{ab}C_{de} \epsilon^{fg} \epsilon^{hi}R_{fghi} \epsilon^{jk} \epsilon^{cl} K^mR_{jkml} +4K^aK^bl^d(2C_{ea}C_{bd}-C_{eb}C_{ad}) \epsilon^{fg} \epsilon^{eh}K^iR_{fgih} \epsilon^{jk} \epsilon^{cl} K^m R_{jkml} \notag \\
    &-4 K^aK^bl^dC_{ea}C_{bd} \epsilon^{ef} \epsilon^{gh} K^iR_{ifgh} \epsilon^{jk} \epsilon^{cl}K^mR_{jkml} -4K^aK^bl^d(2C_{ab}C_{ed}-C_{ae}C_{bd}) \epsilon^{fg} \epsilon^{hi}R_{fghi} \epsilon^{ej} \epsilon^{ck} K^lK^m R_{ljmk} \notag \\
    &+8K^al^b (2C_{de}C_{ab}-C_{da}C_{eb}) \epsilon^{df} \epsilon^{eg}K^hK^iR_{hfig} \epsilon^{jk}\epsilon^{cl} K^m R_{jkml} \notag \\
    &-8 K^al^b(2C_{da}C_{eb}-C_{de}C_{ab}) \epsilon^{df} \epsilon^{gh} K^iR_{ifgh} \epsilon^{ej} \epsilon^{ck}K^lK^m R_{ljmk} \notag \\
    &+ 8 K^al^b(2C_{da}C_{eb}-C_{de}C_{ab}) \epsilon^{fg} \epsilon^{dh} K^iR_{fgih} \epsilon^{ej} \epsilon^{ck}K^lK^mR_{ljmk} \notag \\
    &+ 16l^a(C_{bd}C_{ea}-C_{be}C_{da}) \epsilon^{bf} \epsilon^{dg}K^hK^iR_{hfig} \epsilon^{ej} \epsilon^{ck}K^lK^mR_{ljmk} \notag \\
    &=O(k^2,kf, f^2) \, .
\end{align}
We also need to compute certain contractions with $(C^{-1})^{ab}$. For this, we make use of the well-known formula for the inverse in terms of the adjugate of $C^{-1}$, i.e.,
\begin{align}
    &\det(Cg^{-1})(C^{-1})^{ab} = \text{adj}(C)^{ab} = \frac{1}{3!} \epsilon^{aa_1a_2a_3} \epsilon^{bb_1b_2b_3} C_{a_1b_1}C_{a_2b_2}C_{a_3b_3} \, .
\end{align}

Contractions that we need are:

\begin{align}
    &\det(Cg^{-1})(\ud u)_a(\ud u)_b (C^{-1})^{ab} \notag \\
    &= \frac{1}{3!} \epsilon^{ab}\epsilon^{cd} K^eK^f\bigg(3C_{ef}C_{ac}C_{bd}- C_{ae}C_{fc}C_{bd} + C_{ae}C_{bc}C_{fd} - C_{ec}C_{Af}C_{bd} + C_{ec}C_{ad}C_{bf} - C_{ac}C_{ed}C_{bf} - C_{ac}C_{be}C_{fd} \bigg),
\end{align}
and
\begin{align}
    &\det(Cg^{-1})p^a{}_b(\ud u)_c (C^{-1})^{bc} \notag \\
    &= \frac{1}{3!} \epsilon^{ab} \epsilon^{cd}K^eK^fl^g \bigg( C_{ge}C_{fc}C_{bd} - C_{ef}C_{gc}C_{bd} + C_{ef}C_{bc}C_{gd} - C_{be}C_{fc}C_{gd} + C_{be}C_{gc}C_{fd} - C_{ge}C_{bc}C_{fd} - C_{gc}C_{ef}C_{bd} \notag \\ 
    &\phantom{=}+ C_{gc}C_{ed}C_{bf} + C_{ec}C_{ge}C_{bd} - C_{ec}C_{gd}C_{bf} - C_{ec}C_{bf}C_{gd} + C_{ec}C_{bd}C_{fg} + C_{bc}C_{ef}C_{gd} - C_{bc}C_{ed}C_{gf} - C_{bc}C_{ge}C_{fd} \notag \\ 
    &\phantom{=}+ C_{bc}C_{gd}C_{ef} + C_{gc}C_{be}C_{fd} - C_{gc}C_{bd}C_{ef} \bigg) \, .
\end{align}
\end{widetext}
Both expressions can be simplified by applying the big-$O$-notation introduced earlier and by using the identity $\epsilon^a{}_c\epsilon^{cb}=q^{ab}$. 

\section{Proof of Theorem A}
\label{app:proofA}

By assumptions (iv), we may construct a gGNC
system $(u,v,x^A)$, see \eqref{gGNC1}, near $\cH$. We must show that 
$f,k_a=k_A(\ud x^A)_a$ vanish on $\cH$. The proof follows the inductive scheme outlined in sec. \ref{sec:thmAproofoutline}.
We take our gGNC system to be such 
that $K^a = (\partial_v)^a$, where $K^a = t^a + \omega\phi^a$, with $\omega$ a constant. Initially, $\omega=\Omega_{(0)}$, which 
is subsequently modified. At any stage, $K^a,\phi^a$ Lie-derive $(\Phi,g_{ab})$ by (v).

\subsection{Consequences of $\dot x^a =c \chi^a$ on $\cH$}

First, we investigate the consequences of (iii), (v). 

\begin{lemma}
    \label{lem1}
    On $\cH$ let $c=c(\ell,x)$ be as in (v). Then
    \begin{align}
        c =(\ud u)_a(\ud u)_b (\ud u)_c (\ud v)_dQ^{abcd} 
    \end{align}
    and $\sL_\phi c|_{\cH}=0=\sL_K c|_{\cH}$ and either $c=0$ or $c>0$ on all of $\cH$. On $\cH$, we have
    \begin{subequations}
    \begin{align}
        (\ud u)_a(\ud u)_b (\ud u)_c (\ud u)_dQ^{abcd} &= 0 \, , \label{Qrel1} \\
        (\ud u)_a(\ud u)_b(\ud u)_c p^e{}_dQ^{abcd}&=c(\Omega-\omega)\phi^e \, . \label{Qrel2} 
    \end{align}    
    \end{subequations}
\end{lemma}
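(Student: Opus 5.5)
The plan is to read off all three statements from Hamilton's equations \eqref{HF}, evaluated along the bicharacteristics that rule $\cH$, after expanding $\dot x^a$ in the gGNC frame $(K^a, l^a, p^a{}_b)$. In the gGNC of sec.~\ref{sec:gGNC}, $\cH=\{u=0\}$, so every co-normal to $\cH$ is a multiple of $(\ud u)_a$.

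\emph{Step 1: the covector along the bicharacteristic.} By (iii), $\cH$ is characteristic for the quartic cone. This gives $Q^{abcd}(\ud u)_a(\ud u)_b(\ud u)_c(\ud u)_d=0$ at every point of $\cH$, which is \eqref{Qrel1}. I then use the standard fact that a characteristic hypersurface $\{u=0\}$ of the Hamiltonian $Q$ is ruled by bicharacteristics whose covector stays co-normal to it. Concretely, extend $u$ off $\cH$ and note that $Q(x,\ud u)$ vanishes on $\cH$. Then $\partial_a[Q(x,\ud u)]$ is proportional to $(\ud u)_a$ on $\cH$, and the pair $(x(s),\lambda(s)(\ud u)_a)$ with suitable $\lambda(s)$ solves \eqref{HF}. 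Because $Q$ is homogeneous of degree four in $\xi$, rescaling $\xi\to\lambda\xi$ only rescales $\dot x^a$ by $\lambda^3$. This amounts to a reparametrization of $s$, which I absorb into $c$. So I may take $\xi_a=(\ud u)_a$, and then, up to the fixed combinatorial factor from the $1/4!$ in \eqref{Qdef} (also absorbed),
\[
\dot x^a = Q^{abcd}(\ud u)_b(\ud u)_c(\ud u)_d .
\]
This is the step I expect to need the most care. One must check that the bicharacteristics ruling $\cH$ in (iii) are exactly these, and that the normalization of $c$ in (v) is compatible with the choice $\xi=\ud u$. If necessary, I would simply define $c$ through this normalization.

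\emph{Step 2: components of $\dot x^a$.} Since $K^a=(\partial_v)^a$ and $l^a=(\partial_u)^a$, any vector decomposes as $X^a=(\ud v)_bX^b K^a+(\ud u)_bX^b l^a+p^a{}_bX^b$. Assumption (v) gives $\dot x^a=c\chi^a$, and
\[
\chi^a=t^a+\Omega\phi^a=K^a+(\Omega-\omega)\phi^a,
\]
with $\phi^a$ purely angular, i.e. $(\ud u)_a\phi^a=(\ud v)_a\phi^a=0$ and $p^a{}_b\phi^b=\phi^a$. Comparing components with the expression for $\dot x^a$ from Step 1 then yields three facts. The $\ud v$-component gives $c=(\ud u)_a(\ud u)_b(\ud u)_c(\ud v)_dQ^{abcd}$. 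The $\ud u$-component gives \eqref{Qrel1} again, consistent with Step 1. The $p$-component gives \eqref{Qrel2}.

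\emph{Step 3: invariance and sign.} $K^a$ and $\phi^a$ are commuting Killing fields that Lie-derive $(\Phi,g_{ab})$, so they Lie-derive $Q^{abcd}$, exactly as in \eqref{LieQC}. The gGNC are built from $K^a$ and from a $\phi$-invariant cut $H$ ($\phi^a$ is tangent to $H$ by (v)). Hence $u$, $v$ and $l^a$ are also Lie-derived by both fields. Contracting, $\sL_K c=\sL_\phi c=0$ on $\cH$. Finally, (v) says that either $c\equiv0$ or $c$ never vanishes on the connected set $\cH\cong H\times\mathbb R$. In the latter case, continuity fixes a single sign. Choosing the time orientation of the bicharacteristic parameter, so that $\dot x^a$ is future directed along $\chi^a$, or flipping $s\to -s$, makes $c>0$.
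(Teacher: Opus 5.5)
Your proposal is correct and is essentially the paper's own argument. You evaluate Hamilton's equations at $\xi_a=(\ud u)_a$, combine them with $\dot x^a=c\chi^a$ and $\chi^a=K^a+(\Omega-\omega)\phi^a$, and contract in turn with $(\ud u)_a$, $(\ud v)_a$ and $p^e{}_a$. You then use the invariance of $Q^{abcd}$, $(\ud u)_a$ and $(\ud v)_a$ under $K^a$ and $\phi^a$; note that it is $\ud v$, not $v$, that $K^a$ leaves invariant, since $\sL_K v=1$. Your extra care about co-normality of $\xi$ along the bicharacteristics, and about the factor $1/6$ coming from the $1/4!$ in $Q$, is a harmless refinement of what the paper leaves implicit.

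The only soft spot is the sign of $c$. Once $c$ is fixed by the formula with $\xi=+\ud u$, reversing $s$ is equivalent to $\xi\to-\xi$ and flips the sign of $c$, so no convention can make $c>0$. What you, like the paper, actually obtain is that either $c\equiv0$ or $c$ is nowhere zero with a single fixed sign on $\cH$. That dichotomy is all that is used later, and in the end $c=0$.
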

\emph{Proof:} By construction, $(\ud u)_a$ is normal to $\cH$. As $\cH$ is assumed to be a characteristic hypersurface by (iii), $(\ud u)_a$ must be a characteristic covector, i.e., $Q(\ud u)=0$. Furthermore, by Hamilton's equations, we have
\begin{align}
\label{eq:lemd11}
    \dot{x}^a &= \left.\frac{\partial Q}{\partial \xi_a}\right|_{\xi=\ud u} =  (\ud u)_b (\ud u)_c (\ud u)_d Q^{bcda} \notag \\
    &=c\chi^a \, ,
\end{align}
using (v) in the last step.
Also note that $t^a=(\partial_v)^a-\omega\phi^a$ and therefore
\begin{align}
\label{eq:lem12}
    \chi^a=(\partial_v)^a + (\Omega-\omega)\phi^a \, .
\end{align}
By dotting \eqref{eq:lemd11} into $(\ud u)_a, (\ud v)_a, p^b{}_a$
in turn, we obtain all statements in the lemma. Note that these tensors, as well as $Q^{abcd}$, are Lie-derived by $\phi^a$, so that we get, e.g.,  
\begin{align}
    0=& \sL_\phi \left[ (\ud u)_a (\ud u)_b (\ud u)_c (\ud v)_d Q^{abcd} \right] \nonumber \\
    =& \sL_\phi c.
\end{align}
Thus $\sL_\phi c=0$ on $\cH$. Since $K^a$ Lie-derives the solution, the same argument works to show that $\sL_K c|_{\cH}=0$. \qed

Note that by  (v), either $c(\Omega-\omega)=0$ or 
$c(\Omega-\omega)\neq 0$ everywhere on $\cH$, so it has the same general property as $c$. In the following, we rename $c(\Omega-\omega)\to c$.

\subsection{Consequences of $Q=0$ on $\cH$}

The aim is now to evaluate the relations \eqref{Qrel1} and \eqref{Qrel2} in such a way that we can eventually deduce $c,f,k_a=0$ on $\cH$ by using the big-$O$ notation introduced above. 
\begin{lemma}
\label{lem2}
    Define $\beta_{ab}=\nabla_a\nabla_b\beta$. The contractions
    \begin{subequations}
    \begin{align}
        &K^aK^b \beta_{ab} \, , \label{eq:lem21} \\
        &p^a{}_b K^c \beta_{ac} \, , \label{eq:lem22} \\
        &K^ap^b{}_c K^dp^e{}_f R_{abde} \, , \label{eq:lem23}\\
        &K^ap^b{}_cp^d{}_e p^f{}_gR_{abdf} \, , \label{eq:lem24} \\
        &K^al^bK^cp^d{}_eR_{abcd} \label{eq:lem25}
    \end{align}
    \end{subequations}
    are all of order $O(k,f)$ on $\cH$.
\end{lemma}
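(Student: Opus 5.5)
The plan is to read off every contraction directly from the gGNC expressions of app. \ref{app:GNC}, specialised to the case where $K^a$ is Killing. In that case $\sL_K k_a=\sL_K f=K_{ab}=K=0$. The first step is to record that all quantities built from $(\Phi,g_{ab})$ are Lie-derived by $K^a$. This includes $\sL_K\Phi=0$, so every term carrying a $\sL_K$ derivative drops out. We then check, term by term, that each surviving term contains at least one undifferentiated factor of $f$ or $k_a$, or one of their $D_a$ derivatives. Factors hit by $\sL_l$ alone do not count, by the convention of the big-$O$ notation.

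\textbf{Riemann contractions.} For \eqref{eq:lem23} we take the $K^ap^b{}_cK^dp^e{}_fR_{abde}$ formula of app. \ref{app:GNC} and set $K_{cf}=0=\sL_K\bar K_{cf}$, $\sL_K k=\sL_K f=0$. Every remaining term carries $f$, $k_b$, $D f$ or $D k$ explicitly. Examples are $(f+k^2)\sL_lk_c\sL_lk_f$, $\sL_lk_{(c}D_{f)}f$, $D_fD_cf$, $q^{ab}D_{[a}k_{c]}D_{[b}k_{f]}$ and $\tfrac12 D_{(c}k_{f)}\sL_lf$. The only terms to inspect with care are those like $\tfrac12 K_{cf}\sL_lf$, which vanish by the Killing property. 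Hence \eqref{eq:lem23} is $O(k,f)$.

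The same inspection applies to \eqref{eq:lem24}, using the $p p K p$ formula, and to \eqref{eq:lem25}, using the $lKKp$ formula. In the latter, the term $-\tfrac12 D_e\sL_lf$ looks dangerous. It is nevertheless $O(f)$, since it is a $D_a$-derivative of $\sL_l f$. On $\cH$ we have $u=0$ and $D_a$ is tangential, so $D_e\sL_l f$ involves the tangential variation of the transverse derivative. The convention in the appendix treats $D$ as acting on $f$, and we must confirm that $N_f$ counts it. If that counting does not apply, we fall back on \eqref{eq:lem25} in the form $K^al^bK^cp^d{}_eR_{abcd}=-l^aK^bK^cp^d{}_e R_{abcd}$. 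We then observe that the Killing identity $\nabla_a\nabla_bK_c=R_{cbad}K^d$ expresses this contraction as $p^d{}_e l^a\nabla_a\nabla_d(K_cK^c)/2$ plus terms in $\nabla K$. In gGNC, $K_cK^c=-f$, and $\nabla_aK_b$ restricted appropriately equals $\tfrac12(\ud u\wedge\ud f)$-type terms plus $\sL_lk$ contracted with $K$. This reduces the term to $D_e\sL_l f$ plus terms with explicit $f,k$. I expect this point, namely whether $D_e\sL_l f$ is genuinely $O(f)$ or whether cancellations are needed, to be the main obstacle. My first attempt will be to show that the non-counted remainder combines into $D_e$ of something that vanishes.

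\textbf{Contractions of $\beta_{ab}$.} Write $\beta_{ab}=\beta''\nabla_a\Phi\nabla_b\Phi+\beta'\nabla_a\nabla_b\Phi$. Since $K^a\nabla_a\Phi=0$, the first piece drops out of both \eqref{eq:lem21} and \eqref{eq:lem22}. For the second piece we use
\[
K^a\nabla_a\nabla_b\Phi=\nabla_b(K^a\nabla_a\Phi)-(\nabla_bK^a)\nabla_a\Phi=-(\nabla_bK^a)\nabla_a\Phi .
\]
This leaves
\[
K^aK^b\beta_{ab}=-\beta'K^b\nabla_bK^a\nabla_a\Phi ,\qquad p^b{}_cK^a\beta_{ab}=-\beta'p^b{}_c\nabla_bK^a\nabla_a\Phi .
\]
The Christoffel list in app. \ref{app:GNC} gives the relevant components. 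The component $K^bK^c\nabla_b(\ud v)_c$ is replaced by $K^b\nabla_bK_a$, evaluated through $K_a=-f(\ud v)_a+(\ud u)_a+k_a$. This yields $K^b\nabla_bK_a=\tfrac12\nabla_a(-f)$ by the Killing equation. On $\cH$ the relevant expression is then $\tfrac12\nabla_af\,\nabla^a\Phi$.

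We expand $\nabla^a\Phi$ using the inverse metric. Since $\sL_K\Phi=0$, the only surviving parts are $g^{uu}\partial_u f\,\partial_u\Phi+\ldots$. Every term except $\partial_uf\,\partial_v\Phi=0$ carries $(f+k^2)$, $k^A$ or $D_Af$. Similarly, $\nabla_b K_a$ is the antisymmetric part of $\nabla_b(\ud u-f\ud v+k)_a$, namely $-\ud f\wedge\ud v+\ud k$. Its $p$-projection against $\nabla\Phi$ gives $D_{[b}k_{a]}$, $D f$ and $\sL_l k$ terms. The $\sL_l k$ term, however, appears paired with $g^{ua}\partial_a\Phi$ components. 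These components are themselves $K^a\partial_a\Phi=0$ up to $(f+k^2)$ and $k^A$ corrections, so everything ends up $O(k,f)$.
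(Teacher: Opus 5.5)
Your treatment of \eqref{eq:lem21}--\eqref{eq:lem24} is fine. The route through $K^a\nabla_a\nabla_b\Phi=-(\nabla_bK^a)\nabla_a\Phi$ reproduces the paper's explicit formulas for the $\beta$-contractions, up to a harmless sign slip: in fact $K^b\nabla_bK_a=+\tfrac12\nabla_af$.

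The gap is in \eqref{eq:lem25}. After the Killing simplifications ($K_{ab}=0$, $\sL_Kk_a=0$, $\sL_K\sL_lk_e=\sL_l\sL_Kk_e=0$), the $lKKp$ formula leaves exactly one uncontrolled term, $-\tfrac12 D_e\sL_lf$. This term is \emph{not} $O(f)$. The operator $N_f$ counts only $\sL_K^pD^q$-derivatives of $f$, and the paper states explicitly that anything hit by $\sL_l$ is not counted (see the example in which $D_a\sL_lk_b$ does not contribute to the $k$-count).

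This is not a bookkeeping accident. The convention is designed so that $O(f^rk^s)$ terms vanish at a Killing horizon. By contrast, $\sL_lf|_{\cH}$ is an arbitrary function of $x^A$ as far as the geometry of gGNCs is concerned. Even when $\cH$ is a Killing horizon, $D_e\sL_lf=2D_e\kappa$, and its vanishing is the zeroth law, which is not kinematic. Your fallback via $\nabla_a\nabla_bK_c=R_{cbad}K^d$ only rederives the identity $K^al^bK^cp^d{}_eR_{abcd}=\tfrac12D_e\sL_lf+O(k,f)$. The cancellation you plan to look for does not exist.

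The missing ingredient is the field equation, as in the standard zeroth-law argument based on $R_{ab}K^ap^b{}_c$. The paper contracts the metric EoM \eqref{EoMg} with $K^aq^{bc}$, and the pieces behave as follows:
\begin{itemize}
\item The Einstein-tensor part equals $-K^al^cK^dq^{be}R_{acde}+O(k,f)$.
\item The scalar-field terms are $O(k)$, because $\sL_K\Phi=0$ and $q^{bc}K_c=q^{bc}k_c$.
\item In the $\ell^2$ term $\epsilon\epsilon R\,\nabla\nabla\beta$, write $\epsilon^{abcd}=6K^{[a}l^b\epsilon^{cd]}$. Each resulting term then either contains a factor from \eqref{eq:lem21}--\eqref{eq:lem24}, already known to be $O(k,f)$, or is proportional to \eqref{eq:lem25} itself.
\end{itemize}
This gives
\[
K^al^bK^cp^d{}_eR_{abcd}\bigl(q^{eg}+\tfrac14\ell^2\epsilon^{ef}\epsilon^{gh}p^i{}_fp^j{}_h\beta_{ij}\bigr)=O(k,f).
\]
Weak coupling makes the bracket invertible, which yields the claim. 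Without invoking the EoM and the smallness of $\ell$, the fifth statement of the lemma cannot be obtained.
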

\emph{Proof:} First note that since $K^a$ is Killing on $\cH$, $\sL_K\beta(\Phi)=\beta'(\Phi)\sL_K\Phi=0$ and $K_{ab}=\frac{1}{2}\sL_Kh_{ab}=0$, meaning that the curvature expressions in app. \ref{app:contr} simplify significantly. Thus, the statements for \eqref{eq:lem21}-\eqref{eq:lem24} follow immediately from the following formulas:
\begin{widetext}
\begin{align}
    K^aK^b\beta_{ab} &= - \frac{1}{2}(f+k^2)\sL_l f \sL_l\beta - \frac{1}{2}q^{ab}k_bD_a f \sL_l\beta  - \frac{1}{2} q^{ab} k_b\sL_l f D_a\beta 
    - \frac{1}{2} q^{ab} D_bfD_a\beta \, , \\
    p^a{}_bK^c \beta_{ac} &= - \left[ -\frac{1}{2} (f+k^2) \sL_l k_b - \frac{1}{2}D_b f- q^{ac} k_a D_{[b}k_{c]} \right] \sL_l\beta - \left[ \frac{1}{2} k_a\sL_lk_b - D_{[a}k_{b]} \right]q^{ac} D_c\beta \, , \\
    K^ap^b{}_cK^dp^e{}_fR_{abde} &=-2q^{ab}k_a \overline{K}_{c[f}\sL_lk_{b]} - 2q^{ab}k_a\overline{K}_{f[c} \sL_l k_{b]} + \frac{1}{4}(f+k^2) \sL_ll_c \sL_lk_f +\frac{1}{2}\sL_l k_{(c}D_{f)}f \notag \\
    &\phantom{=}- q^{ab}k_a\sL_l k_{(c}D_{f)}k_b + \frac{1}{2} D_cD_f f -\frac{1}{2} \overline{K}_{cf}q^{ab}k_aD_b f + \frac{1}{2} q^{ab} k_a \sL_l k_{(c|} D_bk_{|f)} \notag \\
    &\phantom{=}+ q^{ab}D_{[a}k_{c]}D_{[b}k_{f]} + \frac{1}{2}(f+k^2) \overline{K}_{cf}\sL_l f - \frac{1}{2}D_{(c}k_{f)} \sL_lf \, , \\
    K^ap^b{}_cp^d{}_e p^f{}_gR_{abdf} &= - \sL_lk_{[e}\overline{K}_{g]c}(f+k^2) + \overline{K}_{c[e}D_{g]}f + \frac{1}{2} \sL_l k_{[e}D_{g]}k_c - \frac{1}{2} D_ck_{[e}\sL_l k_{g]} \notag \\
    &\phantom{=}- D_cD_{[e}k_{g]} +\overline{K}_{c[e|}q^{ab}k_aD_{|g]} k_b - k_a \overline{K}_{c[e|}q^{ab} D_bk_{|g]} \, . \vphantom{\frac{a}{b}}
\end{align}
This is not as clear for the last contraction \eqref{eq:lem25}. We continue by considering the EoM's [by (i)]: 
\begin{align}
    0 &= K^aq^{bc}G_{ac}- \left(\frac{1}{2} + \alpha X\right)\sL_K\Phi q^{bc}\nabla_c\Phi  -\frac{1}{2}K^aq^{bc}g_{ac}(X-V+\frac{1}{2}\ell^2\alpha X) - \frac{1}{4}\ell^2 K^a\epsilon_a{}^{e_1e_2e_3}q^{bc}\epsilon_c{}^{f_1f_2f_3}R_{e_1e_2f_1f_2} \beta_{e_3f_3} \, .
\end{align}
The Einstein tensor satisfies (recall that $p^a{}_bK^b=0$):
\begin{align}
    K^aq^{bc}G_{ac} &= l^aK^cK^dq^{bf}R_{acdf} g^{gh} (\ud u)_g(\ud u)_h + l^aK^cK^dq^{bf} R_{acdf} g^{gh} (\ud u)_g (\ud v)_h  +  K^aK^cl^dq^{bf}R_{acdf}g^{gh}(\ud v)_g(\ud u)_h \notag \\ 
    &\phantom{=}+ l^aK^cp^d{}_eq^{bf}R_{acdg}g^{eh}(\ud u)_h + p^a{}_cK^dl^eq^{bf}R_{adef}g^{ch}(\ud u)_h + K^aq^{bc}R_{daec}q^{de} .
\end{align}
The third term vanishes due to the first Bianchi identity and the second term is the important one, since all other terms are $O(k,f)$. Thus, defining $\epsilon^{ab}=\epsilon^{AB}(\partial_A)^a(\partial_B)^b$, we have using the EoMs for $g_{ab}$: 
\begin{align}
    K^al^cK^dq^{be}R_{acde} &= - K^aq^{bc}G_{ac} + O(k,f) \vphantom{\frac{a}{b}} \notag \\
    &=-\frac{1}{4}\ell^2 \left[ -4\epsilon^{ac}\epsilon^{bd} K^el^f K^g p^h{}_a R_{efgh} \beta_{cd} + \epsilon^{ac} \epsilon^{bd} K^e p^f{}_a K^gp^h{}_d R_{efgh} l^i\beta_{ic} \right. \vphantom{\frac{a}{b}} \notag \\
    &\phantom{=} + (\ud u)_a (\ud v)_c p^b{}_d p^e{}_f \left( \epsilon^{aa_1a_2f} \epsilon^{db_1b_2c} + \epsilon^{aa_1a_2c} \epsilon^{db_1b_2f} \right) R_{a_1a_2b_1b_2} K^g \beta_{eg} \vphantom{\frac{a}{b}} \notag \\
    &\phantom{=} \left. - 2\epsilon^{ac} \epsilon^{bd} K^ep^f{}_d p^g{}_a p^h{}_c R_{efgh} K^i K^j \beta_{ij} + (\ud u)_a (\ud u)_c (\ud v)_d p^b{}_e \epsilon^{aa_1a_2c} \epsilon^{eb_1b_2d} R_{a_1a_2b_1b_2} K^fK^g \beta_{fg} \right] \vphantom{\frac{a}{b}} \notag \\ 
    &\phantom{=}+ O(k,f) \vphantom{\frac{a}{b}}
\end{align}
\end{widetext}
The prefactors arise from the possible permutations within the second and third index of the epsilon tensor, which do not give a sign due to the anti-symmetry of the Riemann tensor. From that we obtain
\begin{align}
    K^al^bK^cp^d{}_eR_{abcd} \left( \frac{1}{4} \ell^2 \epsilon^{ef}\epsilon^{gh} p^i{}_f p^j{}_h \beta_{ij} + q^{eg} \right) = O(k,f)
\end{align}
and since the term in $(...)$ is close to $q^{ab}$ for a weakly coupled theory, it is invertible. It follows that:
\begin{align}
    K^al^bK^cp^d{}_eR_{abcd} =O(k,f),
\end{align}
i.e., \eqref{eq:lem25} holds.
\qed \\

We now use these results on $\cH$ for the evaluation of the tensors appearing in the definition of $Q$ in the following lem. \ref{lem3}. To state this compactly, and for later, it is convenient 
to introduce an inner product $\langle\cdot,\cdot\rangle_h$ on tensor fields $T_{a_1\dots a_r}$ projected by $p^a{}_b$. The inner product uses $h_{ab}$ and its inverse $q^{ab}$ [see \eqref{hdef}, \eqref{pqdef}] and is given by 
\begin{equation}
\label{lrdef}
    \langle T,S \rangle_h := q^{a_1b_1} \cdots q^{a_rb_r} S_{a_1 \dots a_r} T_{b_1 \dots b_r}
\end{equation}
We denote by 
\begin{equation}
\label{normdef}
    \| T \|^2_h = \langle T,T \rangle_h
\end{equation}
the corresponding norm. By abuse of notation, we also write
\begin{equation}
    k^2 = \|k\|^2_h
\end{equation}
as before to shorten some formulas.
\begin{widetext}
\begin{lemma}
\label{lem3}
    We have,
    \begin{subequations}
    \begin{align}
        \det(Cg^{-1}) (\ud u)_a(\ud u)_b(C^{-1})^{ab} &=  f+k^2 + \ell^2 \left[ -\frac{1}{2} \langle k \sL_l f -Df,D\beta \rangle_h + f \text{tr}_h D^2\beta   +\frac{1}{2}f \sL_lf\sL_l \beta + O(k^2,kf,f^2) \right] \, , \label{lem31} \\
        (\ud u)_a(\ud u)_bP^{ab} &= -(f+k^2) + \ell^2 \bigg[ -(f+k^2)\alpha X + O(k^2,kf,f^2) \bigg] \label{lem32} \,
    \end{align}
    \end{subequations}
    on $\cH$, where \emph{tr}$_h D^2\beta=q^{ab}D_aD_b\beta$.
\end{lemma}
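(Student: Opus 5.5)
The plan is to compute both contractions exactly in the gGNC frame $(K^a,l^a,\partial_A)$, using only $\sL_K\Phi=\sL_K\beta=0$, $K_{ab}=0$, and the Christoffel symbols of app.~\ref{app:GNC}. Here $O(\cdot)$ is understood in the sense of the big-$O$ appendix.

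\emph{The $P^{ab}$ contraction \eqref{lem32}.} This is the easy one. From $P^{ab}=-(1+\ell^2\alpha X)g^{ab}+\ell^2\alpha\nabla^a\Phi\nabla^b\Phi$ and $g^{uu}=f+k^2$, we get
\[
(\ud u)_a(\ud u)_bP^{ab}=-(f+k^2)(1+\ell^2\alpha X)+\ell^2\alpha\,\bigl(g^{ua}\partial_a\Phi\bigr)^2 .
\]
Using $g^{uv}=1$, $g^{uA}=-k^A$ and $\sL_K\Phi=0$, one finds
\[
g^{ua}\partial_a\Phi=(f+k^2)\sL_l\Phi-q^{ab}k_bD_a\Phi=O(k,f),
\]
so its square is $O(k^2,kf,f^2)$. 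This gives \eqref{lem32} exactly in $\ell$.

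\emph{The $(C^{-1})^{ab}$ contraction \eqref{lem31}: reduction to a $3\times3$ determinant.} I would use the adjugate formula from app.~\ref{app:contr}. Since $\epsilon^{abcd}=6K^{[a}l^b\epsilon^{cd]}$, $(\ud u)_aK^a=0$ and $(\ud u)_al^a=1$, the contraction $(\ud u)_a\epsilon^{abcd}$ is (up to normalization) $3K^{[b}\epsilon^{cd]}$. It therefore lives entirely in $\mathrm{span}\{K,\partial_A\}$. Hence $\det(Cg^{-1})(\ud u)_a(\ud u)_b(C^{-1})^{ab}$ equals, up to a fixed sign and the factor $1/\det h$, the determinant of the $3\times3$ block
\[
M=\begin{pmatrix}C_{KK}&C_{KB}\\ C_{AK}&C_{AB}\end{pmatrix},
\qquad C_{ab}=g_{ab}-\ell^2\beta_{ab}.
\]
This holds exactly in $\ell$. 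The Schur complement then gives
\[
\det M=\det(C_{AB})\bigl[C_{KK}-C_{KA}(C^{-1})^{AB}_{\rm ang}C_{KB}\bigr].
\]
The normalization can be fixed by checking $\ell=0$: there $C_{KK}=-f$, $C_{KA}=k_A$ and $C_{AB}=h_{AB}$, which must reproduce $f+k^2$.

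\emph{Expanding the entries.} Lemma~\ref{lem2} shows that $K^aK^b\beta_{ab}$ and $p^a{}_bK^c\beta_{ac}$ are $O(k,f)$, and its explicit formulas give their linear parts. Consequently $C_{KA}=O(k,f)$, and the Schur term $C_{KA}C^{AB}C_{KB}$ is $O(k^2,kf,f^2)$ to all orders in $\ell$. The remaining piece is $C_{KK}\det(C_{AB})/\det h$, where:
\begin{itemize}
\item $C_{KK}=-f-\ell^2\beta_{KK}$, with $\beta_{KK}=-\tfrac12 f\sL_lf\sL_l\beta-\tfrac12\langle Df-k\sL_lf,D\beta\rangle_h$ (the form appearing in \eqref{lem31}) plus $O(k^2,kf,f^2)$.
\item For the angular block, $p^a{}_Bp^c{}_A\nabla_a\nabla_c\beta=D_AD_B\beta-p p\nabla(\ud u)\,\sL_l\beta-p p\nabla(\ud v)\,\sL_K\beta$, with the Christoffel symbols taken from app.~\ref{app:GNC}. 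The last term vanishes since $\sL_K\beta=0$, and the middle term is $O(k,f)$. Hence $\det(C_{AB})/\det h=1-\ell^2\,\mathrm{tr}_hD^2\beta+O(\ell^4)+O(k,f)$.
\end{itemize}
Multiplying, the $O(k,f)$ corrections to the angular determinant only multiply $f$ and so become $O(kf,f^2)$. The product yields $f+k^2+\ell^2\bigl[f\,\mathrm{tr}_hD^2\beta+\tfrac12 f\sL_lf\sL_l\beta-\tfrac12\langle k\sL_lf-Df,D\beta\rangle_h\bigr]$, which matches the signs in \eqref{lem31}.

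\emph{Main obstacle.} The delicate step is bookkeeping the residual $\ell^4$ and $\ell^6$ pieces, namely $f\,\ell^4\det_h(D^2\beta)$-type terms and $\ell^4\beta_{KK}\,\mathrm{tr}D^2\beta$. These are linear in $f$ or in $(k,f)$ rather than quadratic. My plan is to read the bracket in \eqref{lem31} as the $O(\ell^2)$ coefficient, with higher powers of $\ell$ multiplying terms of the same structural type (linear in $f$, $Df$, $k$). Those terms are harmless for the inductive use in the proof of Theorem~A, where only the lowest nontrivial order in $\ell$ at each stage matters. I would check that this is consistent with how \eqref{lem31} is used later, and if necessary restate those terms as $\ell^2\cdot O(\ell^2)$ multiples of the displayed linear terms.
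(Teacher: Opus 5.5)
Your reduction is the same as the paper's. Contracting the adjugate with $(\ud u)_a(\ud u)_b$ leaves only the $(K,\partial_A)$ block of $C_{ab}$. The paper likewise keeps $C_{KK}\det(C_{AB})$ plus the $k$-quadratic cofactor terms and ends with $f+k^2+\ell^2[\dots]+\ell^4O(k,f)$. Your Schur-complement bookkeeping and your treatment of \eqref{lem32} are fine. So is your reading of the leftover $\ell^4,\ell^6$ terms linear in $(k,f)$: the paper's proof and \eqref{Q1} carry them as $\ell^4O(k,f)$, even though the displayed statement drops them.

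The step that fails is your final claim that the product matches the signs in \eqref{lem31}. Take your own inputs: the overall sign fixed by $\ell=0$, so the result is $-C_{KK}\det(C_{AB})/\det h+\dots$; $C_{KK}=-f-\ell^2\beta_{KK}$; $\det(C_{AB})/\det h=1-\ell^2\,\mathrm{tr}_hD^2\beta+\dots$; and your (correct) $\beta_{KK}=-\tfrac12 f\sL_lf\sL_l\beta+\tfrac12\langle k\sL_lf-Df,D\beta\rangle_h+O(k^2,kf,f^2)$. These give
\[
(f+\ell^2\beta_{KK})(1-\ell^2\,\mathrm{tr}_hD^2\beta)=f+\ell^2\Bigl[-\tfrac12 f\sL_lf\sL_l\beta+\tfrac12\langle k\sL_lf-Df,D\beta\rangle_h-f\,\mathrm{tr}_hD^2\beta\Bigr]+\dots,
\]
which is exactly minus the bracket in \eqref{lem31}. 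An independent check confirms this. For $C=g-\ell^2B$, to first order in $\ell^2$ one has $\det(Cg^{-1})\,C^{-1}(\xi,\xi)=g^{-1}(\xi,\xi)+\ell^2[B(\xi^\sharp,\xi^\sharp)-\mathrm{tr}_gB\,g^{-1}(\xi,\xi)]$. With $\xi=\ud u$ the $B_{uv}$ pieces cancel, leaving $\ell^2(\beta_{KK}-f\,\mathrm{tr}_hD^2\beta)$.

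The displayed signs correspond to $C_{ab}=g_{ab}+\ell^2\nabla_a\nabla_b\beta$. That is what the paper's proof silently uses, via its factors $(-f+\ell^2K^eK^f\beta_{ef})(h_{ac}+\ell^2\beta_{ac})$, contrary to the definition in Sec.~\ref{sec:setup}. So instead of asserting a match, either state the result with the $\ell^2$ bracket reversed, or fix the opposite convention explicitly. Then note why the discrepancy is harmless downstream:
\begin{itemize}
\item The $f$-linear terms vanish at the orders where \eqref{lem31} is used.
\item Reversing the sign of the $\langle k,D\beta\rangle_h$ term changes its coefficient in \eqref{Q11} and \eqref{Q21} simultaneously.
\item Hence Lemmas~\ref{lem7} and \ref{lemD9} go through unchanged, since the final identity involves only $(\beta')^2$ and $(\sL_{k}\Phi)^2$.
\end{itemize}
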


\emph{Proof:} It is well-known that
\begin{align}
    \det(Cg^{-1})(C^{-1})^{ab} =\text{adj}(C)^{ab} \, ,
    \label{adj1}
\end{align}
where
\begin{align}
    \text{adj}(C)^{ab}=\frac{1}{3!} \epsilon^{aa_1a_2a_3}\epsilon^{bb_1b_2b_3} C_{a_1b_1}C_{a_2b_2} C_{a_3b_3}
    \label{adj2}
\end{align}
is the adjugate of $C^{-1}$. By performing all contractions (see app. \ref{app:contr}) and using the results of lem. \ref{lem2}, we find:

\begin{align}
    &\det(Cg^{-1})(\ud u)_a(\ud u)_b (C^{-1})^{ab} \vphantom{\frac{a}{b}} \notag \\
    &= \frac{1}{2} \epsilon^{ab} \epsilon^{cd} (-f+\ell^2K^eK^f\beta_{ef})(h_{ac}+ \ell^2 \beta_{ac})(h_{bd} + \ell^2\beta_{bd}) + \frac{1}{3!}\epsilon^{ab} \epsilon^{cd} (-2k_ak_ch_{bd}+k_ak_dh_{bc} +k_ck_bh_{ad} - 2k_dk_bh_{ac}) \notag \\ 
    &\phantom{=}+ \ell^2O(k^2,kf,f^2) \vphantom{\frac{a}{b}} \notag \\
    &=f+k^2 + \ell^2 \left[ -K^aK^b\beta_{ab} + fq^{ab} \beta_{ab} + O(k^2,kf, f^2) \right] + \ell^4O(k,f) \vphantom{\frac{a}{b}} \notag \\
    &= f+k^2 + \ell^2 \left[ -\frac{1}{2} q^{ab}(k_a\sL_lf-D_af)D_b\beta + q^{ab}D_aD_b\beta \right.  \left.+ \frac{1}{2} f\sL_lf\sL_l\beta + O(k^2,kf,f^2) \right] + \ell^4O(k,f) \, ,
\end{align}
\end{widetext}
where we have defined $\epsilon^{ab}=\epsilon^{AB}(\partial_A)^a(\partial_B)^b$. For the second part of the lemma, we first note that
\begin{align}
    (\ud u)_a\nabla^a\Phi =(f+k^2)l^a\nabla_a\Phi - q^{ab}k_b\nabla_a\Phi =O(k,f)
\end{align}
using that $K^a$ is Killing. We immediately obtain:
\begin{align}
    &(\ud u)_a(\ud u)_bP^{ab} \notag \\
    &= - (1+\ell^2\alpha X)(\ud u)_a(\ud u)_bg^{ab} + \ell^2\alpha (\ud u)_a\nabla^a\Phi(\ud u)_b\nabla^b\Phi \notag \\
    &=-(f+k^2)(1+\ell^2\alpha X) + \ell^2O(k,f)^2 \notag \\
    &=  -(f+k^2) + \ell^2 (-(f+k^2)\alpha X + O(k^2,kf,f^2)) \, ,
\end{align}
which completes the proof. \qed

We do the same for contractions involving $p^a{}_b$:

\begin{lemma}
\label{lemma4}
    We have
    \begin{subequations}
    \begin{align}
        \det(Cg^{-1})p^a{}_b (\ud u)_c(C^{-1})^{bc} &= -q^{ab}k_b + \ell^2O(k,f) \, , \\
        p^a{}_b(\ud u)_c P^{bc} &= q^{ab}k_b + \ell^2O(k,f) \, .
    \end{align}
    \end{subequations}
    on $\cH$.
\end{lemma}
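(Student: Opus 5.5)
The plan mirrors the proof of lem.~\ref{lem3}, but with one slot of the quadratic forms contracted into the angular projector $p^a{}_b$ instead of $(\ud u)_b$. For the first identity I would start from the adjugate formula \eqref{adj1}, \eqref{adj2},
\begin{equation*}
\det(Cg^{-1})\,p^a{}_b(\ud u)_c(C^{-1})^{bc} = \frac{1}{3!}\,p^a{}_b(\ud u)_c\,\epsilon^{bb_1b_2b_3}\epsilon^{cc_1c_2c_3}C_{b_1c_1}C_{b_2c_2}C_{b_3c_3},
\end{equation*}
and insert $C_{ab}=g_{ab}-\ell^2\beta_{ab}$. Expanding in $\ell^2$ gives an $\ell^0$ piece, which is just the adjugate of $g$, namely $p^a{}_b(\ud u)_c g^{bc}$. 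From the inverse metric in app.~\ref{app:GNC} (with $g^{uA}=-k^A$) this equals $-q^{ab}k_b$ exactly.

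The order-$\ell^2$ and higher terms are sums of contractions of $\epsilon\epsilon$ with at least one factor of $\beta_{ab}$. Using $\epsilon^{abcd}=6K^{[a}l^b\epsilon^{cd]}$ together with $p^a{}_b(\ud u)_c$, every surviving term carries one of the following:
\begin{itemize}
\item a component of $g_{ab}$ of the type $g_{KK}=-f$ or $g_{KA}=k_A$, which is $O(k,f)$ directly from \eqref{gGNC1};
\item one of the $\beta$ contractions $K^aK^b\beta_{ab}$ or $p^a{}_bK^c\beta_{ac}$, which are $O(k,f)$ by lem.~\ref{lem2}.
\end{itemize}
I would check this bookkeeping in the style of app.~\ref{app:contr}: write the epsilons as $K,l,\epsilon^{AB}$ and note that $(\ud u)_c$ with one $\epsilon$ forces a $K$ index, while $p^a{}_b$ with the other forces an angular index. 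Once one factor is angular and the $\epsilon$'s must absorb a $K$ on both sides, the configurations in which all three $C$ factors avoid $K$-$K$ and $K$-angular components do not exist.

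This step, the combinatorics of the $\ell^2$ and $\ell^4$, $\ell^6$ terms, is where I expect the main effort. The potential danger is a term like $g_{Kl}\beta_{AB}\beta_{lA}$ that contains no $K$ leg. I would argue this cannot occur: the adjugate is antisymmetric in each $\epsilon$, and after contracting $(\ud u)_c$ only $K^c$ survives in the second $\epsilon$ (since $(\ud u)_c K^c=0$ and $(\ud u)_c l^c=1$ means the surviving slot is $K$). So every term contains some $C_{XK}$ with $X$ angular or $K$. These are $O(k,f)$ by lem.~\ref{lem2}, since $C_{KK}=-f-\ell^2K^aK^b\beta_{ab}$ and $C_{AK}=k_A-\ell^2 p^a{}_AK^b\beta_{ab}$.

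The second identity is immediate from the explicit form of $P^{ab}$:
\begin{equation*}
p^a{}_b(\ud u)_cP^{bc}=-(1+\ell^2\alpha X)\,p^a{}_b(\ud u)_cg^{bc}+\ell^2\alpha\,p^a{}_b\nabla^b\Phi\,(\ud u)_c\nabla^c\Phi .
\end{equation*}
The first term is $(1+\ell^2\alpha X)q^{ab}k_b=q^{ab}k_b+\ell^2 O(k)$. In the second term, the proof of lem.~\ref{lem3} already shows $(\ud u)_c\nabla^c\Phi=O(k,f)$, using $\sL_K\Phi=0$. Combining the two gives $q^{ab}k_b+\ell^2O(k,f)$ on $\cH$.
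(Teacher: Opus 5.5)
Your proposal is correct and follows essentially the same route as the paper. You use the adjugate formula, reduce the $\ell^0$ part to $p^a{}_b(\ud u)_c g^{bc}=-q^{ab}k_b$, show that every remaining term carries a factor $C_{KK}$ or $C_{AK}$ (hence is $\ell^2O(k,f)$ by lem.~\ref{lem2}), and compute $P^{ab}$ directly using $(\ud u)_c\nabla^c\Phi=O(k,f)$. One tightening of your bookkeeping: use the $K$ slot left in the first $\epsilon$ after contracting with $p^a{}_b$, which must pair with a slot of the second $\epsilon$, and all of those are $K$ or angular once $(\ud u)_c$ has absorbed the $l$; the $K$ of the second $\epsilon$ alone does not suffice, since it can pair with $l$ and give $C_{lK}\approx 1$.
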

\emph{Proof:} As before, we use the formulas \eqref{adj1} and \eqref{adj2} to compute $C^{-1}$. From Appendix B we see that all terms involving $p^a{}_bl^cC_{ac}$ do not contribute to the $\ell^0$ part of $p^a{}_b(\ud u)_c\text{adj}(C)^{bc}$. Moreover, each term contains at least one factor of $K^aK^bC_{ab}$ or $p^a{}_bK^cC_{ac}$. Thus:
\begin{align}
    &\det(Cg^{-1})p^a{}_b(\ud u)_c(C^{-1})^{bc} \notag \\ 
    &= \frac{1}{3!} \epsilon^{ab} \epsilon^{cd} (3k_ch_{bd} - 3k_dh_{bc}) +\ell^2O(k,f) \notag \\ 
    &= - q^{ab}k_b + \ell^2O(k,f)
\end{align}
For the second part of the lemma, we have already seen that $(\ud u)_a\nabla^a\Phi=O(k,f)$, which yields:
\begin{align}
    &p^a{}_b (\ud u)_c P^{bc} \notag \\ 
    &= q^{ab}k_b + \ell^2(\alpha Xq^{ab}k_b + \alpha p^a{}_b\nabla^b\Phi O(k,f)) \notag \\ &= q^{ab}k_b + \ell^2 O(k,f).
\end{align}
\qed

\begin{widetext}
The second term in the definition of $Q$ is evaluated as follows:

\begin{lemma}
\label{lemma5}
    We have
    \begin{subequations}
    \begin{align}
        &(\ud u)_c(\ud u)_d (\ud u)_e (\ud u)_f(2C_{a_1a_2}C_{b_1b_2} - C_{a_1b_1}C_{a_2b_2}) C^{a_1cda_2}C^{b_1efb_2} \vphantom{\frac{a}{b}} \notag \\
        &=\frac{1}{4} \ell^4(\beta'(\Phi))^2 \left\| \frac{1}{2} \sL_kh\sL_lf + k \otimes D\sL_l f \right\|^2_h + \ell^4O(k^3,kf,f^2) + \ell^6O(k^2,kf,f^2) \, , \\
        &(\ud u)_c(\ud u)_d (\ud u)_ep^f{}_g(2C_{a_1a_2}C_{b_1b_2}-C_{a_1b_1}C_{a_2b_2}) C^{a_1(cd|a_2}C^{b_1|eg)b_2} =\ell^4O(k^2,kf,f^2) \vphantom{\frac{a}{b}}\, .
    \end{align}
    \end{subequations}
    on $\cH$, where $\|\cdot\|_h$ is the tensor norm associated to $h$.
\end{lemma}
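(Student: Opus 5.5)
Since $C^{abcd} = -\frac14\ell^2\beta'\epsilon^{abe_1e_2}\epsilon^{cdf_1f_2}R_{e_1e_2f_1f_2}$ carries an explicit factor $\ell^2$, each product $C^{a_1cda_2}C^{b_1efb_2}$ is $\tfrac{1}{16}\ell^4(\beta')^2$ times a quadratic expression in the curvature. The contraction to be computed is therefore exactly the left-hand side of \eqref{contr1} (respectively \eqref{contr2}), multiplied by $\tfrac1{16}\ell^4(\beta')^2$. I will start from the decompositions \eqref{contr1}, \eqref{contr2} in app.~\ref{app:contr}, which use $\epsilon^{abcd}=6K^{[a}l^b\epsilon^{cd]}$. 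These write the quantity as a sum of products of two curvature factors, each contracted into $\epsilon^{ab}$, $K^a$, $l^a$, together with contractions of $C_{ab}=g_{ab}-\ell^2\beta_{ab}$ into $K^a$, $l^a$, $p^a{}_b$.

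The first step is to sort the terms by their big-$O$ order on $\cH$ using lem.~\ref{lem2}. The factors $K^aK^bC_{ab}=-f-\ell^2K^aK^b\beta_{ab}$ and $p^a{}_bK^cC_{ac}=k_b-\ell^2p^a{}_bK^c\beta_{ac}$ are $O(k,f)$ by \eqref{eq:lem21} and \eqref{eq:lem22}. The curvature components \eqref{eq:lem23}--\eqref{eq:lem25} with two $K$'s, or with one $K$ and three angular slots, are also $O(k,f)$. For the second identity, every term in \eqref{contr2} already contains two such factors, as the paper records at the end of \eqref{contr2}. Hence it is $O(k^2,kf,f^2)$, and multiplying by the prefactor $\ell^4$ gives the second claim directly.

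For the first identity I need the leading quadratic part. The expansion \eqref{contr1} shows that each term carries at least two $O(k,f)$ factors, some of which come from $C$-contractions ($KKC$, $KC$) and some from curvature contractions. The last term is special: $16(2C_{ab}C_{cd}-C_{ac}C_{bd})\epsilon^{ae}\epsilon^{bf}K^gK^hR_{gehf}\,\epsilon^{ci}\epsilon^{dj}K^kK^lR_{kilj}$ contains $C$ only through angular components, $C_{ab}\approx h_{ab}+O(\ell^2)$. This term is then $16\big(2(\mathrm{tr}\,M)^2-\|M\|^2\big)$ up to $\epsilon$-duality, where $M_{cf}=K^ap^b{}_cK^dp^e{}_fR_{abde}$. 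I will compute the leading $O(k,f)$ part of $M$ from the formula in lem.~\ref{lem2}. The terms linear in $k$ or $f$ and free of further $k$, $f$ factors are $\tfrac12 D_cD_ff$, $-\tfrac12 D_{(c}k_{f)}\sL_lf$ and $q^{ab}D_{[a}k_{c]}D_{[b}k_{f]}$. The last of these is quadratic in $k$ and so is dropped.

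The main obstacle is to show that, after squaring and adding the other terms, which involve \eqref{eq:lem25} and the mixed $K$-$C$ products, everything recombines into $4\|\tfrac12\sL_kh\,\sL_lf+k\otimes D\sL_lf\|_h^2$ up to $O(k^3,kf,f^2)$. My plan is to use the EoM-derived relation from the proof of lem.~\ref{lem2}: the quantity $K^al^bK^cp^d{}_eR_{abcd}$, which equals $-\tfrac12D_e\sL_lf+\dots$, is $O(k,f)$ only after the $\ell^2$-suppressed $\beta$ terms are included. The identity therefore links the $D\sL_lf$ contributions to $k$, producing the cross term $k\otimes D\sL_lf$. Terms quadratic in $f$ or of type $kf$ are absorbed into the error. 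Pure $k^2$ terms with $\sL_l$ derivatives survive, and the $\epsilon\epsilon$ contraction identity $\epsilon^a{}_c\epsilon^{cb}=q^{ab}$ turns $2(\mathrm{tr})^2-\|\cdot\|^2$ into a squared norm.

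Finally, all $\ell^2$ corrections from $C_{ab}-g_{ab}$, and from the $\ell^2$ pieces in lem.~\ref{lem2}, appear multiplied by $\ell^4$. They contribute only to $\ell^6O(k^2,kf,f^2)$, which completes the bookkeeping.
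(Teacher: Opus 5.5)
Your skeleton matches the paper's. That covers the prefactor $\tfrac1{16}\ell^4(\beta')^2$, the second identity from the fact that every term of \eqref{contr2} carries two small factors, and, for the first identity, the last term of \eqref{contr1} combined with the $O(k)$ part $-\tfrac12 D_{(c}k_{f)}\sL_lf=-\tfrac14(\sL_kh)_{cf}\sL_lf$ of $M_{cf}$.

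However, the ``main obstacle'' you describe does not exist. No $l^a$ occurs anywhere in \eqref{contr1}, so \eqref{eq:lem25} never enters it. Every term other than the last carries either a factor $K^aK^bC_{ab}=-f+\ell^2O(k,f)$ or at least three factors from $p^a{}_bK^cC_{ac}$, \eqref{eq:lem23}, \eqref{eq:lem24}. Each such term is therefore already $O(k^3,kf,f^2)+\ell^2O(k^2,kf,f^2)$. This is exactly the observation the paper's proof rests on, and nothing has to be recombined. Nor is the $k\otimes D\sL_lf$ piece produced by cross terms. Since $K^al^bK^cp^d{}_eR_{abcd}=\pm\tfrac12 D_e\sL_lf+O(k,f)$, \eqref{eq:lem25} says $D_e\sL_lf=O(k,f)$. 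That piece therefore lies inside the error terms; the paper simply carries it along in $T_{cf}$.

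The actual gap is the step you wave through at the end. Eliminating the $\epsilon$'s with $\epsilon^a{}_c\epsilon^{cb}=\pm q^{ab}$ gives exactly $16[2(\mathrm{tr}M)^2-\|M\|_h^2]$. The identity cannot turn this indefinite form into a squared norm, and two further inputs are needed.

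First, the trace must be shown negligible. By \eqref{eq:lem25}, $q^{cf}M_{cf}=K^aK^bR_{ab}+O(f,k^2,kf)$. The equation $K_aK_bE_g^{ab}=0$ with $\sL_K\Phi=0$ gives $K^aK^bR_{ab}=O(f)+\ell^2O(k,f)$, so $\ell^4(\mathrm{tr}M)^2$ lies in the error.

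Second, what remains is $-16\|M\|_h^2$. This is the claimed leading term with the \emph{opposite} sign, and that sign is precisely what lem.~\ref{lemD9} exploits. So with the index pairing of \eqref{contr1} taken literally, the statement does not follow. It does follow if $C_{a_1a_2}$ in $Q^{abcd}$ contracts one index of each $C^{abcd}$ factor. Then the last term becomes $16[2\|M\|_h^2-(\mathrm{tr}M)^2]=32\|M-\tfrac12(\mathrm{tr}M)h\|_h^2\ge 0$, which after the trace argument is a positive multiple of the claimed term. This is also the sign needed for the quadratic cone to lie between the two quartic sheets, since away from the cones $Q\approx-(C^{-1}(\xi))^2<0$.

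The paper's display jumps straight to $16q^{e_1f_1}q^{e_2f_2}T_{e_1e_2}T_{f_1f_2}$ and passes over this same point. A complete proof along your lines must therefore fix the pairing and supply the trace argument explicitly.
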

\emph{Proof:} First note that, from lem. \ref{lem2},
\begin{align}
    &K^ap^b{}_cK^dp^e{}_fR_{abde} = - \frac{1}{4} \sL_kh_{cf} \sL_l f - \frac{1}{2} k_cD_f\sL_l f + O(k^2,f) =T_{cf} + O(k^2,f) 
\end{align}
\end{widetext}
as well as $K^aK^b\beta_{ab},p^a{}_bK^c\beta_{ac}=O(k,f)$, so that
\begin{align}
    K^aK^bC_{ab} = - f + \ell^2O(k,f) \, .
\end{align}
The full expression for the above contraction can be found in app. \ref{app:contr}, where the last term is the important one. As all other terms are at least $O(k^3,kf,f^2)+\ell^2O(k^2,kf,f^2)$, it remains to show that this term equals $16\|T\|_h^2+O(k^3,kf,f^2)+\ell^2O(k^2,kf,f^2)$. Note that for this purpose, we only have to consider the $\ell^0$ terms in $p^a{}_bp^c{}_dp^e{}_fp^g{}_h(2C_{ac}C_{eg}-C_{ae}C_{cg})$ since the above expression must be multiplied by $\ell^4$.

\begin{widetext}
\begin{align}
    &16p^{a_1}{}_{c_1}p^{a_2}{}_{c_2}p^{b_1}{}_{d_2}p^{b_2}{}_{d_2} (2C_{a_1a_2}C_{b_1b_2}-C_{a_1b_1}C_{a_2b_2})  \epsilon^{c_1e_1}\epsilon^{c_2e_2} K^{n_1}p^{g_1}{}_{e_1}K^{n_2}p^{g_2}{}_{e_2} R_{n_1g_1n_2g_2} \vphantom{\frac{a}{b}} \notag \\  
    &\times \epsilon^{d_1f_1} \epsilon^{d_2f_2}K^{m_1} p^{h_1}{}_{f_1} K^{m_2}p^{h_2}{}_{f_2}R_{m_1h_1m_2h_2} \vphantom{\frac{a}{b}} \notag \\
    &=16(2h_{c_1c_2}h_{d_1d_2}-h_{c_1d_1}h_{c_2d_2}) \epsilon^{c_1e_1}\epsilon^{c_2e_2} \epsilon^{d_1f_1} \epsilon^{d_2f_2}T_{e_1e_2}T_{f_1f_2} + O(k^3,kf,f^2) + \ell^2O(k^2,kf,f^2) \vphantom{\frac{a}{b}} \notag \\
    &=16q^{e_1f_1}q^{e_2f_2}T_{e_1e_2}T_{f_1f_2} + O(k^3,kf,f^2) + \ell^2O(k^2,kf,f^2) \vphantom{\frac{a}{b}} \notag \\
    &=16\left\| -\frac{1}{4} \sL_kh\sL_l f - \frac{1}{2} k \otimes D\sL_l f \right\|^2_h + O(k^3,kf,f^2) + \ell^2O(k^2,kf,f^2) \, .
\end{align}
\end{widetext}
The claim follows when multiplying by $\frac{1}{16}\ell^4 (\beta')^2$. For the second relation it suffices to check that each term contains at least two of the following factors:
\begin{align}
\begin{split}
    &K^aK^bC_{ab} , \;\,  p^a{}_bK^cC_{ac}, \; \, K^ap^b{}_cK^dp^e{}_f R_{abde} \, ,  \\
    &K^ap^b{}_cp^d{}_ep^f{}_gR_{abdf}, \; \,K^al^bK^cp^d{}_eR_{abcd} \, .
\end{split}
\end{align}
Again, the expression for a specific permutation  can be found in app. \ref{app:contr} and for the remaining permutations this result can be verified analogously. \qed 

We are now in the position to write \eqref{Qrel1} and \eqref{Qrel2} in a useful form: 

\begin{widetext}
\begin{lemma}
    We have
    \begin{align}
        &\left[f+k^2+\ell^2 \left( \frac{1}{2} \langle -k\sL_lf + Df, D\beta  \rangle_h +f\; \text{\emph{tr}}_hDD\beta + \frac{1}{2} f\sL_l f \sL_l\beta + O(k^2,kf,f^2) \right) + \ell^4O(k,f) \right] \notag \\
        &\times  \left[ f+k^2+ \ell^2 \left( (f+k^2) \alpha X + O(k^2,kf,f^2) \right) \vphantom{\frac{1}{2}} \right] \notag \\
        &= \frac{1}{4} \ell^4 (\beta')^2 \left\| \frac{1}{2} \sL_k h \sL_lf + k \otimes D\sL_l f \right\|_h^2 + \ell^4 O(k^3,kf,f^2) + \ell^6O(k^2,kf,f^2) 
        \label{Q1}
    \end{align}
    on $\cH$ and
    \begin{align}
        &\frac{1}{2} \left[ f+k^2 + \ell^2 \left( \frac{1}{2} \langle-k\sL_lf + Df,D\beta\rangle_h + f\, \text{\emph{tr}}_h DD\beta + \frac{1}{2} f\sL_lf \sL_l\beta +O(k^2,kf,f^2) \right) +\ell^4O(k,f)  \right]   \left[q^{ab}k_b+\ell^2O(k,f) \vphantom{\frac{1}{2}}\right] \notag \\
        &+\frac{1}{2} \left[ f+k^2+\ell^2\left( (f+k^2) \alpha X +O(k^2,kf,f^2) \right) \vphantom{\frac{1}{2}} \right]  \left[q^{ab}k_b + \ell^2O(k,f) \vphantom{\frac{1}{2}}\right] \notag \\
        &= c\phi^a + \ell^4O(k^2,kf,f^2)  \vphantom{\frac{1}{2}}
        \label{Q2}
    \end{align}
    on $\cH$.
\end{lemma}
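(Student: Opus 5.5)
The plan is to substitute the decomposition \eqref{Qdef} of $Q^{abcd}$ directly into the two relations \eqref{Qrel1} and \eqref{Qrel2} of lem. \ref{lem1}, and then read off each piece from lemmas \ref{lem3}, \ref{lemma4} and \ref{lemma5}. Recall that
\[
Q^{abcd}=\det(Cg^{-1})(C^{-1})^{(ab}P^{cd)}+(2C_{a_1a_2}C_{b_1b_2}-C_{a_1b_1}C_{a_2b_2})C^{a_1(ab|a_2}C^{b_1|cd)b_2}.
\]

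For \eqref{Q1} I contract all four indices with $(\ud u)$. The symmetrization is then trivial, so the first term factorizes as
\[
\bigl[\det(Cg^{-1})(\ud u)_a(\ud u)_b(C^{-1})^{ab}\bigr]\bigl[(\ud u)_c(\ud u)_dP^{cd}\bigr].
\]
By \eqref{lem31} the first bracket is the first square bracket in \eqref{Q1}. By \eqref{lem32} the second bracket is minus the second square bracket in \eqref{Q1}. The sign of the quadratic-times-quadratic part of \eqref{lem31} needs a check, since it flips relative to the statement; I will carry it as written in \eqref{lem31}, up to this convention.

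The second (Weyl-like) term is given by the first identity of lem. \ref{lemma5}. It requires $C^{a_1cda_2}$ to be written out as $-\frac14\ell^2\beta'\epsilon\epsilon R$, which produces the overall $\ell^4(\beta')^2/16$ prefactor; the lemma has already absorbed this. Setting the sum equal to zero by \eqref{Qrel1} and moving the Weyl-like term across gives \eqref{Q1}, with an overall minus sign distributed as needed. Any leftover mixed terms are of type $\ell^2\cdot\ell^2 O(k,f)\cdot O(k,f)$, and these fall into the displayed error classes $\ell^4O(k^2,kf,f^2)$.

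For \eqref{Q2} I contract three indices with $(\ud u)$ and one with $p^e{}_d$. The symmetrization over $(abcd)$ in the first term gives, by counting index placements, the combination
\[
\tfrac12\bigl[\text{adj}(C)(\ud u,\ud u)\bigr]\bigl[p\,P(\ud u)\bigr]+\tfrac12\bigl[\text{adj}(C)(p,\ud u)\bigr]\bigl[P(\ud u,\ud u)\bigr].
\]
The weights are $3/6$ for placing the $p$-index in the $P$ pair and $3/6$ for placing it in the $C^{-1}$ pair. I insert lem. \ref{lem3} and lem. \ref{lemma4} for these factors. The products again reorganize into the two displayed lines, after using $-q^{ab}k_b$ times $-(f+k^2)$ for the second product. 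The Weyl-like contraction is $\ell^4O(k^2,kf,f^2)$ by the second identity of lem. \ref{lemma5}. The right-hand side $c\phi^a$ comes from \eqref{Qrel2} after the renaming $c(\Omega-\omega)\to c$.

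I expect the main obstacle to be pure bookkeeping rather than anything conceptual. Specifically, it is verifying that every cross term produced by multiplying the $\ell^2$ corrections lands inside the stated error classes. For this I will use that $O(k,f)\cdot O(k,f)\subset O(k^2,kf,f^2)$ in the grading defined by $N_f,N_k$. I will also track the symmetrization weights and signs, and check that the $\ell^4O(k,f)$ remainder from \eqref{lem31}, multiplied by the $O(k,f)$ second factor, is indeed $\ell^4O(k^2,kf,f^2)$, consistent with the right-hand sides.
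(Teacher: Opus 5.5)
Your proposal is correct and is essentially the paper's own proof: substitute the decomposition of $Q^{abcd}$ into \eqref{Qrel1} and \eqref{Qrel2}, use the equal $\tfrac12$ weights coming from the symmetrization, insert lemmas \ref{lem3}, \ref{lemma4} and \ref{lemma5}, and let the minus sign in $(\ud u)_a(\ud u)_bP^{ab}$ produce the overall sign. Two small corrections: the sign flip you suspect does not exist, since $-\tfrac12\langle k\sL_lf-Df,D\beta\rangle_h=\tfrac12\langle -k\sL_lf+Df,D\beta\rangle_h$; and in \eqref{Q1} no cross terms need to be absorbed, because the left side stays in factored form---this matters, since $\ell^4O(k^2)$ is \emph{not} among the error classes of \eqref{Q1} (the second factor is $O(f,k^2)$, not merely $O(k,f)$), and such a term would spoil lem.~\ref{lem6}.
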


\emph{Proof:} \eqref{Q1} is essentially the relation
\begin{align}
    &(\ud u)_a(\ud u)_b (\ud u)_c (\ud u)_dQ^{abcd} \notag \\ 
    &=(\ud u)_a(\ud u)_b (\ud u)_c (\ud u)_d \left( \det(Cg^{-1})(C^{-1})^{ab}P^{cd} +  (2C_{a_1a_2}C_{b_1b_2} - C_{a_1b_1}C_{a_2b_2}) C^{a_1aba_2} C^{b_1cdb_2} \right) = 0 \, ,
\end{align}
which clearly implies the first part of the lem. when using the expressions for $(\ud u)_a(\ud u)_b(C^{-1})^{ab}$, $(\ud u)_a(\ud u)_bP^{ab}$, etc. found in lem. \ref{lem3},\ref{lemma5} and multiplying by $(-1)$ on both sides, whereas \eqref{Q2} is
\begin{align}
    &(\ud u)_a(\ud u)_b (\ud u)_c p^d{}_eQ^{abce} \notag \\
    &= (\ud u)_a (\ud u)_b (\ud u)_c p^d{}_e  \left( \frac{1}{2} \det(Cg^{-1}) \left[ (C^{-1})^{ea} P^{bc} + (C^{-1})^{ab}P^{ec} \right] +  (2C_{a_1a_2}C_{b_1b_2} - C_{a_1b_1}C_{a_2b_2}) C^{a_1(ab|a_2} C^{b_1|ce)b_2} \right) \notag \\ 
    &=c\phi^d \, .
\end{align}
and using lem. \ref{lem3},\ref{lemma4},\ref{lemma5}. \qed
\end{widetext}

\subsection{Inductive Proof of Theorem A}

By (i), we may expand all tensor fields such as 
$\Phi, g_{ab}, h_{ab}, q^{ab}, p^a{}_b, c$ as power 
series in $\ell$ as in sec. \ref{sec:thmAproofoutline}. The coefficients in these expansions are analytic tensor fields on $\cM$.

The zeorth order $g_{(0)ab}$ and $\Phi_{(0)}$ are solutions to the EoM's of ordinary Einstein-scalar theory. In this case, $\cH$ is characteristic if and only if it is null. Thus, on $\cH$, $f_{(0)}=k_{(0)a}=0$ as well as $\sL_l f_{(0)}=2\kappa$, 
$D_a \Phi_{(0)}=0$, since $\cH$ is a non-degenerate Killing horizon for $\ell=0$ by (ii). 

Now we go inductively order by order in $\ell$ through \eqref{Q1} and \eqref{Q2} to successively redefine the Killing vector field $K^a \to K^a-\ell^j \omega_{(j)} \phi^a$ in every order $\ell^j$ such that, w.r.t. the gGNCs associated with the new $K^a$, 
$f_{(j)}=k_{(j)a}=0$ on $\cH$ in all orders of $\ell$. To do so, we substitute all series expansions of all quantities into \eqref{Q1} and \eqref{Q2} and compare the coefficients corresponding to a specific order in $\ell$ of both sides.

To begin, \eqref{Q2} in order $\ell^0$ and $\ell^1$ trivially implies $c_{(0)}=c_{(1)}=0$ on $\cH$, whereas from \eqref{Q1} in order $\ell^2$ we obtain $(f+k^2)_{(1)}=0=f_{(1)}$ on $\cH$, which we can use to infer from \eqref{Q2} at order $\ell^2$ that $c_{(2)}=0$ on $\cH$. At order $\ell^3$, \eqref{Q2} reads:
\begin{align}
    (f+k^2)_{(2)}k_{(1)}^a = c_{(3)}\phi^a \, ,
\end{align}
where here and in the following, we set 
\be
k^a := q^{ab}k_b,
\ee
contrary to our usual convention that indices are always raised with $g^{ab}$. From \eqref{Q1} in order $\ell^4$ we obtain $(f+k^2)_{(2)}$ on $\cH$ and therefore $c_{(3)}=0$ on $\cH$. 

\subsubsection{Induction Start: Base Case}
\label{sec:indstart}

Having dealt with these relatively straightforward consequences, we can now start the induction proper. As already outlined in sec. \ref{sec:statement}, it is possible to prove the vanishing of $f,k_a,c$ on the horizon inductively. The $n=0$ case corresponds to $f_{(j)} = 0$ for all $j\leq 3$, $k_{(j)a}=0$ for all $j\leq 1$, $c_{(j)}=0$ for all $j\leq 5$ after a suitable readjustment of $K^a$. As we will see, the key equations for the proof of these statements are given in the following lemma:
\begin{lemma}
\label{lem6}
    Define $\langle\cdot,\cdot\rangle=\langle \cdot,\cdot\rangle_{h_{(0)}}$ and $\|\cdot\|^2=\langle\cdot,\cdot\rangle$. Then we have
    \begin{subequations}
    \begin{align}
        &\frac{1}{4} (\beta_{(0)}')^2 \kappa^2 \left\| \sL_{k_{(1)}}h_{(0)} \right\|^2 \notag \\
        &=  \left[(f+k^2)_{(3)}+ \vphantom{\frac{1}{2}} \kappa \langle k_{(1)},D\beta_{(0)}\rangle \right](f+k^2)_{(3)} \, ,
        \label{Q11} \\
        &\left[ (f+k^2)_{(3)}+\frac{1}{2} \kappa \langle k_{(1)},D\beta_{(0)}\rangle \right]k_{(1)}^a = c_{(4)} \phi^a \, 
        \label{Q21}
    \end{align}
    \end{subequations}
    on $\cH$. 
\end{lemma}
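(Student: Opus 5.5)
Lemma \ref{lem6} is proved by expanding \eqref{Q1} to order $\ell^6$ and \eqref{Q2} to order $\ell^5$, using what is already known on $\cH$:
\begin{itemize}
\item $f_{(0)}=k_{(0)a}=0$, $\sL_l f_{(0)}=2\kappa$ and $D_a\Phi_{(0)}=0$, hence $D_a\beta_{(0)}=0$ and $\beta_{(0)}'$ constant on $\cH$;
\item $f_{(1)}=0$ and $(f+k^2)_{(2)}=0$;
\item $c_{(j)}=0$ for $j\le 3$.
\end{itemize}
Since $k_{(0)}=0$, we have $k^2=\ell^2\|k_{(1)}\|^2+O(\ell^3)$, so $(f+k^2)_{(2)}=0$ gives $f_{(2)}=-\|k_{(1)}\|^2$. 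The unknowns at the relevant order are therefore $k_{(1)a}$ and the combination $(f+k^2)_{(3)}$.

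\emph{Power counting.} First I record the $\ell$-order of every structure appearing in \eqref{Q1}, \eqref{Q2}. A term $O(f^r k^s)$ built from unhit $f,k$ and their $D,\sL_K$ derivatives starts at order $\ell^{r+2s}$ at the earliest, since $f=O(\ell^2)$ and $k=O(\ell)$ on $\cH$. More precisely, $f=(f+k^2)-k^2$ with $(f+k^2)=O(\ell^3)$ and $k^2=O(\ell^2)$.

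\emph{Left side of \eqref{Q1}.} Each bracket is $(f+k^2)$ plus $\ell^2$-corrections. In the first bracket, the only correction reaching order $\ell^3$ is $\tfrac12\ell^2\langle -k\sL_lf,D\beta\rangle$, which at lowest order is $-\kappa\ell^3\langle k_{(1)},D\beta\rangle$ (with $\sL_l f_{(0)}=2\kappa$, up to the sign convention that produces the stated $+\kappa$). The other corrections are of higher order:
\begin{itemize}
\item $\langle Df,D\beta\rangle$ is $O(\ell^4)$, since $D\beta_{(0)}=0$;
\item $f\,\mathrm{tr}_hDD\beta$ and $f\sL_lf\sL_l\beta$ are $O(\ell^2)\cdot\ell^2$;
\item the remainders $O(k^2,kf,f^2)$ are $O(\ell^2)$, hence $O(\ell^4)$ after the $\ell^2$ prefactor.
\end{itemize}
The second bracket is $(f+k^2)_{(3)}\ell^3+O(\ell^4)$.

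A subtlety to check: the coefficient involves $D\beta$ at order $\ell^1$, i.e.\ $\beta'_{(0)}D\Phi_{(1)}$, so $\langle k_{(1)},D\beta_{(0)}\rangle$ should be read as the lowest nonvanishing pairing. I will verify this bookkeeping carefully.

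The product of the two brackets is then $\ell^6\big[(f+k^2)_{(3)}+\kappa\langle k_{(1)},D\beta\rangle\big](f+k^2)_{(3)}$.

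\emph{Right side of \eqref{Q1}.} The term $\tfrac14\ell^4(\beta')^2\|\tfrac12\sL_kh\,\sL_lf+k\otimes D\sL_lf\|^2$ has argument of order $\ell$. Its leading part is $\kappa\,\sL_{k_{(1)}}h_{(0)}$, because $D\sL_lf_{(0)}=D(2\kappa)=0$ as $\kappa$ is constant. This contributes $\tfrac14\ell^6(\beta'_{(0)})^2\kappa^2\|\sL_{k_{(1)}}h_{(0)}\|^2$. The remainders $\ell^4O(k^3,kf,f^2)$ and $\ell^6O(k^2,\dots)$ are $O(\ell^7)$ and $O(\ell^8)$. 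This gives \eqref{Q11}.

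\emph{Expansion of \eqref{Q2}.} Both brackets multiply $q^{ab}k_b+\ell^2O(k,f)=\ell k^a_{(1)}+O(\ell^2)$. The sum of the two scalar brackets is $2\ell^3(f+k^2)_{(3)}+\kappa\ell^3\langle k_{(1)},D\beta\rangle$, the latter arising only from the first bracket, and with the prefactors $\tfrac12$ it reproduces the coefficient $\tfrac12\kappa$ in \eqref{Q21}. The remainder $\ell^4O(k^2,kf,f^2)$ is $O(\ell^6)$. Collecting the $\ell^4$ terms gives \eqref{Q21}. If the overall orders do not match the statement, I will reconcile the indexing of $c$ with the earlier step $c_{(3)}=0$.

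\emph{Main obstacle.} The hard part is showing rigorously that no other term in the $O(\cdot)$ remainders, or hidden inside $\ell^2O(k,f)$, contributes at the critical order. This requires two things. First, the big-$O$ classes must count $\sL_l$-hit factors as $O(1)$, so the $\ell$-order really is governed by the unhit $f,k$ powers. Second, derivatives $D_a$ and $\sL_K$ of $f_{(j)},k_{(j)}$ vanish along $\cH$ whenever the fields themselves do. I will make this bookkeeping explicit by a lemma: $O(f^rk^s)|_{\cH}=O(\ell^{2r+s})$, improving to $O(\ell^{3r+s})$ when $f$ enters only through $f+k^2$.
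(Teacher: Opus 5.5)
Your route is the paper's. At order $\ell^6$ in \eqref{Q1} only $A_{(3)}B_{(3)}$ survives, because both brackets vanish through order $\ell^2$. The norm term reduces to $\tfrac14(\beta'_{(0)})^2\kappa^2\|\sL_{k_{(1)}}h_{(0)}\|^2$ once $D_a\sL_l f_{(0)}=0$ is used. The $\ell^4$ coefficient of \eqref{Q2} gives $\tfrac12(A_{(3)}+B_{(3)})k^a_{(1)}=c_{(4)}\phi^a$. Two slips are harmless:
\begin{itemize}
\item \eqref{Q2} is needed at order $\ell^4$, not $\ell^5$. Your body already works at $\ell^4$, and $c_{(4)}$ needs no reindexing.
\item On $\cH$ a term $O(f^rk^s)$ is $O(\ell^{2r+s})$, not $O(\ell^{r+2s})$; you state it correctly at the end.
\end{itemize}

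The point you should not leave as written is the pairing term. Your proposed re-reading through $\beta'_{(0)}D\Phi_{(1)}$ is wrong by power counting. Once $D\beta_{(0)}=0$, the lowest surviving piece of $\tfrac12\ell^2\langle k\sL_l f,D\beta\rangle$ is of order $\ell^2\cdot\ell\cdot\ell=\ell^4$. It therefore sits in $A_{(4)}$, which multiplies $B_{(2)}=0$ in \eqref{Q1} at order $\ell^6$ and $k_{(0)}=0$ in \eqref{Q2} at order $\ell^4$, so it cannot enter the lemma.

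The correct resolution is the fact you already listed, which the paper records just before the lemma: $D_a\beta_{(0)}=\beta'_{(0)}D_a\Phi_{(0)}=0$ on $\cH$. Hence $\langle k_{(1)},D\beta_{(0)}\rangle=0$, and both brackets reduce to $(f+k^2)_{(3)}$. This is also the only legitimate way to dispose of your sign issue. \eqref{Q1} as printed gives $-\kappa\langle k_{(1)},D\beta_{(0)}\rangle$, and no convention turns this into $+\kappa$; the paper's own proof simply writes the term with a plus sign. The mismatch is immaterial only because the term vanishes.

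Your proposed bookkeeping lemma $O(f^rk^s)|_{\cH}=O(\ell^{2r+s})$ is what the paper uses implicitly. The $O(\ell^{3r+s})$ refinement is not needed.
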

\emph{Proof:} Consider \eqref{Q1} in order $\ell^6$ and \eqref{Q2} in order $\ell^4$. Whenever $(\beta')^2$ or $\|\cdot\|_h$ appear in order $\geq1$, the tensor norm always gives combinations of inner products between $(\frac{1}{2}\sL_kh\sL_lf+k\otimes D\partial_uf)_{(j_1)}$ and $(\frac{1}{2}\sL_kh\sL_lf+k\otimes D\sL_lf)_{(j_2)}$, where either $j_1=0,j_2=1$ or $j_1=1,j_2=0$. But $O(k,f)_{(0)}=0$, so that such terms vanish. Therefore, the right side of \eqref{Q1} reads:
\begin{align}
    &\frac{1}{4} (\beta'^2)_{(0)} \left\| \frac{1}{2} \sL_{k_{(1)}}h_{(0)} \sL_lf_{(0)}+k_{(1)} \otimes D\sL_lf_{(0)}\right\|^2 \notag \\  &+O(k^3,kf,f^2)_{(2)}+O(k^2,kf,f^2)_{(0)} \, .
\end{align}
Note that $\sL_lf_{(0)}=2\kappa$ on $\cH$ implies $D_a\sL_lf_{(0)}=0$ on $\cH$. The left side of \eqref{Q1} yields:
\begin{align}
    \left[(f+k^2)_{(3)}+\frac{1}{2} \langle k_{(1)}\sL_l f_{(0)},D\beta_{(0)}\rangle \right](f+k^2)_{(3)} \, ,
\end{align}
which implies \eqref{Q11}. By an analogous argument, the left side of \eqref{Q2} gives:
\begin{align}
    &\frac{1}{2} k_{(1)}^a \left[ (f+k^2)_{(3)} + \kappa \langle k_{(1)}, D\beta_{(0)}\rangle \right] \notag \\ 
    &+ \frac{1}{2} k_{(1)}^a (f+k^2)_{(3)} \, ,
\end{align}
proving \eqref{Q21}. \qed

\begin{lemma}
\label{lem7}
    The quantity 
    \be
    \zeta := (f+k^2)_{(3)}+\frac{1}{2} \kappa \langle k_{(1)},D\beta_{(0)}\rangle
    \ee
    vanishes on $\cH$.
\end{lemma}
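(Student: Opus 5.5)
The plan is to combine the two relations of lem.~\ref{lem6} algebraically. Write $A:=(f+k^2)_{(3)}$ and $B:=\tfrac12\kappa\langle k_{(1)},D\beta_{(0)}\rangle$, so that $\zeta=A+B$. Then the right side of \eqref{Q11} is $(A+2B)A=(\zeta+B)(\zeta-B)=\zeta^2-B^2$, and \eqref{Q11} becomes
\begin{equation*}
\zeta^2 = B^2+\tfrac14(\beta'_{(0)})^2\kappa^2\,\big\|\sL_{k_{(1)}}h_{(0)}\big\|^2 \quad\text{on } \cH .
\end{equation*}
Also, \eqref{Q21} says $\zeta\,k_{(1)}^a=c_{(4)}\phi^a$. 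All quantities involved ($\zeta$, $c_{(4)}$, $k_{(1)}$, $\beta_{(0)}$, $h_{(0)}$) are real analytic and Lie-derived by $K^a$ and $\phi^a$, by lem.~\ref{lem1} and (v). So it suffices to show that $\zeta$ vanishes on a nonempty open subset of the cut $H\cong\mathbb S^2$, and analyticity then gives $\zeta\equiv0$. I argue by contradiction: assume $\zeta\neq0$ on an open dense set $U\subset H$, which analyticity provides if $\zeta\not\equiv0$.

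On $U$, \eqref{Q21} gives $k_{(1)}^a=g\,\phi^a$ with $g:=c_{(4)}/\zeta$, which is $\phi$- and $K$-invariant. Two simplifications follow.
\begin{itemize}
\item Since $\sL_\phi\beta_{(0)}=\beta'_{(0)}\sL_\phi\Phi_{(0)}=0$, we get $B=\tfrac12\kappa g\,\phi^aD_a\beta_{(0)}=0$ on $U$.
\item Since $\phi^a$ is Killing for $h_{(0)}$, we get $\sL_{g\phi}h_{(0)ab}=2\phi_{(a}D_{b)}g$. Because $\phi^aD_ag=\sL_\phi g=0$, this gives $\|\sL_{k_{(1)}}h_{(0)}\|^2=2\|\phi\|^2\|Dg\|^2$.
\end{itemize}
The displayed identity therefore reduces on $U$ to
\begin{equation*}
\zeta^2=\tfrac12(\beta'_{(0)})^2\kappa^2\,\|\phi\|^2\,\|Dg\|^2 .
\end{equation*}

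Now I split by (v). In the case $c_{(4)}\equiv0$ we have $g=0$, so the right side vanishes, forcing $\zeta=0$ on $U$, a contradiction. Otherwise I pass to the uniformizing coordinates $(\vartheta,\varphi)$ on $H$ with $\xi=\cos\vartheta$, as in the proof of Theorem B. Then $g=g(\xi)$, and the relation above together with $c_{(4)}=g\zeta$ becomes a first-order ODE linking $\zeta$ and $g$. Here $\beta'_{(0)}\neq0$ by (vi) and $\kappa>0$.

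The key point is at the poles $\xi=\pm1$, where $\|\phi\|^2\to0$. Since $k_{(1)}$ and $g\phi$ are regular there, the right side tends to zero, so $\zeta\to0$ at both poles. Near a pole, $g=c_{(4)}/\zeta$ must then stay regular even though $\zeta\to0$. Combined with $|\zeta|\propto\|\phi\|\,|g'|$, I expect analytic regularity to force $c_{(4)}$ to vanish to the same order as $\zeta$, and ultimately $g'\equiv0$ on an open set. Then $\zeta\equiv0$ there, again a contradiction. The consistency check is that a constant $g$ (a rigid rotation $k_{(1)}=\omega\phi$) indeed gives $\zeta=0$.

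I expect this last step to be the main obstacle: controlling the pole behaviour of the coupled pair $(\zeta,g)$ from a single scalar relation. If it does not close directly, I would feed in additional information. Two candidates are the sign constraint $c_{(4)}\ge0$, which follows from $c>0$ or $c\equiv0$ in (v), and the hypothesis in (v) that $h$ has no Killing fields other than $\phi^a$. The latter would let me conclude that $\sL_{k_{(1)}}h_{(0)}=0$ forces $k_{(1)}\propto\phi$ with constant coefficient, from which $\zeta=0$ follows.
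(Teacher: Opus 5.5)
Your algebraic reduction is correct. The identity $\zeta^2=B^2+\tfrac14(\beta'_{(0)})^2\kappa^2\|\sL_{k_{(1)}}h_{(0)}\|^2$ exposes the structure cleanly, and on $U$ your relation is exactly the paper's \eqref{Qb1} written in the variable $g=1/\eta$. The case $c_{(4)}\equiv0$ is also fine.

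The gap is the remaining case, which you leave open, and the routes you propose for closing it cannot work. The relations \eqref{Q11} and \eqref{Q21} do not force $\zeta=0$, even together with analyticity, $\phi$-invariance, $\beta'_{(0)}\neq0$, $\kappa>0$ and $c_{(4)}\ge0$. In the coordinates \eqref{gittel}, take $k_{(1)}^a=g\phi^a$ with $g=2-\cos\vartheta$, $\zeta=\tfrac{1}{\sqrt2}|\beta'_{(0)}|\kappa\sin^2\vartheta$ and $c_{(4)}=g\zeta$. Then $B=0$ and $\tfrac14(\beta'_{(0)})^2\kappa^2\|\sL_{k_{(1)}}h_{(0)}\|^2=\tfrac12(\beta'_{(0)})^2\kappa^2\sin^4\vartheta=\zeta^2$. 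So both relations hold identically on $H\cong\mathbb S^2$ with analytic data, $c_{(4)}\ge0$, and $\zeta\not\equiv0$. Hence no regularity argument at the poles can give $g'\equiv0$, and the sign constraint does not help. Appealing to the absence of further Killing fields is circular: by your own identity, $\sL_{k_{(1)}}h_{(0)}=0$ is equivalent to $\zeta^2=B^2$, i.e.\ to $\zeta=0$ on $U$. Indeed, it is obtained in lem.~\ref{lemD9} only after $\zeta=0$ has been established.

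What you are missing is that the dichotomy in (v) and lem.~\ref{lem1} is pointwise. In the case $c\neq0$ the paper uses that $c_{(4)}$ is nonzero at \emph{every} point of $\cH$; it asserts that this passes from $c$ to its leading coefficient. This is precisely what the example above violates at the poles. With this input, your own observations close the argument at once: $g$ is regular and $\zeta\to0$ at $\xi=\pm1$, so $c_{(4)}=g\zeta$ vanishes at the poles, a contradiction. The paper takes the same step using $\eta=\zeta/c_{(4)}$. In the coordinates \eqref{gittel}, \eqref{Qb1} becomes $\tfrac12(\beta'_{(0)})^2\kappa^2c_{(4)}^{-2}\sin^2\vartheta\,(\eta')^2=\eta^6$, which forces $\eta(0)=\eta(\pi)=0$. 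Rolle's theorem between two consecutive zeros then gives a point with $\eta'=0$, which the equation turns into a further zero. Hence $\eta\equiv0$, and so $\zeta\equiv0$.
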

\emph{Proof:} By lem. \ref{lem1} we either have $c=0$ or $c \neq 0$
on all of $\cH$, so the same applies to $c_{(4)}$ since it is the leading coefficient of $c$ in the expansion in $\ell$.

Case (a) $c_{(4)} = 0$: 
Either $\zeta=0$ or $k_{(1)}^a=0$ by \eqref{Q21} on some open set, hence everywhere on $\cH$ by analyticity and rotation symmetry. In both cases there is nothing to show as $k_{(1)}^a=0$ implies $f_{(2)}=f_{(3)}=0$ via \eqref{Q11}. 

Case (b) $c_{(4)} \neq 0$: By analyticity, if $\zeta$ does not vanish
identically on $\hat H:=H/U(1)$, it can vanish at most at an isolated set of $\hat H$. On an open set where 
$\eta:=c_{(4)}^{-1} \zeta$ does not vanish, we have
\begin{align}
    \langle k_{(1)}, D\beta_{(0)}\rangle
    = \beta'_{(0)} \eta^{-1} \sL_{\phi} \Phi_{(0)} = 0 \, 
\end{align}
and therefore $\zeta=(f+k^2)_{(3)}$ on this open set, hence everywhere
on $\cH$ by analyticity. 
On any open set where $\eta$ does not vanish, \eqref{Q11} yields, using $\langle k_{(1)},D\beta_{(0)}\rangle=0=\sL_\phi \eta$:
\begin{align}
    \frac{1}{2} c_{(4)}^{-2}(\beta'_{(0)})^2\kappa^2  
    \|\phi\|^2\| D\eta\|^2 
    = \eta^6 \, .
    \label{Qb1}
\end{align}
By analyticity, this relation thereby extends to all of $\cH$. 

For further analysis, it is useful to appeal to the uniformization theorem for the closed Riemannian manifold $(H, h_{(0)ab})$. By (iv) $H$ is analytically diffeomorphic to a 2-sphere, and by (v) $\phi^a$ generates a $U(1)$ isometry group. Hence, locally we can find conformally spherical coordinates $(\vartheta,\varphi)$ such that $\phi^a=(\partial_\varphi)^a$, and in these coordinates, 
\begin{align}
\label{gittel}
   h_{(0)AB} \ud x^A \ud x^B  = \Psi[ \ud \vartheta^2 + (\sin \vartheta)^2 \ud\varphi^2]
\end{align}
such that $\Psi=\Psi(\vartheta)>0$ is an analytic function on $H$ (see e.g., \cite{Gittel} for further details). Since, in these coordinates, $\phi^a$ is
a coordinate vector field and an isometry of our solution, 
$\eta, \Phi_{(0)}$ and $c_{(4)}$ are independent of $\varphi$ and \eqref{Qb1} gives us
\begin{align}
    \frac{1}{2} (\beta'_{(0)})^2 \kappa^2 c_{(4)}^{-2} (\sin \vartheta)^2 (\eta')^2 = \eta^6 ,
    \label{diffeq}
\end{align}
where $'=\ud/\ud \vartheta$. If $\eta$ is not identically zero, its zeros are isolated by analyticity. Consider two consecutive zeros 
$\vartheta_1, \vartheta_2$ of $\eta$ (there are at least two such 
zeros at $0,\pi$ by \eqref{diffeq}). By the mean value theorem, 
there exists a $\vartheta_2 \in (\vartheta_1, \vartheta_2)$ such that $\eta'(\vartheta_2) = 0$. By \eqref{diffeq}, $\vartheta_2$ therefore is another zero strictly between the two zeros $\vartheta_1, \vartheta_2$, which is a contradiction. Thus, $\eta$ is identically zero, as is thereby $\zeta$. \qed

\begin{lemma}
On $\cH$ we have $k_{(1)}^a = \omega_{(1)} \phi^a$ for some 
constant $\omega_{(1)}$, and $(f+k^2)_{(3)}=c_{(4)}=0$.
\label{lemD9}
\end{lemma}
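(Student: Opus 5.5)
The plan is to feed the conclusion of lem. \ref{lem7} back into the two relations \eqref{Q11} and \eqref{Q21} of lem. \ref{lem6}, and then to close with a sign argument.

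First, by lem. \ref{lem7} the bracket multiplying $k_{(1)}^a$ in \eqref{Q21} is exactly $\zeta$, which vanishes on $\cH$. So \eqref{Q21} reduces to $c_{(4)}\phi^a=0$ on $\cH$. Since $\phi^a$ is a non-trivial vector field on $H$ (it generates a $U(1)$ action), it is non-zero on an open set. Hence $c_{(4)}=0$ there, and by the dichotomy of lem. \ref{lem1} (inherited by the leading coefficient $c_{(4)}$) it vanishes on all of $\cH$.

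Next, I set $A:=\kappa\langle k_{(1)},D\beta_{(0)}\rangle$. Then $\zeta=0$ says $(f+k^2)_{(3)}=-\tfrac12 A$. Substituting into the right side of \eqref{Q11} gives
\begin{equation*}
\Big[(f+k^2)_{(3)}+A\Big](f+k^2)_{(3)}=\tfrac12 A\cdot\big(-\tfrac12 A\big)=-\tfrac14 A^2\le 0 .
\end{equation*}
The left side of \eqref{Q11} is $\tfrac14(\beta'_{(0)})^2\kappa^2\|\sL_{k_{(1)}}h_{(0)}\|^2\ge 0$. This holds because $h_{(0)ab}$ is Riemannian on the cut, so the induced tensor norm $\|\cdot\|$ is positive definite. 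Therefore both sides vanish. From the right side we get $A=0$, hence $(f+k^2)_{(3)}=0$. From the left side, using assumption (vi) ($\beta'_{(0)}=\beta'(\Phi_{(0)})\neq 0$) and $\kappa>0$ from (ii), we get $\sL_{k_{(1)}}h_{(0)ab}=0$ on $\cH$.

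Finally, I interpret this as a Killing equation on $(H,h_{(0)ab})$. Here $k_{(1)}^a=q_{(0)}^{ab}k_{(1)b}$ is purely angular, i.e. tangent to the cuts. It is also Lie-derived by $K^a$, so it defines the same analytic vector field on each cut $H(v)$. Thus $k_{(1)}^a$ is a Killing vector field of $(H,h_{(0)})$. By assumption (v), $(H,h_{ab})$ has no Killing fields other than $\phi^a$, which I take to mean that the Killing algebra is spanned by $\phi^a$. This gives $k_{(1)}^a=\omega_{(1)}\phi^a$ with $\omega_{(1)}$ a constant on $H$, and hence on $\cH$ by $K^a$-invariance.

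The main point requiring care is this last identification. One must check that the quantity $\sL_{k}h$ appearing in lem. \ref{lemma5} and \eqref{Q11} is indeed the intrinsic Lie derivative on the cut, i.e. $2D_{(a}k_{b)}$ with $D$ the Levi-Civita connection of $h_{(0)}$. One must also check that assumption (v), stated for $h_{ab}$, applies at order $\ell^0$ to $h_{(0)ab}$. Should (v) only be available for $\ell\neq0$, I would instead argue directly. On $\mathbb S^2$ with the $U(1)$-invariant metric \eqref{gittel}, any Killing field commuting with $\phi^a$ is a constant multiple of $\phi^a$ unless the metric is round. In the round case, $\phi$-invariance of $k_{(1)}$, which follows from $\sL_\phi k_{(1)a}=0$, again forces proportionality to $\phi^a$ with a constant factor.
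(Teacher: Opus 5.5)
Your proof is correct and follows the paper's argument. Setting $\zeta=0$ in \eqref{Q21} gives $c_{(4)}=0$; substituting $\zeta=0$ into \eqref{Q11} equates a nonnegative quantity with $-\tfrac14 A^2$ (the paper writes $A$ as $\kappa\beta'_{(0)}\sL_{k_{(1)}}\Phi_{(0)}$), which forces $\sL_{k_{(1)}}h_{(0)}=0$ and $(f+k^2)_{(3)}=0$, and assumption (v) then yields $k_{(1)}^a=\omega_{(1)}\phi^a$; your extra round-sphere fallback is a harmless robustness check that the paper does not need.
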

\emph{Proof:} By the previous lemma, $\zeta=0$ and therefore $c_{(4)}=0$ on $\cH$ by \eqref{Q21}. 
We now substitute this into \eqref{Q11} to obtain\footnote{Here we have also cancelled $\kappa^2$ on both sides, which is legal since
$\kappa>0$ by (ii).}:
\begin{align}
    &\frac{1}{2} (\beta'_{(0)})^2  \left\| \sL_{k_{(1)}} h_{(0)} \right\|^2 = -  (\beta'_{(0)})^2  \left( \sL_{k_{(1)}} \Phi_{(0)} \right)^2 \, .
\end{align}
Since both sides have different signs, this equation can only be fulfilled if either $\beta_{(0)}' (=\beta'(\Phi_{(0)})) =0$, or $\sL_{k_{(1)}}h_{(0)ab} =\sL_{k_{(1)}}\Phi_{(0)}=0$ on $\cH$ or both.
The first case is excluded because we are assuming in (vi) that $\beta'$ is nowhere zero. Consequently, $k_{(1)}^a$ is a Killing vector field of $h_{(0)ab}$, and 
$(f+k^2)_{(3)}=0$ 
on $\cH$. 
Since $h_{(0)ab}$ does not have a second Killing field by (v), there exists a constant $\omega_{(1)}$ such that $k_{(1)}^a=\omega_{(1)}\phi^a$ on $\cH$. \qed

Consider now the flow $F_t$ on $H$ generated by the Killing field $-\ell k_{(1)}^a = -\ell \omega_{(1)}\phi^a$, i.e.
$x^A(t) = F_t^A(x^B)$ solves the ODE 
\be
\dot x^A(t) = -\ell k_{(1)}^A[x^B(t)], \quad x^A(0)=x^A.
\ee
Then we define the following diffeomorphism $\psi$ in an open neighborhood of $\cH$ where our gGNCs are well-defined:
\begin{equation}
    \psi(u,v,x^A) = (u,v,F_v^A(x^B)).
\end{equation}
This $\ell$ dependent analytic 
diffeomorphism maps $\cH$ to itself by construction.
We furthermore define $\Phi'= \psi^*\Phi, g_{ab}' = \psi^* g_{ab}, K^{\prime a} = \psi_* K^a$.

\begin{lemma}
\label{lem8}
    We have $K'^a=K^a-\omega_{(1)}\ell\phi^a$, and $g_{ab}'$
    takes Gaussian null form \eqref{gGNC1} with new $f',k_a',h_{ab}'$ that are analytic functions Lie-derived by 
    $K^a, \phi^a$ such that $f'_{(j)}=0$ for all $j\leq 3$, $k_{(j)a}'=0$ for all $j \leq 1$ 
    on $\cH$.
\end{lemma}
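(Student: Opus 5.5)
The plan is to compute the pull-back $\psi^*g_{ab}$ explicitly in the gGNC $(u,v,x^A)$ and read off $f',k'_a,h'_{ab}$. First I would show that the flow $F_t$ is an isometry of $h_{AB}$ for every $u,v$. Indeed $\phi^a$ Lie-derives $(\Phi,g_{ab})$ and commutes with $K^a$. It is tangent to $H$, and the gGNC are constructed covariantly from $K^a$ and $H$. Hence $\phi^a=(\partial_\varphi)^a$ is a purely angular coordinate vector field, and $\sL_\phi h_{ab}=\sL_\phi k_a=\sL_\phi f=0$ throughout the gGNC neighbourhood. Since $F_t$ is the flow of the constant multiple $-\ell\omega_{(1)}\phi^a$, it is simply $\varphi\mapsto\varphi-\ell\omega_{(1)}t$. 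It is analytic in $(\ell,x)$, commutes with the flows of $K^a$ and $\phi^a$, and preserves $h_{AB},k_A,f$.

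Next I would substitute $x^A\mapsto F^A_v(x^B)$ into \eqref{gGNC1}, using
\begin{equation}
\ud(F_v^A)=\frac{\partial F^A_v}{\partial x^B}\ud x^B-\ell\omega_{(1)}\phi^A\,\ud v .
\end{equation}
The components $g_{uu}=g_{uA}=0$ and $g_{uv}=1$ are untouched, so $\psi^*g$ is again of the form \eqref{gGNC1}. In particular $u$ remains an affine parameter along the null generators $l^a$, and $\psi$ maps $\cH=\{u=0\}$ to itself. Using that $F_v$ preserves $h,k,f$, I expect to obtain
\begin{align}
h'_{AB}&=h_{AB},\qquad k'_A=k_A-\ell\omega_{(1)}h_{AB}\phi^B,\\
f'&=f+2\ell\omega_{(1)}k_A\phi^A-\ell^2\omega_{(1)}^2\|\phi\|_h^2 .
\end{align}
These are all independent of $v$ and $\varphi$, hence Lie-derived by $K^a$ and $\phi^a$, and they are analytic. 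Likewise $K'^a=\psi_*(\partial_v)^a=(\partial_v)^a-\ell\omega_{(1)}\phi^a$, which is the first claim; this only requires a sign check of the direction of the push-forward.

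For the vanishing statements on $\cH$, $k'_{(0)a}=k_{(0)a}=0$. Also $k'_{(1)a}=k_{(1)a}-\omega_{(1)}h_{(0)ab}\phi^b=0$ by lem.~\ref{lemD9}. For $f'$ the cleanest route is the identity
\begin{equation}
f'+\|k'\|^2_{h'}=f+\|k\|_h^2 ,
\end{equation}
which follows by expanding the squares; equivalently, $g^{uu}$ is invariant because $u$ is unchanged. Since $k'_a=O(\ell^2)$ on $\cH$, we have $\|k'\|^2=O(\ell^4)$ there, so $f'_{(j)}=(f+k^2)_{(j)}$ for $j\le 3$. The right-hand side vanishes for $j=0,1,2$ by the earlier order-by-order analysis of \eqref{Q1}, and for $j=3$ by lem.~\ref{lemD9}. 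The only delicate point I expect is confirming that $\phi^a$ is globally a purely angular coordinate field in the gGNC, i.e.\ that it commutes with $l^a$. This should follow since $\phi$ is an isometry preserving $K^a$ and $H$, and hence preserving the covariant construction of $l^a$ and $u$.
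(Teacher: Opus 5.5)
Your proposal is correct and is essentially the paper's argument: an explicit pull-back of \eqref{gGNC1} under $\psi$, combined with lem.~\ref{lemD9}. Your formulas for $h'_{AB}$, $k'_A$, $f'$ and $K'^a$ are right, and routing $f'$ through the invariance of $g^{uu}=f+\|k\|_h^2$ is in fact the correct bookkeeping: the paper's one-line proof cites $f_{(2)}=0$, but before the redefinition only $(f+k^2)_{(2)}=0$ is established, i.e.\ $f_{(2)}=-\omega_{(1)}^2\|\phi\|^2_{h_{(0)}}$ on $\cH$, and this is exactly what your formula $f'_{(2)}=f_{(2)}+2\omega_{(1)}k_{(1)a}\phi^a-\omega_{(1)}^2\|\phi\|^2_{h_{(0)}}$ needs to give $f'_{(2)}=0$.
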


\emph{Proof:} The proof is by an elementary computation 
for pull-backs, using lem. \ref{lemD9} and the already known facts 
$f_{(0)} = f_{(1)} = f_{(2)}=k_{(0)a} = 0$ on $\cH$. \qed

We now rerun the entire argument with $(\Phi',g'_{ab})$ and $K^a$, or equivalently, with $(\Phi,g_{ab})$ and $K^{\prime a}$. Taking the second viewpoint, we use our new $K^{\prime a}$ to construct our gGNCs \eqref{gGNC1}, and we identify our spacetimes $\cM(\ell)$ by identifying points if their new gGNCs are the same. The tensors associated with the new gGNCs now satisfy $f_{(j)}'=0$ for all $j\leq 3$, $k_{(j)a}'=0$ for all $j \leq 1$. 

From \eqref{Q2} in order $\ell^5$, we obtain additionally $c_{(5)}=0$ on $\cH$. 
This concludes the base case. We rename $K^{\prime a}$ into $K^a$ to simplify the notation. \qed

\subsubsection{Induction Step: $n\rightarrow n+1$ and Proof of Theorem A}
\label{sec:indstep}

\begin{lemma}
\label{lem9}
    There exist constants $\omega_{(j)}, j\ge 1$
    such that the following is true:
    If $(u,v,x^A)$ denote the gGNCs associated to the Killing vector field $K^a - \sum_{j\geq 1} \omega_{(j)} \ell^j \phi^a$ with constants $\omega_{(j)}$ and corresponding quantities $f,k_a,c$, then $f_{(j)},k_{(j)a},c_{(j)}=0$ on $\cH$ for all $j$. 
\end{lemma}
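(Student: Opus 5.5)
The proof is an induction on $n\ge 0$, with the base case $n=0$ supplied by Lemmas \ref{lem6}--\ref{lem8} and the subsequent rerun giving $c_{(5)}=0$. The inductive hypothesis $\mathrm{(H}_n)$ states that, in the gGNCs built from $K^a-\sum_{j\le n+1}\omega_{(j)}\ell^j\phi^a$, the following hold on $\cH$: $f_{(j)}=0$ for $j\le n+3$, $k_{(j)a}=0$ for $j\le n+1$, and $c_{(j)}=0$ for $j\le n+5$. From this we must produce a constant $\omega_{(n+2)}$ such that $\mathrm{(H}_{n+1})$ holds after replacing $K^a\to K^a-\ell^{n+2}\omega_{(n+2)}\phi^a$. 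The constants $\omega_{(j)}$ are fixed once chosen, so the lemma's final statement "for all $j$" follows by taking the union over $n$. Convergence of $\sum_j\omega_{(j)}\ell^j$ is addressed at the end.

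For the induction step, I would first extract from \eqref{Q1} and \eqref{Q2} the analogue of Lemma \ref{lem6} at the shifted orders. Specifically, \eqref{Q1} at order $\ell^{2n+6}$ and \eqref{Q2} at order $\ell^{n+4}$ should give
\begin{align*}
\tfrac14(\beta'_{(0)})^2\kappa^2\|\sL_{k_{(n+2)}}h_{(0)}\|^2 &= \zeta_n\,(f+k^2)_{(n+4)}+\tfrac12\kappa\langle k_{(n+2)},D\beta_{(0)}\rangle(f+k^2)_{(n+4)},\\
\zeta_n\,k^a_{(n+2)} &= c_{(n+6)}\phi^a,
\end{align*}
with $\zeta_n=(f+k^2)_{(n+4)}+\tfrac12\kappa\langle k_{(n+2)},D\beta_{(0)}\rangle$. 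The essential bookkeeping is in the big-$O$ notation. Under $\mathrm{(H}_n)$, the lowest nonvanishing coefficient of any $O(k,f)$ quantity on $\cH$ is of order $\ell^{n+2}$, that of any $O(k^2,kf,f^2)$ quantity is of order at least $\ell^{2n+4}$, and $O(k^3)$ is of still higher order. Hence all error terms carrying explicit powers $\ell^2,\ell^4,\ell^6$ drop out at these orders. Also, $(f+k^2)_{(m)}=f_{(m)}$ for $m\le 2n+3$ since $k^2$ starts at order $2n+4\ge n+4$; more precisely, the orders of $(f+k^2)$ below $n+4$ vanish by earlier steps of the induction, and the lowest-order products on the left of \eqref{Q1} are exactly the ones written above. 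The point I would check carefully is that the $\ell^2\langle Df,D\beta\rangle$ term contributes nothing new: $D f_{(j)}=0$ for $j\le n+3$, and its $(n+4)$-th coefficient appears only multiplied by $\ell^2$, pushing it beyond the relevant order. Likewise $\sL_lf_{(0)}=2\kappa$ multiplies $k_{(n+2)}$.

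With these two identities, Lemmas \ref{lem7} and \ref{lemD9} go through verbatim with $k_{(1)}\to k_{(n+2)}$, $(f+k^2)_{(3)}\to(f+k^2)_{(n+4)}$ and $c_{(4)}\to c_{(n+6)}$. The case split on whether $c_{(n+6)}$ vanishes uses Lemma \ref{lem1}, applied to the leading nonzero coefficient of $c$. The ODE \eqref{diffeq} on the uniformized sphere forces $\zeta_n=0$, and the sign argument together with (vi) and the absence of extra Killing fields in (v) yields $k_{(n+2)}^a=\omega_{(n+2)}\phi^a$ with $\omega_{(n+2)}$ constant, as well as $(f+k^2)_{(n+4)}=c_{(n+6)}=0$. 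I would then apply the diffeomorphism $\psi$ of Lemma \ref{lem8}, generated by $-\ell^{n+2}\omega_{(n+2)}\phi^a$ along $v$. Since it changes $k_a$ only at order $\ell^{n+2}$ and beyond, and changes $f$ only at order $\ell^{2n+4}$ through $k^2$-type terms, it leaves $\mathrm{(H}_n)$ intact. It also kills $k_{(n+2)a}$, whence $f_{(n+4)}=0$. Finally, \eqref{Q2} at order $\ell^{n+7}$, whose left side is now a product of coefficients that vanish, gives $c_{(n+7)}=0$, establishing $\mathrm{(H}_{n+1})$.

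I expect the main obstacle to be the order-counting. One must verify that the $\ell^4O(k^3,kf,f^2)$ and $\ell^6O(k^2,kf,f^2)$ remainders, and the cross terms in $\|\tfrac12\sL_kh\sL_lf+k\otimes D\sL_lf\|^2$, really do not enter at orders $\ell^{2n+6}$ and $\ell^{n+4}$. In particular, one must confirm that the counting sequence (which orders of \eqref{Q1} and \eqref{Q2} one reads off) is consistent with the base case. I would do this by assigning weight $n+2$ to each factor of $k$ or $f$ and tracking the minimal total order. For convergence of $\sum_j\omega_{(j)}\ell^j$, I would only indicate an argument: each $\omega_{(j)}$ is a coefficient of the analytic quantity $\langle k,\phi\rangle/\|\phi\|^2$ restricted to $\cH$, so the series should be majorized by the analytic data via Cauchy estimates.
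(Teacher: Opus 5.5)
Your overall strategy is the paper's: the same induction, the same reduction to Lemmas \ref{lem7} and \ref{lemD9}, and the same rotation as in Lemma \ref{lem8}. However, the order bookkeeping in the induction step is wrong, and that bookkeeping is essentially the whole content of the step. As written, the step does not produce the identities you display.

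Under $(\mathrm{H}_n)$, $k$ starts at order $\ell^{n+2}$ and $f$ at order $\ell^{n+4}$. So the two brackets $A,B$ on the left of \eqref{Q1} satisfy $A_{(j)}=B_{(j)}=0$ for $j\le n+3$. Then \eqref{Q1} at order $\ell^{2n+6}$ is just $0=0$: every product $A_{(i)}B_{(2n+6-i)}$ contains a vanishing factor, and $\ell^4\|\cdot\|_h^2$ starts at $\ell^{2n+8}$. Similarly, \eqref{Q2} at order $\ell^{n+4}$ only yields $c_{(n+4)}=0$, because $(A+B)(k+\ell^2O(k,f))$ starts at order $(n+4)+(n+2)=2n+6$. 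Your first identity is correct in form but lives at order $\ell^{2n+8}$. Your second lives at order $\ell^{2n+6}$, and its right-hand side is $c_{(2n+6)}\phi^a$, not $c_{(n+6)}\phi^a$; these are the paper's \eqref{Q1n} and \eqref{Q2n}. Even for $n=0$, the orders you name are those of the base case, not of the step.

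This matters for the rest of the argument. The dichotomy from Lemma \ref{lem1} must be applied to the leading coefficient of $c$. The same counting shows $c_{(j)}=0$ automatically for all $j\le 2n+5$, which is why the paper's hypothesis is $c_{(j)}=0$ for $j\le 2n+5$. It also shows the leading coefficient is $c_{(2n+6)}$. For $n\ge 1$ your $c_{(n+6)}$ therefore vanishes identically. Taken literally, your identity would then assert $\zeta_n k^a_{(n+2)}=0$. That does not follow from \eqref{Q2}, and it would let you skip case (b) of Lemma \ref{lem7}, i.e.\ the ODE argument, illegitimately. Likewise, the closing step should read off \eqref{Q2} at $\ell^{2n+7}$, not $\ell^{n+7}$.

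Your proposed weighting also has to be refined. Giving $f$ weight $n+2$ does not exclude $\ell^4O(kf,f^2)$ at the relevant order $\ell^{2n+8}$. You need that $f$ starts at $\ell^{n+4}$; then all remainders in \eqref{Q1} start at $\ell^{2n+10}$. For $n=0$, your inference $f_{(4)}=0$ after the readjustment also needs one more observation: there $f$ itself shifts at order $\ell^{2n+4}=\ell^{n+4}$, so you must use that $f+k^2$ on $\cH$ is unchanged by the readjustment.

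With these corrections your proof coincides with the paper's. Your convergence remark, that the $\omega_{(j)}$ are the Taylor coefficients of $\langle k,\phi\rangle_h/\|\phi\|_h^2$ on $\cH$ in the original gGNCs, is a sensible alternative to the paper's uniqueness argument. It is not needed for the lemma itself.
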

\emph{Proof (of Theorem A):} A priori, our construction of the constants $\omega_{(j)}, j\ge 1$ in lem. \ref{lem9} does not guarantee that the series for $K^a$ converges. However, by assumption (iv), $K^a$, regarded as a formal power series, is by construction a Killing vector field such that $K_a = g_{ab}K^b$ is characteristic for $\cH$ in the formal sense (to all orders in $\ell$). Since $\chi^a$ is also such a vector with \emph{convergent} expansion in $\ell$, and since there can at most be one such vector, $K^a = \chi^a$ so the series is convergent a posteriori.
Since $K^a$ is furthermore null, so is $\chi^a$, 
thus $\cH$ is a Killing horizon. \qed

\medskip
\emph{Remark:} Note that this reasoning also 
gives $\omega_{(j)} = -\Omega_{(j)}$ $\forall j\ge 1$ a posteriori. 

\medskip
\emph{Proof (of lem. \ref{lem9}):} Our induction hypothesis is the following: For $n\in \mathbb N_0$ assume that
\begin{subequations}
    \begin{align}
        f_{(j)}|_{\cH} &=0 \text{\emph{ for all }} j\leq n+3 \, , \\
        k_{a,(j)}|_{\cH} &= 0 \text{\emph{ for all }} j\leq n+1 \, ,\\
        c_{(j)}|_{\cH} &=0 \text{\emph{ for all }} j \leq 2n+5 \, , 
    \end{align}
\end{subequations}
in gGNCs defined w.r.t. $K^a- \sum_{j=1}^{n+1} \omega_{(j)} \ell^j \phi^a$. We start by defining the following quantities:
\begin{widetext}
\begin{subequations}
\begin{align}
    A_{(j)} &= \left[f+k^2+\ell^2 \left( \frac{1}{2} \langle -k\sL_lf + Df, D\beta  \rangle_h  +f\; \text{\emph{tr}}_hDD\beta + \frac{1}{2} f\sL_l f \sL_l\beta   + O(k^2,kf,f^2) \right) +\ell^4 O(k,f) \right]_{(j)} \, , \\
    B_{(j)} &= \left[ f+k^2+ \ell^2 \left( (f+k^2) \alpha X + O(k^2,kf,f^2) \right) \vphantom{\frac{1}{2}} \right]_{(j)} \, , \\
    C_{(j)} &= \left[\frac{1}{4} \ell^4 (\beta')^2 \left\| \frac{1}{2} \sL_k h \sL_lf + k \otimes D\sL_lf \right\|_h^2 \right]_{(j)} \, .
\end{align}
\end{subequations}
\end{widetext}
In the base case $n=0$ which we have established in sec. \ref{sec:indstart}, we obtained \eqref{Q11} from \eqref{Q1} in order $\ell^6$ and \eqref{Q21} from \eqref{Q2} in order $\ell^4$, so one might expect to obtain the desired result for the induction step $n\rightarrow n+1$ by considering \eqref{Q1} in order $\ell^{2(n+1)+6}$ and \eqref{Q2} in order $\ell^{2(n+1)+4}$. We have
\begin{align}
    (AB)_{(2n+8)} &= A_{(0)}B_{(2n+8)}+ A_{(1)}B_{(2n+7)} + \dots \notag \\ &+A_{(n+4)}B_{(n+4)} + \dots + A_{(2n+8)}B_{(0)} \, .
\end{align}
We claim that the important term is $A_{(n+4)}B_{(n+4)}$, so we need to show that all other terms vanish. Note that these terms always contain one factor of $A_{(j)}$ or $B_{(j)}$ for $j< n+4$. Let $j\leq n+3$, then
\begin{align}
    (f+k^2)_{(j)} &= f_{(j)} + 2k^a_{(0)}k_{(j)a} \notag \\
    &+ \dots + 2k^a_{(\lfloor j/2 \rfloor)}k_{(\lceil j/2 \rceil)a} \, 
\end{align}
if $j$ is odd (otherwise exchange $2k^a_{(\lfloor j/2 \rfloor)}k_{(\lceil j/2 \rceil)a}^{}$ with $k^a_{(j/2)}k_{(j/2)a}^{}$). Noting that
\begin{align}
    \left\lfloor \frac{j}{2} \right\rfloor &\leq \frac{j}{2} \leq \frac{n}{2} + \frac{3}{2} \leq n+1 \text{ for all } n\geq 1 \, , \\
    j &\leq n+3 \, ,
\end{align}
and using the induction hypothesis, we can deduce $(f+k^2)_{(j)} =0$ for all $j\leq n+3,$ $n\geq 1$ (recalling the base case was already shown in sec. \ref{sec:indstart}). $O(f)$- terms with a prefactor of $\ell^2$ always have order $\leq n+3$ in $f$, hence they do vanish, as well as $k^2$- and $kf$ terms as already seen. In $(\langle k \sL_l f,D\beta\rangle_h)_{(j-2)}$, $k^a$ always appears in order $\leq n+1$, so this term gives also zero. The $\ell^4$ term and the terms in $B_{(j)}$ for $j\leq n+3$ vanish for the same reason. Hence we must have $(AB)_{(2n+8)}=A_{(n+4)}B_{(n+4)}$. Within $A_{(n+4)}$, the $\ell^2$ terms at least linear in $f$ are zero since they appear in order $\leq n+2<n+3$. For $(\langle k \sL_l f,D\beta\rangle_h)_{(n+2)}$, assume $\sL_lf,$ $D\beta$ or $h$ have order $\geq 1$, then $k$ has order $\leq n+2-1=n+1$ and therefore vanishes by assumption. Thus all quantities except $k$ must have order $0$:
\begin{align}
    (\langle k \sL_l f,D\beta\rangle_h)_{(n+2)} = \langle k_{(n+2)} \sL_lf_{(0)}, D\beta_{(0)} \rangle \, .
\end{align}
Moreover, we have already seen that the term $(f+k^2)_{(n+2)}$ vanishes, so that $B_{(n+4)}=(f+k^2)_{(n+4)}$. For the left side, assume that the order of $\beta'$ is $\geq 1$, then $\|\dots\|_h^2$ has order $\leq 2n+2$. Thus each term in $(\|\dots\|_h^2)_{(j)}$ with $j\leq 2n+2$ contains at least one factor $(\|\dots\|_h)_{(j')}$ with $j'\leq n+1$, so $f$ and $k$ always appear in order $\leq n+1$ and those terms vanish. Hence, the only non-zero term in $C_{(2n+8)}$ contains $\beta'$ and, as can be verified similarly, $h_{ab}$ and $f$ in order $0$ and thus $k_a$ in order $n+2$. The only non-trivial term left is $O(k^3)_{(2n+4)}$, which contains terms with at least one $k$-factor of order $\leq \frac{2}{3}n+\frac{4}{3}\leq n+1$ for all $n\geq 1$. Thus $O(k^3)_{(2n+4)}=0$ and, using again $D\sL_lf_{(0)}=0$ on the horizon, we obtain from \eqref{Q1}:
\begin{align}
    &\frac{1}{4}(\beta'_{(0)})^2 \kappa^2 \left\| \sL_{k_{(n+2)}} h_{(0)} \right\|^2 \notag \\ 
    &= (f+k^2)_{(n+4)} \left[ (f+k^2)_{(n+4)} +  \kappa \langle k_{(n+2)}, D\beta_{(0)} \rangle \right] \, .
    \label{Q1n}
\end{align}
Now consider the left side of \eqref{Q2} in order $\ell^{2(n+1)+4}$ and define $E^a=k^a+\ell^2O(k,f)$. First note that $E^a_{(j)}=0$ for all $j\leq n+1$ and $A_{(j)}=0$ for all $j\leq n+3$ from before. Therefore we have
\begin{align}
    (AE^a)_{(2n+6)} &= A_{(0)}E^a_{(2n+6)} +\dots + A_{(n+3)}E^a_{(n+3)} \notag \\ 
    &+ A_{(n+4)}E^a_{(n+2)} + A_{(n+5)}E^a_{(n+1)} \notag \\
    &+ \dots + A_{(2n+6)} E^a_{(0)} \notag \\ 
    &= A_{(n+4)} E^a_{(n+2)}
\end{align}
and, similarly, $(D^aB)_{(2n+6)} = D^a_{(n+2)}B_{(n+4)}$. Also note that $O(k^2,kf,f^2)_{(2n+2)}$ contains terms where $k$ and $f$ appear in order $\leq n+1$. By assumption, those terms vanish. Thus, \eqref{Q2} yields, using the expressions for $A_{(n+4)}$ and $B_{(n+4)}$ from before:
\begin{align}
    &\left[(f+k^2)_{(n+4)} +\frac{1}{2} \kappa \langle k_{(n+2)}, D\beta_{(0)} \rangle \right] k^a_{(n+2)} \notag \\ 
    &= c_{(2n+6)} \phi^a \, .
    \label{Q2n}
\end{align}
\eqref{Q1n} and \eqref{Q2n} have the same form as \eqref{Q11} and \eqref{Q21}, so we can apply the same procedure as before in sec. \ref{sec:indstart} while exchanging $(f+k^2)_{(3)} \leftrightarrow (f+k^2)_{(n+4)},$ $k^a_{(1)} \leftrightarrow k^a_{(n+2)}$ and $c_{(4)} \leftrightarrow c_{(2n+6)}$ as well as $\omega_{(1)}\ell \leftrightarrow \omega_{(n+2)} \ell^{n+2}$ within the $n=0$ case. From this procedure we can directly deduce $f_{(j)}=0$ for all $j\leq n+4$, $k_{(j)a}=0$ for all $j\leq n+2$ and $c_{(j)}=0$ for all $j\leq 2n+7$ on the horizon for gGNCs defined w.r.t. the Killing vector field $K^a-\sum_{j=1}^{n+2} \omega_{(j)} \ell^j \phi^a$. \qed

}

\section{$t$-$\varphi$ reflection symmetry $\iota$}
\label{app:iota}

Here we elucidate the reflection symmetry $\iota$
on $\cH$ referred to in assumption (iii') of sec. \ref{sec:BAss}.

We may introduce coordinates $(\varphi,\vartheta)$ on the cut $H$ as in (iii') such that $(\partial_\varphi)^a = \phi^a$. 
By assumption (iii'), we have $\iota^* \phi^a = -\phi^a, \iota^* h_{ab} = h_{ab}, \iota^* p^a{}_b = p^a{}_b, \iota^* k_a = -k_a, \iota^* f = f$ on $H$. In view of $\iota^* \phi^a = -\phi^a$, either $\iota(\varphi,\vartheta) = (-\varphi,\vartheta)$, or $\iota(\varphi,\vartheta) = (-\varphi,\pi-\vartheta)$. The last possibility 
 may be excluded as follows by considering the condition on the complex structure on $H$ in (iii'). 

The almost complex structure on $H\cong \mathbb{S}^2$ is $J^a{}_b = q^{ac} \epsilon_{cb}$, where $\epsilon_{ab} = 2\Psi \sin \vartheta (\ud \vartheta)_{[a} (\ud \varphi)_{b]}$ and $q^{ab} = \Psi^{-1}[(\sin \theta)^{-1}(\partial_\varphi)^a (\partial_\varphi)^b + (\partial_\vartheta)^a (\partial_\vartheta)^b]$ in our coordinates, see 
\eqref{gittel}. From these formulas, we immediately see that the case 
$\iota(\varphi,\vartheta) = (-\varphi,\pi-\vartheta)$ is excluded by \eqref{ioch} in  (iii'). 

By Lie-transporting the coordinates $(\varphi,\vartheta)$ along $K^a = (\partial_v)^a$, 
we get a coordinate system $(v,\varphi,\vartheta)$
on $\cH$, and in this coordinate system, $\iota$
is given by 
\be
\label{iotaform}
\iota(v,\varphi,\vartheta) = (-v,-\varphi,\vartheta).
\ee
Since $v$ is a coordinate associated with an asymtptocially timelike Killing vector field, 
we may thus think of $\iota$ as a "$t$-$\varphi$
reflection symmetry". Note that this symmetry only applies to the \emph{pull-back} of the metric on $\cH$, and is different from the symmetry $(t,\varphi) \to (-t,-\varphi)$ in Boyer-Lindquist coordinates in Kerr, which maps between the future and past horizons.

By exploiting \eqref{iotaform}, one may easily see \footnote{This is the only place where $\iota$ is used in this part of the proof.} 
\be
k^a \equiv q^{ab} k_b = F\phi^a, 
\ee
on $\cH$, for some $F$ compatible with all symmetries
$\phi^a, K^a, \iota$. Indeed, these statements may be demonstrated in a pedestrian manner using the coordinates $(v,\varphi,\vartheta)$: Using that $\iota^* k_a=-k_a$, we see that $k_a$ may not have a $\vartheta$-component, and using $\iota^*q^{ab}=q^{ab}$, we likewise see that $q^{ab}$ may not have a mixed $\vartheta\varphi$-component, and no component may depend on $(v,\varphi)$.

\end{document}